\definecolor{ForestGreen}{rgb}{0.1333,0.5451,0.1333}
\definecolor{DarkRed}{rgb}{0.65,0,0}
\definecolor{Red}{rgb}{1,0,0}
\g@addto@macro{\maketitle}{\@thanks}
\newtheorem{thm}{Theorem}[section]
\newtheorem{prop}[thm]{Proposition}
\newtheorem{fact}[thm]{Fact}
\newtheorem{lem}[thm]{Lemma}
\newtheorem{Def}[thm]{Definition}
\newtheorem{obs}[thm]{Observation}
\newtheorem{claim}[thm]{Claim}
\newtheorem{remark}[thm]{Remark}
\newcommand{\E}{\mathbb{E}}%
\newcommand{\eps}{\epsilon}%
\newcommand{\p}{\textsc{P}}%
\newcommand{\poly}{\mathrm{poly}}
\renewcommand{\algorithmiccomment}[1]{\bgroup\hfill$\rhd$~#1\egroup}
\newcounter{note}[section]
\newcommand{\calA}{\mathcal{A}}
\newcommand{\calC}{\mathcal{C}}
\newcommand{\calP}{\mathcal{P}}
\newcommand{\Otil}{\tilde{O}}
\newcommand{\Uni}{\textrm{Uni}}
\newenvironment{wrapper}[1]
{
%	\smallskip
	\begin{center}
		\begin{minipage}{\linewidth}
			\begin{mdframed}[hidealllines=true, backgroundcolor=gray!20, leftmargin=0cm,innerleftmargin=0.4cm,innerrightmargin=0.4cm,innertopmargin=0.4cm,innerbottommargin=0.4cm,roundcorner=10pt]
				#1}
			{\end{mdframed}
		\end{minipage}
	\end{center}
%	\smallskip
} 
\renewcommand{\paragraph}[1]{\medskip\noindent\textbf{#1}}
\newcommand{\ALGtikzmarkcolor}{black}% customise this, if you want
\newcommand{\ALGtikzmarkextraindent}{4pt}% customise this, if you want
\newcommand{\ALGtikzmarkverticaloffsetstart}{-.5ex}% customise this, if you want
\newcommand{\ALGtikzmarkverticaloffsetend}{-.5ex}% customise this, if you want
\newcounter{ALG@tikzmark@tempcnta}
\newcommand\ALG@tikzmark@start{%
	\global\let\ALG@tikzmark@last\ALG@tikzmark@starttext%
	\expandafter\edef\csname ALG@tikzmark@\theALG@nested\endcsname{\theALG@tikzmark@tempcnta}%
	\tikzmark{ALG@tikzmark@start@\csname ALG@tikzmark@\theALG@nested\endcsname}%
	\addtocounter{ALG@tikzmark@tempcnta}{1}%
}
\def\ALG@tikzmark@starttext{start}
\newcommand\ALG@tikzmark@end{%
	\ifx\ALG@tikzmark@last\ALG@tikzmark@starttext
	% ignore this, the block was opened then closed directly without any other blocks in between (so just a \State basically)
	% don't draw a vertical line here
	\else
	\tikzmark{ALG@tikzmark@end@\csname ALG@tikzmark@\theALG@nested\endcsname}%
	\tikz[overlay,remember picture] \draw[\ALGtikzmarkcolor] let \p{S}=($(pic cs:ALG@tikzmark@start@\csname ALG@tikzmark@\theALG@nested\endcsname)+(\ALGtikzmarkextraindent,\ALGtikzmarkverticaloffsetstart)$), \p{E}=($(pic cs:ALG@tikzmark@end@\csname ALG@tikzmark@\theALG@nested\endcsname)+(\ALGtikzmarkextraindent,\ALGtikzmarkverticaloffsetend)$) in (\x{S},\y{S})--(\x{S},\y{E});%
	\fi
	\gdef\ALG@tikzmark@last{end}%
}
\apptocmd{\ALG@beginblock}{\ALG@tikzmark@start}{}{\errmessage{failed to patch}}
\pretocmd{\ALG@endblock}{\ALG@tikzmark@end}{}{\errmessage{failed to patch}}
\title{Dynamic Matching with Better-than-2 Approximation in Polylogarithmic Update Time}
 \author[1]{Sayan Bhattacharya\thanks{Supported by Engineering and Physical Sciences Research Council, UK (EPSRC) Grant EP/S03353X/1.}}
 \author[1]{Peter Kiss\thanks{Work done in part while the author was visiting Max-Planck-Institut für Informatik}}
 \author[2]{Thatchaphol Saranurak}
 \author[3]{David Wajc\thanks{Work done in part while the author was at Stanford University.}}
 \affil[1]{University of Warwick}
 \affil[2]{University of Michigan, Ann Arbor}
 \affil[3]{Google Research}
\begin{document}

\pagenumbering{gobble}

\maketitle

\begin{abstract}
We present dynamic algorithms with \emph{polylogarithmic} update time
for estimating the size of the maximum matching of a graph undergoing
edge insertions and deletions with approximation ratio \emph{strictly
better than $2$}. Specifically, we obtain a $1+\frac{1}{\sqrt{2}}+\epsilon\approx 1.707+\epsilon$
approximation in bipartite graphs and a $1.973+\epsilon$ approximation
in general graphs. We thus answer in the affirmative the value version of the major open question repeatedly asked in the dynamic graph algorithms literature. Our randomized algorithms' approximation and worst-case update time bounds both hold w.h.p.~against adaptive adversaries.

\smallskip

Our algorithms are based on simulating new two-pass streaming matching
algorithms in the dynamic setting. Our key new idea is to invoke the recent sublinear-time
matching algorithm of Behnezhad (FOCS'21) in a white-box manner to efficiently simulate the second pass of our streaming algorithms, while bypassing the well-known vertex-update barrier.
\end{abstract}

\newpage

\tableofcontents
\newpage 

\pagenumbering{arabic}
\section{Introduction}\label{sec:intro}	
	
	The maximum matching problem is a cornerstone of combinatorial optimization and theoretical computer science more broadly. (We recommend \cite{duan2014linear} for a brief history of this problem.)
	%This problem and its generalizations arise in various applications, and as key subroutines for other central problems, most notably the symmetric TSP \cite{christofides1976worst,karlin2021slightly}.
	The study of this problem and its extensions has contributed foundational advances and concepts to the theory of computing, from the introduction of the primal-dual method \cite{kuhn1955hungarian}, impact on polyhedral combinatorics \cite{edmonds1965maximum}, and the advocacy for  polynomial-time computability as the measure of efficiency (in static settings) \cite{edmonds1965paths}.
	
	The maximum matching problem has also been intensely studied in \emph{dynamic} settings. 
	Here, the graph undergoes edge \emph{updates} (insertions and deletions), and we wish to approximate the maximum matching, while spending little computation time between updates, referred to as \emph{update time}.
	Polynomial update time is trivial to achieve by running exact static algorithms (e.g., \cite{edmonds1965paths}) after each update. However, intuitively, such minor changes to the graph should allow for much faster algorithms, with possibly even exponentially smaller, \emph{polylogarithmic} update times.

	The first sublinear (i.e., $o(m)=o(n^2)$) update time dynamic matching algorithm was given 15 years ago by Sankowski \cite{sankowski2007faster}, who used fast dynamic matrix inversion to maintain the maximum matching size in update time $O(n^{1.495})$, recently improved to $O(n^{1.407})$ \cite{van2019dynamic}.
	Unfortunately, a number of fine-grained complexity results rule out fast, and even sublinear-in-$n$ update time \cite{abboud2014popular,henzinger2015unifying,abboud2016popular,dahlgaard2016hardness,kopelowitz2016higher} for (exact) maximum matching size estimation.
	This motivates the wealth of work on computing \emph{approximate} matchings dynamically.
	
	The first polylogarithmic update time dynamic matching algorithm is due to an influential work of Onak and Rubinfeld \cite{onak2010maintaining}, who gave a (large) constant approximation in polylog update time.
	This was later improved by Baswana et al.~\cite{baswana2015fully} to a $2$-approximation in logarithmic update time, later improved to constant time by Solomon \cite{solomon2016fully}.
	Numerous other algorithms achieving a $2$- or $(2+\eps)$-approximation in polylog update time were subsequently developed, with expected amortized update time improved to worst-case w.h.p.,\footnote{An algorithm has \emph{amortized} update time $f(n)$ if every sequence of $t$ updates starting from an empty graph takes at most $t\cdot f(n)$ update time. If each operation takes at most $f(n)$ time, it has \emph{worst-case} update time $f(n)$.} and oblivious randomized algorithms improved to advsersarially-robust ones, and then to deterministic ones \cite{bhattacharya2016new,arar2018dynamic,charikar2018fully,bernstein2019deamortization,bhattacharya2019deterministically,behnezhad2019fully,chechik2019fully,wajc2020rounding,bhattacharya2021deterministic,kiss2022improving}.\footnote{An algorithm \emph{works against an adaptive adversary} if its guarantees hold even when future updates depend on the algorithm's previous output. We also say that such an algorithm is \emph{adversarially robust}, or \emph{robust} for short. 
	The importance of robustness for \emph{static} applications has motivated a recent concentrated effort to design robust dynamic algorithms for myriad problems (see, e.g., discussions in \cite{nanongkai2017dynamic,beimel2022dynamic,wajc2020rounding,bhattacharya2021deterministic,chuzhoy2019new,fahrbach2018graph}).
	}
	
	A complementary line of work studied better-than-two-approximate dynamic matching, providing a number of small polynomial (even sublinear in $n$) update times for approximation ratios below the natural bound of $2$ achieved by inclusionwise maximal matchings. This includes $(1+\eps)$-approximate algorithms with $O_\eps(\sqrt{m})=O_{\eps}(n)$ update time \cite{gupta2013fully,peleg2016dynamic}, $\left(\frac{3}{2}+\eps\right)$-approximate algorithms with $O_\eps(\sqrt[4]{m})=O_{\eps}(\sqrt{n})$ update time \cite{bernstein2015fully,bernstein2016faster,grandoni2022maintaining,kiss2022improving} and a number of tradeoffs between approximation in the range $(3/2,2)$ and sublinear-in-$n$ polynomial update times \cite{behnezhad2020fully,wajc2020rounding,bhattacharya2021deterministic,behnezhad2022new,roghani2022beating}.\footnote{Throughout the paper, we use $O_\eps(\cdot)$ to suppress $\poly(1/\eps)$ factors and $\Otil(\cdot)$ to suppress $\poly(\log n)$ factors.}

	This state of affairs leaves open a key question, repeatedly raised in the literature \cite{bhattacharya2016new,bernstein2016faster,charikar2018fully,behnezhad2020fully,wajc2020rounding,behnezhad2022new,le2022dynamic} and first posed by Onak and Rubinfeld in their aforementioned groundbreaking work \cite{onak2010maintaining}:
	\begin{quote}
		\emph{How small can [approximation factors] be made with polylogarithmic update time? [...] Can the approximation constant be made smaller than 2 for maximum matching?}\\
	\end{quote}

	\subsection{Our Results}\label{sec:results}
	
	We resolve the question of polylogarithmic update time better-than-two-approximate dynamic matching algorithms in the affirmative, \emph{for the value version of the problem}. That is, letting $\mu(G)$ denote the maximum matching size in $G$, we maintain an estimate $\nu$ that is $\alpha<2$ approximate, i.e., it satisfies
	$\nu \leq \mu(G)\leq \alpha\cdot \nu$ at every point in time.
	Our main result is the following.
	
	\begin{wrapper}
		\begin{restatable}{thm}{maintheorem}\label{main-theorem}
		For every $\eps\in (0,1)$, there exists a randomized $(1.973+\eps)$-approximate dynamic matching size estimation algorithm with $\poly(\log n,1/\eps)$ worst-case update time.
		Both the algorithm's approximation ratio and update time hold w.h.p against an adaptive adversary.
		\end{restatable}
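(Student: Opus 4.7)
I would simulate a two-pass streaming matching algorithm in the dynamic setting, as the abstract hints. For the first ``pass'', maintain a $(1+\eps)$-approximately maximal matching $M_1$ with $\poly(\log n, 1/\eps)$ worst-case update time and adaptive robustness, using the $\eps$-AMM primitive discussed in Section~2 (Proposition~2.6 / Lemma~2.5). On its own this gives only a $(2+\eps)$-approximation, so all the extra gain must come from detecting many vertex-disjoint short augmentations of $M_1$.

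For the second ``pass'', consider the auxiliary graph $H$ whose vertices are the edges of $M_1$ and whose edges encode the length-3 augmenting paths available in $G$ (and, for general graphs, also length-5 augmenting structures that route through blossoms). A near-maximum matching in $H$ corresponds to a set of vertex-disjoint augmentations that can be applied to $M_1$. In the dynamic setting, we cannot afford to explicitly maintain such a matching in $H$: doing so runs straight into the vertex-update barrier, since a single edge update in $G$ may alter many vertices of $H$. Instead, we only need to \emph{estimate} the matching size in $H$, which we do by invoking Behnezhad's FOCS'21 sublinear-time estimator. Its neighborhood/adjacency queries on $H$ are simulated on the fly from our dynamic structures for $M_1$ and $G$. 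For general graphs we actually need a slightly stronger white-box query (Lemma~4.6), namely, what fraction of $M_1$'s edges participate in an augmentation; this is why the paper opens the sublinear algorithm's black box.

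The plan for the update-time bound is to refresh the estimate returned by the sublinear procedure only once every $\Theta(\eps \cdot |M_1|)$ updates, and to spread its cost uniformly over that window using standard deamortization; between refreshes, $|M_1|$ and hence the true number of available augmentations can each move by at most one per update, so the stale estimate is off by at most $O(\eps \cdot \mu(G))$, which is absorbed into the $\eps$ slack of the final ratio. Both $M_1$'s maintenance and the refresh subroutine are adaptively robust with high probability, so the overall algorithm is too.

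\textbf{Main obstacles.} I expect three difficulties. (i) Making the whole pipeline \emph{worst-case} and \emph{adaptively robust} while still polylog: the oblivious-adversary maximal-matching algorithms in the literature are not enough, and one has to rely on the more delicate $\eps$-AMM maintenance of Section~2. (ii) Translating the sublinear estimate on $H$ into a provably better-than-2 ratio in \emph{general} graphs: in bipartite graphs the analysis gives the clean $1 + \tfrac{1}{\sqrt{2}} + \eps$ bound from length-3 augmentations alone, but in general graphs length-3 augmentations do not suffice and one must also reason about length-5 augmentations through odd cycles, which is what yields the weaker $1.973+\eps$ constant. (iii) Ensuring that the white-box simulation of Behnezhad's queries on $H$ can be answered in $\poly(\log n, 1/\eps)$ amortized time per query; this is the step encapsulated by Lemma~4.6 and the main place where the sublinear algorithm has to be unboxed rather than used as a black box.
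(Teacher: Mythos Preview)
Your high-level plan matches the paper: maintain an $\eps$-AMM $M_1$ robustly in polylog worst-case time, simulate a second streaming ``pass'' via Behnezhad's sublinear estimator, and amortize the query cost over $\Theta(\eps n)$ updates via the reduction in Proposition~\ref{reduction}. However, two concrete pieces of your plan diverge from the paper and one of them is a genuine misconception.

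\textbf{The auxiliary graph.} The paper does \emph{not} build a graph $H$ on the edges of $M_1$ with edges encoding augmenting paths. Instead it works directly with the bipartite graph $B$ between $V(M_1)$ and $\overline{V(M_1)}$, takes a maximal $b$-matching $M_2$ there (capacities $1$ on matched vertices, $b$ on unmatched), and counts how many edges of $M_1$ have both endpoints matched in $M_2$. The sublinear algorithm is run on (a blown-up copy of) $B$, and the white-box step (Lemma~\ref{lm:sublinear:extension}) is precisely to extract, from Behnezhad's random-greedy simulation, the number of $M_1$-edges whose two endpoints are both matched---not merely the size of $M_2$, and not adjacency queries in your $H$. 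Your $H$ would require, for an adjacency query, deciding whether two $M_1$-edges are joined by a length-$3$ augmenting path, which in turn requires scanning unmatched neighbors; it is not clear this can be answered in the query model Behnezhad's algorithm needs.

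\textbf{Why general graphs give $1.973$ rather than $1.707$.} This is the real gap in your proposal. The paper never looks at length-$5$ augmentations or blossoms; it uses only length-$3$ augmenting paths in both the bipartite and general cases. The loss in general graphs comes from a different obstacle: if $(u,v)\in M_1$, both $u$ and $v$ might be matched in $M_2$ to the \emph{same} free vertex $w$, producing a triangle rather than an augmenting path. The paper handles this by a \emph{random bipartition}: each free vertex is placed in $L$ or $R$ uniformly, each $M_1$-edge has one endpoint in each side, and $M_2$ is restricted to edges crossing the bipartition. Then whenever both endpoints of an $M_1$-edge are matched in $M_2$, they are necessarily matched to distinct free vertices. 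The price is that each length-$3$ augmenting path survives the bipartition with probability only $1/4$, and a further factor $1/b$ is lost when making the resulting paths disjoint on their free endpoints (Lemma~\ref{many-disjoint-tighter}); optimizing $b$ gives the $1.973$ constant. Your plan to recover the gap via length-$5$ structures would be a different (and harder) argument, and is not what drives the constant here.
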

	\end{wrapper}
	
	For bipartite graphs, we obtain a stronger approximation guarantee of $1+\frac{1}{\sqrt{2}}+\eps\approx 1.707+\eps$.

	\paragraph{Secondary results.}
	Our approach is versatile, and yields the following generic reduction.
	\begin{restatable}{thm}{tradeoffs}\label{thm:tradeoffs}
	For any $\alpha>1.5$, a dynamic $\alpha$-approximate matching algorithm with update time $t_u$ implies a dynamic $\left(\alpha - \Omega \left(\left(1-6\left(\frac{1}{\alpha}-\frac{1}{2}\right)\right)^2\right)\right)$-approximate matching \emph{size estimator} with update time $\Otil(t_u)$.
	\end{restatable}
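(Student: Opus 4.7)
The plan is to simulate a two-pass streaming matching algorithm in the dynamic setting. The first pass is supplied by the hypothesized $\alpha$-approximate dynamic matching algorithm, which maintains an actual matching $M_1$; the second pass, which in streaming would identify augmenting structure, is replaced by periodic invocations of Behnezhad's sublinear-time matching algorithm on an auxiliary graph tailored to $3$-augmenting paths with respect to $M_1$.

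First, I would establish a structural counting lemma for short augmenting paths. Maintain $M_1$ with $|M_1| \geq \mu(G)/\alpha$. Fix an optimal matching $M^*$ and decompose $M_1 \triangle M^*$ into $N = \mu(G) - |M_1|$ vertex-disjoint augmenting paths. Since every length-$(2k{+}1)$ augmenting path contains $k$ edges of $M_1$, summing gives $n_3 + 2(N - n_3) \leq |M_1|$, where $n_3$ is the number of $3$-augmenting paths. Rearranging, $n_3 \geq 2\mu(G) - 3|M_1| \geq (2 - 3/\alpha)\,\mu(G)$, which is strictly positive whenever $\alpha > 3/2$.

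Next, I would build an auxiliary graph $H$ whose matchings encode vertex-disjoint $3$-augmenting paths with respect to $M_1$, so that $\mu(H) \geq n_3$, and whose neighbor queries can be simulated in $\tilde{O}(1)$ time from the data structures maintaining $M_1$. Running Behnezhad's sublinear-time matching-size estimator on $H$ gives, in $\tilde{O}(n)$ time per call, an estimate $\hat n$ with $\mu(H)/c \leq \hat n \leq \mu(H)$ w.h.p., for an absolute constant $c$. I would recompute $\hat n$ once per epoch of $T = \tilde{\Theta}(n)$ updates and cache it in between, using two interleaved epoch pairs to deamortize to worst-case $\tilde{O}(t_u)$ update time; adaptive robustness follows because each call draws fresh independent randomness hidden from the adversary until the following epoch. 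The reported estimate is $\nu := |M_1| + \lambda \hat n$ for some $\lambda \in (0,1)$, and validity $\nu \leq \mu(G)$ follows from $\hat n \leq \mu(H) \leq \mu(G) - |M_1|$.

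The hard part is twofold. First, designing $H$ so that $\mu(H) = \Theta(n_3)$, both upper- and lower-bounding the extractable augmenting potential, while permitting $\tilde{O}(1)$-time neighbor queries; this is the only place where the specific sublinear algorithm is invoked white-box, and is where the bulk of the technical work lies. Second, choosing $\lambda$ to realize the claimed \emph{quadratic} improvement $\Omega((2 - 3/\alpha)^2) = \Omega((1 - 6(1/\alpha - 1/2))^2)$: a naive choice $\lambda = 1$ yields only a linear-in-$(2 - 3/\alpha)$ improvement, so $\lambda$ must be tuned as a function of $(2 - 3/\alpha)$ to hedge between the regime where $|M_1| \approx \mu(G)/\alpha$ (the counting lemma supplies many augmenting paths and $\hat n$ dominates) and the regime where $|M_1| \approx \mu(G)$ ($|M_1|$ alone is already a strong estimate). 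Balancing these two regimes with the optimal $\lambda$ recovers the stated quadratic bound.
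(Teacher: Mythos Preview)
Your high-level plan---maintain a first-pass matching and periodically invoke a sublinear-time estimator on an auxiliary graph encoding augmenting structure---matches the paper's, and your counting lemma $n_3\ge 2\mu(G)-3|M_1|$ is correct (and even sidesteps any AMM hypothesis for that step). But there is a real gap at the center of the proposal: the auxiliary graph $H$ is never constructed, and the properties you demand of it are too strong. You ask for $\mu(H)\ge n_3$, $\mu(H)\le \mu(G)-|M_1|$, and $\Otil(1)$-time neighbor queries. If such an $H$ existed, then with $\lambda=1$ (and $c=2$, since a maximal matching is $2$-approximate) you would already get $\nu\ge |M_1|+\tfrac12\max\{0,\,2\mu(G)-3|M_1|\}$, whose minimum over $|M_1|\in[\mu(G)/\alpha,\mu(G)]$ is $\tfrac23\mu(G)$, attained at $|M_1|=\tfrac23\mu(G)$. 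That is a $3/2$-approximation for \emph{every} $\alpha>3/2$, strictly stronger than the theorem and than the paper's headline $1.973$ bound---so no such $H$ is available. The structural obstruction is that a $3$-augmenting path $f_1{-}u{-}v{-}f_2$ occupies one $M_1$-edge and \emph{two} free vertices; packing vertex-disjoint $3$-augmenting paths is a $3$-dimensional-matching-type constraint that does not encode as an ordinary matching with cheap oracles. Relatedly, your explanation of the quadratic via ``tuning $\lambda$'' is off: in your own framework the optimal $\lambda$ would deliver $3/2$, not a quadratic-in-$(2-3/\alpha)$ improvement.

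What the paper actually does differs in two essential ways. First, it merges the hypothesized $\alpha$-approximate matching with a separately maintained $\eps$-AMM (via \Cref{maximal-and-approx}) to obtain an $M_1$ that is simultaneously $\alpha$-approximate \emph{and} approximately maximal; the latter is what makes the second-pass analysis go through. Second, instead of a single $H$ whose matchings are augmenting paths, it uses a random bipartition and a maximal $b$-matching $M_2$ in the bipartite graph between $V(M_1)$ and $\overline{V(M_1)}$ (capacities $1$ and $b$), and then---via the white-boxed sublinear routine of \Cref{lm:sublinear:extension}---estimates the number $|\widehat{M}_1|$ of $M_1$-edges with \emph{both} endpoints matched in $M_2$; the output is $|M_1|+\tfrac1b\cdot\kappa$. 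The quadratic $\Omega((1-6c)^2)$, where $c=\tfrac1\alpha-\tfrac12$, arises from optimizing $b$: by \Cref{general:many-aug-paths-simple} the count is at least $\bigl(\tfrac14(\tfrac12-3c)-\tfrac{2}{b}(\tfrac12+c)\bigr)\cdot\mu(G)$, but only a $1/b$ fraction of these yield vertex-disjoint paths (\Cref{many-disjoint-tighter}); maximizing $\tfrac1b\bigl(\Theta(1-6c)-\Theta(1)/b\bigr)$ at $b=\Theta\!\left(\tfrac{1}{1-6c}\right)$ gives $\Theta((1-6c)^2)$. The factor $1/b$ here is not a free hedging parameter like your $\lambda$: it is forced by the need to extract vertex-disjoint paths from a $b$-matching, and that coupling between ``how many paths you can certify'' and ``how many you may count'' is precisely what produces the quadratic.
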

	Very recently, Behnezhad and Khanna \cite{behnezhad2022new} presented new dynamic matching algorithms trading off approximations $\alpha\in (1.5,2]$ and small polynomial update times. Applying \Cref{thm:tradeoffs} to their algorithms, we obtain improved approximation for dynamic matching size estimation,
	within the same update time up to polylog factors.

	To obtain our main results, we design several $2$-pass \emph{semi-streaming} algorithms	(see \Cref{sec:related}), including a deterministic $(1+1/\sqrt{2}+\eps)$-approximate algorithm on bipartite graphs. This matches the prior state-of-the-art \cite{konrad2021two,konrad2018simple} up to an $\eps$ term, while removing the need for randomization.

	\subsection{Our Techniques}\label{sec:techniques}
	
    We take the following high-level approach to prove \Cref{main-theorem}: (1) compute a maximal (and hence $2$-approximate) matching $M_1$, and  (2) augment $M_1$ if it is no better than $2$-approximate, using the myriad short augmenting paths $M_1$ must have in this case. This approach is common in many computational models, including the 2-pass semi-streaming model (see \Cref{sec:related}). Implementing this approach in a dynamic setting, however, faces several challenges. 
    The first challenge if we want robust algorithms with low worst-case update times is that no robust (near-)maximal matching algorithms with worst-case $\tilde{O}_{\epsilon}(1)$ update time are known.
    Of possible independent interest, we resolve this first challenge in \Cref{sec:adaptive}, by leveraging the robust fast matching sparsifiers of \cite{wajc2020rounding}.
	
	The more central challenge when trying to implement the above approach is that the search for augmenting paths  requires us to find many (disjoint) edges between matched and unmatched nodes in $M_1$. In a (multi-pass) streaming setting, this can be done by computing a large ($b$-)matching in the bipartite graph induced by edges in $V(M_1)\times \overline{V(M_1)}$. In a dynamic setting, however, this requires us to deal with \emph{vertex updates}, which are notoriously challenging in the context of dynamic matching, and all algorithms to date require reading all $\Omega(n)$ edges of each updated vertex \cite{le2022dynamic}.
	
	To overcome the above key challenge, we first note that we do not need to handle vertex updates individually, but may instead process these in \emph{batches} of $\Theta(\eps n)$ vertex updates, building on the periodic recomputation and sparsification techniques common in the literature (see \Cref{reduction}). Our main observation is that these batches of vertex updates, which  need to be handled if we wish to maintain the $b$-matchings from the second pass of our semi-streaming algorithms, can be implemented in $\tilde{O}_{\eps}(n)$ time using the sublinear-time algorithm of Behnezhad \cite{behnezhad2022time}. This leads to an amortized ${\tilde{O}_{\epsilon}(n)}/({\epsilon n}) = \tilde{O}_{\eps}(1)$ additive overhead in the update time (easily deamortized), implying our main result. This approach is versatile, and similarly underlies our secondary results.

	\subsection{Further Related Work}\label{sec:related}
	Having discussed the rich literature on the dynamic matching problem above, we do not elaborate on it further here. We do, however, highlight some connections to the literature on matching in other computational models that is closely related to our work.
	
	\paragraph{Streaming Matching.} 
	In the (semi-)streaming model, an $n$-node graph is revealed in a stream, edge by edge, and we wish to compute a large matching using only (optimal) $\tilde{O}(n)$ space. A line of work studying the problem of computing an approximately-maximum weighted matching \cite{feigenbaum2005graph,mcgregor2005finding,epstein2013improved,crouch2014improved,paz20182+,ghaffari2019simplified} culminated in a $\left(2+\eps\right)$-approximation \cite{paz20182+,ghaffari2019simplified}.  
	For unweighted graphs, lower bounds are known \cite{goel2012communication,kapralov2013better,kapralov2021space},
	but it remains a major open question whether one can break the barrier of $2$-approximation achievable by a trivial maximal matching algorithm.
	Striving for better approximation (and insights to break this barrier), several works designed algorithms using \emph{multiple passes} over the stream \cite{mcgregor2005finding,kale2017maximum,feldman2021maximum,esfandiari2016finding,ahn2013linear,eggert2009bipartite,fischer2022deterministic,eggert2009bipartite,ahn2013linear,assadi2022semi}.
	For $2$ passes, the state-of-the-art approximation ratios are $1.857$ \cite{feldman2021maximum}, and $1+\frac{1}{\sqrt{2}}\approx 1.707$ for bipartite graphs using the randomized algorithms of \cite{konrad2018simple,konrad2021two}, with the best prior deterministic bound being $\frac{12}{7}\approx 1.714$ \cite{esfandiari2016finding}.
	
	\paragraph{Sublinear-Time Matching.} 
	Computation of large matchings in sublinear \emph{time} has also been the subject of great interest. In regular bipartite graphs, a maximum matching can be computed in $\tilde{O}(n)$ time \cite{goel2009perfect,goel2010perfect,goel2013perfect}. In general  graphs with bounded-degrees, it was known how to achieve a $(2+\epsilon)$-approximation in sublinear time \cite{nguyen2008constant,onak2012near,parnas2007approximating,yoshida2012improved}. This was recently improved to a $\tilde{O}(n)$ time algorithm for {\em any} general graph \cite{behnezhad2022time}. As discussed in \Cref{sec:techniques}, we use this latter algorithm in a white-box manner to obtain our main result.

%\color{blue}
\subsubsection{Concurrent work} Independently and concurrently, Behnezhad \cite{behnezhad2023dynamic} (in a work in the same conference) obtained the same main qualitative result as ours: a better-than-two-approximate polylogarithmic time dynamic matching size estimation algorithm. 
The basic approach to achieve this qualitative result is the same in both papers: Simulate the second pass of a two-pass streaming algorithm using the sublinear-time algorithm of \cite{behnezhad2022time}, together with batched computation. The quantitative differences in the papers' approximation ratios are due to the two-pass streaming algorithms used---our new maximal-b-matching-based algorithms here, and an algorithm inspired by \cite{konrad2012maximum} in \cite{behnezhad2023dynamic}. We note that \cite{behnezhad2023dynamic} also achieves $(3/2-\Omega(1))$-approximate size estimation algorithm in time ${O}(\sqrt{n})$ (the best update times for $(3/2+\eps)$-approximate explicit matching \cite{bernstein2015fully,bernstein2016faster,grandoni2022maintaining,kiss2022improving}). This result also uses the high-level approach of batched computation using sublinear-time algorithms, building on a new characterization of tight examples for the $3/2$-approximate matching sparsifiers (EDCS) of \cite{bernstein2015fully}.

%The high-level approach is the same: dynamically simulating the second pass of a two-pass streaming algorithm using batched recomputation and the sublinear time algorithm of \cite{behnezhad2022time}.
%The two papers differ in the underlying two-pass streaming algorithms used (and hence the 

\color{black}
\vspace{-0.05cm}
	\section{Preliminaries}\label{sec:prelims}
\vspace{-0.05cm}	
	Our input is a graph $G$ on $n$ nodes $V$, with an initially empty edge set $E$, undergoing edge updates (insertions and deletions).
	Our objective is to approximate the maximum matching size $\mu(G)$ well, while spending little update time (computation between updates).
	In addition, we want our algorithms to work in the strictest settings: against an adaptive adversary (i.e., their guarantees hold for any update sequence), and with small \emph{worst-case} update time guarantees.

	\paragraph{Matching theory basics.} A \emph{matching} is a vertex-disjoint subset of edges. A \emph{maximal matching} is an inclusionwise-maximal matching. 
	A maximum matching is a matching of largest cardinality. 
	In a weighted graph with edge weights $w_e\in \mathbb{R}$, a maximum weight matching is a matching $M$ of largest total weight, $w(M):=\sum_{e\in M} w_e$.
	An \emph{augmenting path} $P$ with respect to a matching $M$ is a simple path starting and ending with distinct nodes unmatched in $M$, with the edges alternatingly outside and inside $M$. 
	Setting $M\gets M\bigoplus P$, where $\bigoplus$ denotes the symmetric difference, referred to as \emph{augmenting} $M$ along $P$, increases the cardinality of $M$ by one.
	A \emph{$b$-matching} with capacities $\{b_v\}_{v\in V}$ is a collection of \emph{multi-edges} $F$ of $E$ (that is, edges of $E$ may appear multiple times in $F$) with no vertex $v$ having more than $b_v$ multi-edges in $F$.
	A \emph{fractional matching} $x:E\to \mathbb{R}_{\geq 0}$ assigns non-negative values to edges so that each vertex $v$ has \emph{fractional degree} $\sum_{e\ni v} x_e$ at most one.
	In bipartite graphs, the existence of a fractional matching of size $k$ implies the existence of an integral matching of cardinality $\lceil k\rceil$. In general graphs, 
	this fractional relaxation has a maximum integrality gap of $3/2$, attained by a triangle graph with values $x_e=1/2$ for each edge $e$.
	
	\paragraph{Notation:}
	Let $V(M)$ denote the set of all endpoints of edges in a matching $M$, and let $\overline{V(M)} := V\setminus V(M)$. For any disjoint vertex sets $A,B \subseteq V$, we let $G[A,B]$ denote the bipartite subgraph induced by the edges in $G$ with one endpoint in $A$ and another in $B$. Finally, for any subset of edges $E' \subseteq E$, we let $G[E']$ denote the subgraph of $G$ induced by $E'$.
	
	\subsection{Previous building blocks}

    A ubiquitous paradigm in the approximate dynamic matching literature is \emph{periodic recomputation}, introduced by Gupta and Peng \cite{gupta2013fully}.
	This approach is particularly useful in conjunction with sparsification techniques. We will use the vertex sparsification technique introduced by Assadi et al.~\cite{assadi2019stochastic} in the context of stochastic optimization, and adapted to dynamic settings by Kiss \cite{kiss2022improving}. 
	Combined, these approaches yield the following ``reduction'' from dynamic matching algorithms with \emph{immediate} queries to ones with slower query time.

	\begin{restatable}{prop}{reduction}\label{reduction}
	Let $\eps\in (0,1)$ and $\alpha \geq 1$.
    Suppose there exists an algorithm $\calA$ on a dynamic $n$-node graph $G$ with update time $t_u$, that, 
    provided $\mu(G)\geq \eps\cdot n$, supports $t_q$-time $\alpha$-approximate size estimate queries w.h.p.
    Then, there is another algorithm $\calA'$
    on $G$ that always maintains an $(\alpha+O(\eps))$-approximate estimate $\nu'$ in $\tilde{O}_{\epsilon}(t_u+t_q/n)$ update time. Moreover if the update time of $\calA$ is worst-case, so is that of $\calA'$, and if $\calA$ works against an adaptive adversary, then so does $\calA'$.
    \end{restatable}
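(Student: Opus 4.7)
The plan is to combine multi-level vertex sparsification in the spirit of Assadi et al.~\cite{assadi2019stochastic} and Kiss~\cite{kiss2022improving} with the periodic recomputation paradigm of Gupta--Peng~\cite{gupta2013fully}, leveraging the fact that $\calA$'s guarantee holds whenever the input is ``dense'' in the sense that $\mu \geq \eps n$. At initialization, for each of $O(\log n/\eps)$ geometric guesses $\hat{\mu} \in \{(1+\eps)^i\}$ for $\mu(G)$, I would independently sample a vertex set $V_{\hat{\mu}} \subseteq V$ by including each vertex with probability $p_{\hat{\mu}} := \min\{1,\, C n\log n / (\eps^c \hat{\mu})\}$ for a large enough constant $c$, and run a dedicated instance $\calA_{\hat{\mu}}$ of $\calA$ on the induced subgraph $G_{\hat{\mu}} := G[V_{\hat{\mu}}]$ (forwarding to $\calA_{\hat{\mu}}$ only those edge updates with both endpoints in $V_{\hat{\mu}}$). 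A standard matching-concentration bound for vertex samples then guarantees, w.h.p., that for the ``correct'' level $\hat{\mu}^\star$ nearest $\mu(G)$, $\mu(G_{\hat{\mu}^\star}) = (1 \pm O(\eps))\, p_{\hat{\mu}^\star}\, \mu(G)$, and in particular the ratio $\mu(G_{\hat{\mu}^\star})/|V_{\hat{\mu}^\star}|$ matches $\mu(G)/n$ up to a $(1\pm O(\eps))$-factor, so $\calA_{\hat{\mu}^\star}$'s precondition is met whenever $\mu(G) \geq \eps n$.

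On top of this I would layer periodic recomputation: once every $\Theta(\eps \mu(G_{\hat{\mu}}))$ updates to $G_{\hat{\mu}}$, query $\calA_{\hat{\mu}}$ to refresh the stored estimate $\nu_{\hat{\mu}}$; the drift between refreshes is a $(1\pm O(\eps))$-factor, so $\nu_{\hat{\mu}}$ remains an $(\alpha + O(\eps))$-approximation of $\mu(G_{\hat{\mu}})$, and the output would be $\nu' := \max_{\hat{\mu}} \nu_{\hat{\mu}}/p_{\hat{\mu}}$ (with a simple $2$-approximate maximal-matching fallback absorbing the $\mu(G) < \eps n$ regime into the $O(\eps)$ slack). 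A geometric-sum argument in $\hat{\mu}$ gives $\sum_{\hat{\mu}} p_{\hat{\mu}}^2 = \tilde{O}_\eps(1)$, which bounds the amortized per-update cost of maintaining all $\calA_{\hat{\mu}}$'s by $\tilde{O}_\eps(t_u)$ and, after cancelling $p_{\hat{\mu}}^2/\mu(G_{\hat{\mu}}) = p_{\hat{\mu}}/\mu(G)$, the amortized per-update query cost by $\tilde{O}_\eps(t_q/\mu(G)) = \tilde{O}_\eps(t_q/n)$ in the relevant regime. The standard two-copy background-computation trick then deamortizes these bounds to worst-case, preserving adaptive robustness because the samples $V_{\hat{\mu}}$ are fixed at initialization, independent of the update stream.

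The main obstacle I anticipate is controlling \emph{incorrect} guesses, where after rescaling by $1/p_{\hat{\mu}}$ the estimate $\nu_{\hat{\mu}}/p_{\hat{\mu}}$ could a priori overshoot $\mu(G)$ and violate the lower bound $\nu' \leq \mu(G)$. Ruling this out requires a one-sided upper tail bound $\mu(G_{\hat{\mu}}) \leq (1 + O(\eps))\, p_{\hat{\mu}}\, \mu(G)$ holding \emph{simultaneously} across all $O(\log n/\eps)$ levels, which I would establish via a union bound at the cost of enlarging the polylog factors in $p_{\hat{\mu}}$. A secondary point is confirming full adaptivity of $\calA'$: since the sample sets are oblivious and $\calA$ is adaptive by assumption, an adaptive adversary interacting with $\calA'$ sees no more information about the samples than an oblivious one would, so the concentration bounds transfer unchanged.
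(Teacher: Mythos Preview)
Your approach has a genuine gap: random vertex \emph{sampling} (passing to the induced subgraph on a random vertex subset) cannot serve as the sparsification primitive here. The reduction must handle the regime $\mu(G)<\eps n$, and in that regime sampling does not help: if each vertex is kept with probability $p$ then $|V_{\hat\mu}|\approx pn$, while in the worst case (e.g., $G$ a perfect matching on $2\mu(G)$ vertices together with $n-2\mu(G)$ isolated vertices) one has $\mu(G_{\hat\mu})\approx p^{2}\mu(G)$, so $\mu(G_{\hat\mu})/|V_{\hat\mu}|\approx p\cdot\mu(G)/n<\mu(G)/n<\eps$ for every $p\in(0,1]$. Thus $\calA$'s precondition is never met on $G_{\hat\mu}$, your claim that the ratio $\mu(G_{\hat\mu^\star})/|V_{\hat\mu^\star}|$ matches $\mu(G)/n$ up to $(1\pm O(\eps))$ is false in general, and the ``$2$-approximate fallback'' does not rescue you (a $2$-approximation is not $(\alpha+O(\eps))$-approximate when $\alpha<2$). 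The same worst case also shows that no single rescaling factor---$1/p_{\hat\mu}$ or otherwise---recovers $\mu(G)$ from $\mu(G_{\hat\mu})$ uniformly across graphs, so the obstacle you flag is not merely a tail-bound issue.

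The paper (and the Assadi et al.\ and Kiss references you cite) instead uses random vertex \emph{contraction}: hash the $n$ vertices into $k/\eps$ buckets and identify vertices sharing a bucket. For $k\approx\mu(G)$ this shrinks the vertex count to $\mu(G)/\eps$ while w.h.p.\ preserving a $(1-O(\eps))$-fraction of some maximum matching (few matched endpoints collide), so the contracted graph $G'$ satisfies $\mu(G')\geq(1-O(\eps))\,\eps\cdot|V'|$ and $\calA$'s precondition holds. Moreover, contraction guarantees $\mu(G')\leq\mu(G)$ \emph{deterministically} (any matching in $G'$ lifts to one in $G$), so no rescaling is needed and taking the maximum estimate over all contractions never overshoots $\mu(G)$---cleanly sidestepping what you flagged as the main obstacle. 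The periodic-recomputation layer of your plan is sound and matches the paper's; the defect is solely the choice of sparsifier.
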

    
	The above proposition, implicit in prior work, 
	serves as a useful abstraction, and so we provide a proof of this proposition for completeness in \Cref{appendix:previous-blocks}.
    As discussed in \Cref{sec:techniques}, this reduction is one of the crucial ingredients that allows us to bypass the vertex-update barrier.

	Another key ingredient we use is the sublinear-time (approximate) maximal matching algorithm of Benhezhad \cite{behnezhad2022time}, whose guarantees are captured by the following proposition (see \Cref{appendix:previous-blocks}).
	
	\begin{prop}\label{sublinear}
		Let $\eps\in (0,1/2)$. Using $\tilde{O}_{\epsilon}(n)$ time and $\tilde{O}_{\eps}(n)$ adjacency matrix queries w.h.p.~in an $n$-node graph $G$, one can compute a value $\nu$ which approximates $\tilde{\mu}$, the size of some maximal matching in $G$, within additive error $\eps n$. Namely,
		$\tilde{\mu} \geq \nu \geq \tilde{\mu} -\eps n.$
	\end{prop}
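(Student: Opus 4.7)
The plan is to invoke Behnezhad's sublinear-time maximal matching algorithm essentially as a black box, but to outline its structure enough to verify the claimed $\tilde{O}_\eps(n)$ time and adjacency matrix query complexity. The starting point is the greedy maximal matching induced by a uniformly random edge ordering: implicitly assign each potential edge $e$ an independent uniform rank $r_e \in [0,1]$, and define $\tilde{M}$ recursively by putting $e$ into $\tilde{M}$ iff no edge $e'$ incident to $e$ with $r_{e'} < r_e$ is already in $\tilde{M}$. This $\tilde{M}$ is maximal by construction, and its size $\tilde{\mu} = |\tilde{M}|$ equals $|V(\tilde{M})|/2$, so it suffices to estimate the fraction of vertices matched by $\tilde{M}$ up to additive $2\eps$.

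The estimator samples $s = \Theta(\log n / \eps^2)$ vertices $v_1,\dots,v_s$ uniformly at random and, for each, uses a local oracle to decide whether $v_i \in V(\tilde M)$. The oracle is the natural recursive procedure: query $v_i$'s incident edges in increasing order of rank, and for each such edge $(v_i,u)$, recursively determine whether $u$ was already matched by an edge of smaller rank; $v_i$ is declared matched the first time this recursion returns ``unmatched''. By a Chernoff bound, the empirical fraction of matched sampled vertices is within additive $\eps$ of $|V(\tilde{M})|/n$ w.h.p., so $\nu := (n/2) \cdot (\text{matched fraction})$ satisfies $|\nu - \tilde\mu| \leq \eps n$, and by returning $\min(\nu, \tilde\mu_{\text{ub}})$ (or shifting by $\eps n$) we can enforce the one-sided inequality $\tilde\mu \geq \nu \geq \tilde\mu - \eps n$ stated in the proposition.

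To implement the oracle using only adjacency matrix queries, we need to enumerate incident edges of a vertex $v$ in increasing rank order without reading all $n$ entries of its row. This is done by the standard trick of coupling: sample an index uniformly from $[n]$, query $(v,\cdot)$ in the adjacency matrix, and repeat; by rejection this simulates draws from the neighborhood of $v$, and one can further stratify by rank bucket so that only neighbors with small enough rank are considered. Each neighbor visit costs $\tilde{O}(n/\deg(v))$ queries in expectation, and the total query/time cost per oracle call is charged to the recursive exploration tree.

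The main obstacle, and the heart of Behnezhad's argument, is bounding the total exploration cost across all $s$ samples by $\tilde{O}_\eps(n)$ with high probability. The key structural fact is that, under the uniformly random ranks, the expected size of the recursion tree rooted at a uniformly random vertex is small on average: summed over all $n$ vertices it is $\tilde{O}_\eps(n)$, which by Markov and the $\tilde{O}(1/\eps^2)$ sample size gives the claimed bound. Establishing this requires the delicate random-order analysis from Behnezhad's paper (relating the exploration to a branching process whose expected progeny is controlled by a potential function over vertex degrees), together with a degree-bucketing step to handle high-degree vertices; we appeal to that analysis directly rather than reproving it.
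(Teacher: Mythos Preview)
The paper does not actually prove this proposition; it is stated as a direct consequence of Behnezhad's sublinear-time algorithm and cited as such. The only place the paper unpacks the relevant machinery is Appendix~B, in the proof of the stronger \Cref{lm:sublinear:extension}. So your write-up is already more detailed than the paper's own treatment of \Cref{sublinear}. That said, comparing your sketch to how the paper (and Behnezhad) actually handle the adjacency-matrix model, there is one genuine gap worth flagging.

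Your high-level outline---random greedy maximal matching under a uniformly random edge ordering, a recursive local oracle for the matched status of a vertex, and $\Theta(\log n/\eps^2)$ samples to estimate $|V(\tilde M)|/n$---is exactly right, and the Chernoff step is fine. The gap is in the paragraph on implementing the oracle with adjacency-\emph{matrix} queries. Rejection sampling (``sample a uniform index in $[n]$, query, repeat'') gives you a \emph{uniformly random} neighbor of $v$ at expected cost $n/\deg(v)$; it does not give you the neighbors of $v$ in increasing rank order, which is what the recursive oracle needs. The ``stratify by rank bucket'' phrase does not fix this, and even if one grants a per-neighbor overhead of $n/\deg(v)$, Behnezhad's $\tilde O(|E|/|V|)$ bound is on the number of oracle calls under \emph{list} access and does not directly absorb a multiplicative $n/\deg(v)$ factor that varies vertex-by-vertex.

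The way this is actually handled (both in Behnezhad's paper and in this paper's Appendix~B) is not rejection sampling but a \emph{supergraph regularization}: one builds an auxiliary graph $H$ on $\Theta_\eps(n^2)$ vertices and edges in which every original vertex $v_i\in V$ has degree exactly $n$, with its $j$-th neighbor being $v_j$ if $(v_i,v_j)\in E$ and a dummy vertex $v_j^*$ otherwise; dummy vertices are padded with many degree-one ``slack'' neighbors so that w.h.p.\ they are matched outside $V$. Then (i) each adjacency-\emph{list} query in $H$ is answered by a single adjacency-\emph{matrix} query in $G$, (ii) $|E_H|/|V_H|=O_\eps(1)$ so Behnezhad's list-model bound gives $\tilde O_\eps(1)$ expected time per uniformly random vertex of $H$, and (iii) since $V$ is an $\Omega_\eps(1/n)$ fraction of $V_H$, the cost per sampled vertex of $V$ is $\tilde O_\eps(n)$, yielding $\tilde O_\eps(n)$ total over $\tilde O_\eps(1)$ samples. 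Replacing your rejection-sampling paragraph with this construction would close the gap.
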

	
	A simple combination of propositions \ref{reduction} and \ref{sublinear} (with $t_q=\tilde{O}_{\eps}(n)$) immediately yields (yet) another $\tilde{O}_{\eps}(1)$-time $(2+\eps)$-approximation algorithm. As we will show, these propositions are also useful ingredients for breaking the barrier of $2$-approximation within the same update time.

	\subsection{New algorithmic primitive: Robust Approximately Maximal Matchings}
	
	To make our algorithms robust against adaptive adversaries we need an algorithm for maintaining \emph{approximately-maximal matchings} (AMM), which are defined as follows.
	
	\begin{Def}[\cite{peleg2016dynamic}]\label{def:eps-AMM}
		A matching $M$ is an \emph{$\eps$-approximately maximal matching ($\eps$-AMM)} in graph $G$ if $M$ is maximal in some subgraph obtained by removing at most $\eps\cdot \mu(G)$ nodes of $G$.
	\end{Def}
	\begin{obs}\label{AMM-size-bound}
	    If $M$ is an $\eps$-AMM in $G$, then $|M|\geq \frac{1}{2}(1-\eps)\cdot \mu(G) = \left(\frac{1}{2}-\frac{\eps}{2}\right)\cdot \mu(G)$.
	\end{obs}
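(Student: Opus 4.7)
The plan is to unpack the definition and then combine two standard facts: (i) a maximal matching is a $2$-approximation of the maximum matching in whatever graph it is maximal in, and (ii) deleting $k$ vertices can reduce the maximum matching size by at most $k$.

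More concretely, by \Cref{def:eps-AMM}, there exists a set $S \subseteq V$ with $|S| \leq \eps \cdot \mu(G)$ such that $M$ is a maximal matching in the subgraph $G - S$ obtained by removing $S$. Since $M$ is maximal in $G - S$, every edge of $G - S$ shares an endpoint with some edge of $M$, so a standard argument gives $|M| \geq \tfrac{1}{2}\cdot \mu(G-S)$.

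Next, I would bound $\mu(G-S)$ from below by $\mu(G) - |S|$. To see this, fix a maximum matching $M^{\star}$ of $G$; removing $S$ destroys at most $|S|$ edges of $M^{\star}$ (one per vertex removed, since $M^{\star}$ is a matching), and the remaining subset of $M^{\star}$ is still a matching in $G - S$, giving $\mu(G-S) \geq |M^{\star}| - |S| = \mu(G) - |S| \geq (1-\eps)\mu(G)$.

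Putting the two inequalities together yields $|M| \geq \tfrac{1}{2}\mu(G-S) \geq \tfrac{1}{2}(1-\eps)\mu(G)$, which is exactly the claim. There is no real obstacle here; the only thing to be a bit careful about is the implicit quantifier in \Cref{def:eps-AMM} (``some'' subgraph), which is handled by simply fixing such an $S$ at the start of the argument.
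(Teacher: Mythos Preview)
Your argument is correct and is exactly the natural proof of this observation. The paper itself does not spell out a proof (it is stated as an observation immediately following \Cref{def:eps-AMM}), but your two-step reasoning—maximal matchings are $2$-approximate in the subgraph, and removing $|S|$ vertices drops $\mu$ by at most $|S|$—is precisely what the authors intend the reader to supply.
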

		
	Peleg and Solomon \cite{peleg2016dynamic} showed how to maintain an $\eps$-AMM quickly in bounded-arboricity (i.e., globally sparse) graphs.
	In \Cref{sec:adaptive} we show how to maintain such matchings quickly in arbitrary graphs, proving the following.
	
	\begin{restatable}{lem}{epsAMM}\label{eps-AMM-algo}
        For any $\eps\in (0,1)$, there exists a robust dynamic algorithm that w.h.p.~maintains an $\eps$-AMM in worst-case update time $\tilde{O}_{\eps}(1)$.
	\end{restatable}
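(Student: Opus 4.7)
The plan is to combine the periodic-recomputation reduction of Proposition~\ref{reduction} with the robust matching-sparsifier machinery of \cite{wajc2020rounding}, reducing the task of maintaining an $\eps$-AMM in $G$ to that of maintaining an exactly maximal matching in a dynamically-updated bounded-degree subgraph $H\subseteq G$.

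By Proposition~\ref{reduction} I may restrict attention to the regime $\mu(G)\geq \eps n$ and afford additive slack $O(\eps n)$, both absorbed by the $\eps$-AMM definition. Building on the robust sparsifier framework of \cite{wajc2020rounding}, I would maintain, against an adaptive adversary and in worst-case $\tilde O_\eps(1)$ time per edge update in $G$, a subgraph $H\subseteq G$ with (a) maximum degree $\beta=\tilde O_\eps(1)$, and (b) an EDCS-style covering property: every $(u,v)\in E(G)\setminus E(H)$ has at least one endpoint of $H$-degree $\geq \tau := \Omega_\eps(\beta)$. Each update to $G$ triggers only $\tilde O_\eps(1)$ changes to $H$.

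Because $H$ has maximum degree $\beta=\tilde O_\eps(1)$, I would then maintain an exactly maximal matching $M$ in $H$ in a trivial way: upon insertion in $H$ of an edge between two $M$-unmatched vertices, add it to $M$; upon deletion of an edge $e\in M$, scan the $O(\beta)$ $H$-neighbors of both endpoints to match each to any currently unmatched neighbor. This runs in worst-case $O(\beta)=\tilde O_\eps(1)$ time per change to $H$, so composed with the sparsifier updates, the overall worst-case update time per update to $G$ is $\tilde O_\eps(1)$.

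It remains to argue $M$ is an $\eps$-AMM in $G$. Let $U$ denote the set of $M$-unmatched vertices; by maximality of $M$ in $H$, any edge of $G[U]$ lies in $G\setminus H$, so by (b) has an endpoint $v\in U$ with $\deg_H(v)\geq \tau$. Let $S\subseteq U$ collect these vertices; then counting $H$-edge endpoints, $|S|\leq 2|E(H)|/\tau \leq 2\beta n/\tau = O_\eps(n/\tau)$. Choosing $\beta$ (and hence $\tau$) sufficiently large in terms of $1/\eps$, this yields $|S|\leq \eps^2 n \leq \eps\,\mu(G)$, and $M$ is maximal in $G-S$, establishing the $\eps$-AMM property. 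The main obstacle will be extracting from \cite{wajc2020rounding} a sparsifier satisfying all three of low maximum degree, the EDCS-style covering property, and robust worst-case polylogarithmic maintenance simultaneously; once this sparsifier is in hand, the maximal-matching maintenance in $H$, the final vertex-cover-style counting, and the appeal to Proposition~\ref{reduction} are essentially mechanical.
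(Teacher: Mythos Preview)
Your approach has a genuine gap in the counting step. You claim $|S|\le 2|E(H)|/\tau \le 2\beta n/\tau$ and then assert that choosing $\beta$ (and hence $\tau$) sufficiently large yields $|S|\le\eps^2 n$. But $\tau$ is a degree threshold in $H$, so necessarily $\tau\le\beta$, and therefore $\beta/\tau\ge 1$; no choice of $\beta$ can make $2\beta n/\tau\le\eps^2 n$. Even replacing the crude bound $|E(H)|\le\beta n$ by the sharper $|E(H)|\le 2\beta\,\mu(G)$ (Fact~\ref{size-degree}) only gives $|S|\le 4\mu(G)\cdot\beta/\tau = \Theta(\mu(G))$, which is still far from $\eps\,\mu(G)$.

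The underlying conceptual issue is that an \emph{arbitrary} maximal matching in the sparsifier need not match most of its high-degree vertices; it can leave $\Theta(\mu(G))$ of them unmatched. The paper handles this by computing, inside the kernel $K$, a near-maximum-\emph{weight} matching where each edge is weighted by the number of its high-degree endpoints (Lemma~\ref{static-AMM}); this guarantees that all but an $O(\eps)$ fraction of high-degree vertices are matched, after which the bound $|H_K|\le 4\mu(G)$ (Lemma~\ref{few-high}) gives the $\eps$-AMM property (Lemma~\ref{the-adaptive-AMM}). Because such a weighted matching cannot be cheaply maintained edge-by-edge in $H$, the paper instead periodically recomputes both the kernel and the matching from scratch, relying on a stability property of AMMs under few updates (Lemma~\ref{AMM-stable}) to bridge between recomputations. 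Note also that your appeal to Proposition~\ref{reduction} is not justified here: that reduction is tailored to size estimation via vertex contractions, and there is no reason an $\eps$-AMM in a contracted graph pulls back to an $\eps$-AMM in $G$.
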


	A well-known fact is that a maximal matching that is close to $2$-approximate must admit many length-three augmenting paths (see e.g., \cite[Lemma 1]{konrad2012maximum}). 
	Our interest in AMMs is in part motivated by the following slight generalization of this fact.

	\begin{restatable}{prop}{shortaug}\label{poor-approx-maximal-matching=many-aug-paths}
		Let $\eps>0$ and $c \in \mathbb{R}$ and let $M$ be an $\eps$-AMM in $G$ such that $|M|\leq \left(\frac{1}{2}+c\right)\cdot \mu(G)$. Then $M$ admits a collection of at least $\left(\frac{1}{2}-3c-\frac{7\eps}{2}\right)\cdot \mu(G)$  node-disjoint 3-augmenting paths.
	\end{restatable}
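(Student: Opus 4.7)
The plan is to lower-bound the number of length-three $M$-augmenting paths appearing as connected components of the symmetric difference $M \oplus M^*$, where $M^*$ is a maximum matching of $G$ of size $\mu := \mu(G)$. I would set $U := V(M^*) \setminus V(M)$, the set of $M^*$-matched vertices missed by $M$. Since $|V(M)| = 2|M| \leq (1+2c)\mu$, at most $(1+2c)\mu$ endpoints of $M^*$ can lie in $V(M)$, which gives $|U| \geq (1-2c)\mu$.

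The main (and only) nontrivial use of the $\eps$-AMM hypothesis comes next: I would fix $S\subseteq V$ with $|S|\leq \eps\mu$ witnessing that $M$ is maximal in $G - S$. Any $M^*$-edge with \emph{both} endpoints in $U$ is an edge of $G$ with neither endpoint in $V(M)$, so maximality of $M$ in $G-S$ forces it to contain a vertex of $S$. Since $M^*$ is a matching, this bounds the number $a_2$ of such ``double-$U$'' $M^*$-edges by $|S| \leq \eps\mu$. Letting $a_1$ denote the count of $M^*$-edges with exactly one endpoint in $U$, the identity $a_1 + 2a_2 = |U|$ then gives $a_1 \geq (1-2c-2\eps)\mu$.

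The finish is a double-counting argument. For each $e = vw \in M$ let $d_e \in \{0,1,2\}$ be the number of endpoints of $e$ that are $M^*$-matched to some vertex of $U$. Each $a_1$-edge contributes a unique such endpoint (its $V(M)$-endpoint), so $\sum_{e \in M} d_e = a_1$. Writing $A_1 := |\{e \in M : d_e = 2\}|$ and using $\sum_e d_e \leq 2A_1 + (|M| - A_1) = A_1 + |M|$ yields
\[
A_1 \;\geq\; a_1 - |M| \;\geq\; (1-2c-2\eps)\mu - \bigl(\tfrac{1}{2} + c\bigr)\mu \;=\; \bigl(\tfrac{1}{2} - 3c - 2\eps\bigr)\mu \;\geq\; \bigl(\tfrac{1}{2} - 3c - \tfrac{7\eps}{2}\bigr)\mu.
\]
Each $e = vw$ with $d_e = 2$ is flanked by two (necessarily distinct) $M^*$-edges $uv$ and $wu'$ with $u,u' \in U$, which together with $e$ form a length-$3$ $M$-augmenting path; the $A_1$ resulting paths are vertex-disjoint because they are distinct connected components of $M \oplus M^*$. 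I do not anticipate a real obstacle here: the only place care is needed is in verifying that ``maximality in $G-S$'' forces double-$U$ $M^*$-edges to touch $S$, which relies on the precise formulation of \Cref{def:eps-AMM}.
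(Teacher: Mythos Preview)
Your argument is correct, and in fact yields the sharper constant $\frac{1}{2}-3c-2\eps$ (which you then weaken to match the stated bound). The paper takes a different, more modular route: it simply deletes the witness set $S$ to obtain a subgraph $G'$ in which $M$ is genuinely maximal, observes $\mu(G')\geq (1-\eps)\mu(G)$ and $|M|\leq(\tfrac12+c+\eps)\mu(G')$, and then invokes the well-known $\eps=0$ case black-box in $G'$, finally translating the bound back to $G$. Your approach instead keeps $M^*$ intact and uses $S$ only to cap the number $a_2$ of $M^*$-edges with both endpoints outside $V(M)$ by $|S|\leq\eps\mu$, then finishes with a direct double-count on the $M$-edges. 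The advantage of the paper's reduction is brevity and reusability (it works for any statement already known for maximal matchings); the advantage of yours is self-containment and the tighter $\eps$-dependence, since you avoid the multiplicative $(1-\eps)$ loss incurred when transferring bounds between $G'$ and $G$.
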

	\section{Algorithms on Bipartite Graphs}
	
	In this section we illustrate our techniques for the special case of bipartite graphs, for which we obtain an improved approximation ratio of $1+\frac{1}{\sqrt{2}}+\eps \approx 1.707+\eps$.
	
	\subsection{Two-Pass Streaming Algorithm}
	Here we present our deterministic 2-pass streaming algorithm. We first compute an approximately-maximal matching $M_1$ from the first pass.\footnote{We suggest to the reader to think of $M_1$ as a maximal matching (i.e., $\eps=0$). We relax $M_1$ to be an $(\eps/8)$-AMM since this will be useful in our dynamic implementation that works against adaptive adversaries.} Then, in the second pass, we compute a maximal $b$-matching $M_2$ in the graph between matched and unmatched vertices, with capacities $k$ and $\lfloor k\cdot b\rfloor$, respectively, where we set the parameters $k \in \mathbb{Z}$ and $b \in \mathbb{R}$ later.
	Finally, we output an estimate $(1-\delta)\cdot |M_1|+(\delta/k)\cdot |M_2|$ where 
	$\delta = 1/b$.
	Our pseudocode is given in \Cref{alg:bipartite}.
	
	\begin{algorithm}[H]
		\caption{Bipartite Two-Pass Streaming Algorithm}
		\label{alg:bipartite}
		\begin{algorithmic}[1]
			\State $M_1\gets$ $(\eps/8)$-AMM in $G$ computed from first pass 
			\State assign each vertex $v$ capacity $b_v=\begin{cases}
			k & v\in V(M_1) \\
			\lfloor k\cdot b\rfloor & v\not\in V(M_1)
			\end{cases}$ 
			\State $M_2\gets $ maximal $b$-matching in $G[V(M_1),\overline{V(M_1)}]$  computed from second pass
			\State \textbf{Output} $(1-\delta)\cdot |M_1|+(\delta/k) \cdot |M_2|$.
		\end{algorithmic}
	\end{algorithm}	
	\vspace{-0.2cm}
	First, we prove that the above algorithm's output estimate corresponds to a matching in $G$.
	
	\begin{obs}\label{bip-bound}
		We have that $\mu(G[M_1\cup M_2])\geq (1-\delta)\cdot |M_1|+(\delta/k)\cdot |M_2|.$
	\end{obs}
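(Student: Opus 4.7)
The strategy is to exhibit an explicit fractional matching supported on $M_1 \cup M_2$ whose total value matches the claimed lower bound, and then invoke the fact (stated in the preliminaries) that fractional matchings in bipartite graphs round up to integral matchings of at least their ceiling value.

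Concretely, the plan is to assign $x_e = 1-\delta$ for every $e \in M_1$ and $x_e = \delta/k$ for every (multi-)edge $e \in M_2$, where multiplicities are counted with their $b$-matching multiplicities. Since $M_1 \subseteq G[V(M_1), V(M_1)]$ (as a matching on matched vertices, with both endpoints in $V(M_1)$, meaning actually both endpoints are matched by $M_1$) and $M_2 \subseteq G[V(M_1), \overline{V(M_1)}]$, the two edge sets are disjoint, so these assignments are consistent. The total fractional value is then exactly
\[
\sum_e x_e \;=\; (1-\delta)\cdot |M_1| \;+\; (\delta/k)\cdot |M_2|,
\]
which is what we want to lower-bound $\mu(G[M_1 \cup M_2])$ by.

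Next I would verify the vertex degree constraints. For a vertex $v \in V(M_1)$, it is incident to exactly one edge of $M_1$ (contributing $1-\delta$) and to at most $b_v = k$ multi-edges of $M_2$ (each contributing $\delta/k$), so its fractional degree is at most $(1-\delta) + k \cdot (\delta/k) = 1$. For a vertex $u \in \overline{V(M_1)}$, it has no $M_1$-edge and at most $b_u = \lfloor k \cdot b \rfloor \leq k/\delta$ incident multi-edges of $M_2$, giving fractional degree at most $(k/\delta)\cdot(\delta/k) = 1$ (using $b = 1/\delta$). So $x$ is a valid fractional matching on $G[M_1 \cup M_2]$.

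Finally, since $G$ is bipartite, so is $G[M_1 \cup M_2]$, and by the integrality of the bipartite fractional matching polytope (as recalled in the preliminaries) there exists an integral matching in $G[M_1 \cup M_2]$ of size at least $\lceil \sum_e x_e \rceil \geq (1-\delta)|M_1| + (\delta/k)|M_2|$, which yields the claim. The only slightly delicate point is handling the multiset nature of $M_2$ to ensure $x_e \leq 1$ for every edge in the underlying simple graph; this holds because any single edge of $E$ has multiplicity at most $k$ in $M_2$ (bounded by $b_v$ at its $V(M_1)$-endpoint), so $x_e \leq k \cdot (\delta/k) = \delta \leq 1$, and no real obstacle arises.
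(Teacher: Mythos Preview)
Your proof is correct and follows essentially the same approach as the paper: exhibit the fractional matching with $x_e = 1-\delta$ on $M_1$ and $x_e = \delta/k$ on $M_2$, verify the degree constraints using the capacities $k$ and $\lfloor kb\rfloor$ together with $\delta = 1/b$, and invoke bipartite integrality. You are in fact slightly more careful than the paper in explicitly addressing the multi-edge multiplicities and the disjointness of $M_1$ and $M_2$.
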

	\begin{proof}
		Since $G$ is bipartite, by the integrality of the bipartite fractional matching polytope, to prove that $G':=G[M_1\cup M_2]$ contains a large matching witnessing the desired inequality, it suffices to prove that $G'$ contains a \emph{fractional} matching $\vec{x}$ of value $\sum_e x_e = (1-\delta)\cdot |M_1|+(\delta/k)\cdot |M_2|$. Indeed, such a fractional matching is obtained by assigning edge values
		$$x_e = \begin{cases} 1-\delta & e\in M_1 \\
		\delta/k & e\in M_2\setminus M_1.\end{cases}$$
		This is indeed a fractional matching, since each vertex $v$ has bounded fractional degree, $\sum_{e\ni v} x_e\leq 1$: every vertex $v\in V(M_1)$ has one incident $M_1$ edge and at most $k$ many incident $M_2$ edges, and so $\sum_{e\ni v}x_e\leq (1-\delta)+(\delta/k)\cdot k=1$, while every vertex $v\not\in V(M_1)$ has no incident $M_1$ edge and has at most $k\cdot b$ incident $M_2$ edges, and so $\sum_{e\ni v}x_e\leq k\cdot b\cdot (\delta/k) \leq 1$. 
	\end{proof}
	
	By \Cref{bip-bound}, \Cref{alg:bipartite} outputs a valid estimate for the matching size, $\nu\leq \mu(G)$. It remains to prove that $\nu$ provides a good approximation of $\mu(G)$. For this, we require the following.
	
	\begin{lem}\label{lm:maximal_matching_size}
		Let $M$ be a maximal $b$-matching in a bipartite graph $G = (L \cup R, E)$, with positive integral capacities $b_v = \ell$ for all $v \in L$ and $b_v = r$ for all $v \in R$. Then $$|M| \geq  \mu(G) \cdot \frac{\ell \cdot r}{\ell+r}.$$
	\end{lem}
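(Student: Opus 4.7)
The plan is a clean double-counting argument based on maximality of the $b$-matching. Let $M^*$ be a maximum matching in $G$, so $|M^*| = \mu(G)$; our goal is to show $|M| \geq |M^*| \cdot \frac{\ell r}{\ell + r}$.

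First, I will exploit maximality of $M$ as follows. Call a vertex $u \in L$ \emph{saturated} if it has $\ell$ incident (multi-)edges in $M$, and call $v \in R$ saturated if it has $r$ incident (multi-)edges in $M$. Let $L_s \subseteq L$ and $R_s \subseteq R$ be the sets of saturated vertices on each side. For any edge $e = (u,v) \in M^*$, if neither $u \in L_s$ nor $v \in R_s$, then both endpoints still have spare capacity, so $M \cup \{e\}$ would still be a valid $b$-matching, contradicting maximality of $M$. Hence every $M^*$-edge has at least one endpoint in $L_s \cup R_s$, and since $M^*$ is a matching (distinct edges use distinct endpoints), this gives the key inequality
\begin{equation*}
|M^*| \;\leq\; |L_s| + |R_s|.
\end{equation*}

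Next I bound $|L_s|$ and $|R_s|$ by counting $M$-edges twice. Since every vertex in $L_s$ has exactly $\ell$ incident $M$-edges, and the $M$-degrees on the $L$-side sum to $|M|$, we get $|M| \geq \ell \cdot |L_s|$, i.e.\ $|L_s| \leq |M|/\ell$. Symmetrically $|R_s| \leq |M|/r$. Substituting into the previous display gives
\begin{equation*}
|M^*| \;\leq\; \frac{|M|}{\ell} + \frac{|M|}{r} \;=\; |M| \cdot \frac{\ell + r}{\ell \cdot r},
\end{equation*}
and rearranging yields $|M| \geq \mu(G) \cdot \frac{\ell \cdot r}{\ell + r}$, as desired.

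There is no real obstacle here — the argument is a two-line generalization of the classical fact that a maximal matching is a $2$-approximation (recovered at $\ell = r = 1$). The only care required is to treat $M$ correctly as a multiset of edges when summing $b$-degrees, which is automatic since each multi-edge contributes one to the degree of each endpoint on both sides. Bipartiteness is used only implicitly, to cleanly split the endpoint count between $L_s$ and $R_s$; the same proof in fact works for any graph with the appropriate two-sided capacity structure.
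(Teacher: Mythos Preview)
Your proof is correct and follows essentially the same approach as the paper: identify the saturated endpoints of $M$, use maximality to conclude every edge of a maximum matching $M^*$ has a saturated endpoint, and then double-count $M$-edges via the saturated sets on each side. If anything, your formulation is slightly cleaner, since you bound $|L_s|\le |M|/\ell$ and $|R_s|\le |M|/r$ directly rather than parametrizing by the fractions $\alpha,\beta$ of $M^*$-endpoints that are saturated and then optimizing.
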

	\begin{proof}
		Fix a maximum matching $M^*$ in $G$. Next, we define the subset of matched nodes in $M^*$ that are also saturated in $M$. That is, if $d_M(v)$ is $v$'s degree in $M$, we let
		\begin{align*}
		L^*_{sat} & := \{u \in L\cap V(M^*) \mid d_M(u) = \ell\}\\
		R^*_{sat} & := \{v \in R\cap V(M^*) \mid d_M(v) = r \}.
		\end{align*}
		Let $\alpha := |L^*_{sat}|/|M^*|$ and $\beta := |R^*_{sat}|/|M^*|$ denote the fraction of $M^*$ edges with a saturated $L$ and $R$ node, respectively. Since $M$ is a maximal  $b$-matching in $G$, each edge has at least one saturated endpoint, and so $\alpha + \beta\geq 1$. 
		By double counting the edges of $M$, relying on $\alpha + \beta\geq 1$, and noting that $\alpha\cdot r + (1-\alpha)\cdot \ell$ attains its minimum of $2\frac{\ell\cdot r}{\ell+r}$ at $\alpha=\frac{r}{\ell+r}$, we obtain the claimed inequality.
		\begin{align*}
		|M| & = \frac{1}{2} \left(\sum_{v \in L} d_M(v) +  \sum_{v \in R} d_M(v) \right)   \\
		& \geq \frac{1}{2} \left(\sum_{v \in L^*_{sat}} d_M(v) + \sum_{v \in R^*_{sat}} d_M(v) \right) \\
		& = \frac{1}{2} \cdot (|M^*| \cdot \alpha \cdot \ell + |M^*| \cdot \beta \cdot r)  \\
		& \geq \frac{1}{2} \cdot \mu(G)  \cdot (\alpha \cdot \ell + (1-\alpha) \cdot r) \\
		& \geq \mu(G)  \cdot \frac{\ell \cdot r}{\ell + r}. \qedhere
		\end{align*}
	\end{proof}
	
	We are now ready to bound the approximation ratio of \Cref{alg:bipartite}.
	
	\begin{lem}\label{bip:approx}
		For any $\eps\in (0,1)$, \Cref{alg:bipartite} with $b =1+\sqrt{2}$ and $k\geq \frac{8}{\eps b}$ run on bipartite graph $G$ computes a $(1+\frac{1}{\sqrt{2}}+\eps) \approx (1.707 +\epsilon)$-approximation to $\mu(G)$.
	\end{lem}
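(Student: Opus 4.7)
The plan is to show $\nu \geq \mu(G)/\alpha - O(\epsilon)\mu(G)$ for $\alpha := 1 + 1/\sqrt{2}$, which (after rescaling $\epsilon$) yields the claimed approximation; the upper bound $\nu \leq \mu(G)$ is already given by \Cref{bip-bound}. Let $M^*$ be a maximum matching and set $\beta := |M_1|/\mu(G)$. I would partition the edges of $M^*$ based on incidence with $V(M_1)$ into three groups of sizes $X$ (one endpoint in $V(M_1)$, one in $\overline{V(M_1)}$), $Y$ (both endpoints in $V(M_1)$), and $Z$ (both endpoints in $\overline{V(M_1)}$). Any $M^*$-edge counted in $Z$ whose endpoints both survive the $(\epsilon/8)\mu(G)$-sized vertex set $S$ witnessing that $M_1$ is an $(\epsilon/8)$-AMM would contradict maximality of $M_1$ in $G\setminus S$; hence $Z \leq (\epsilon/8)\mu(G)$. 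Combined with $X+Y+Z=\mu(G)$ and the vertex-count inequality $2Y+X\leq |V(M_1)| = 2\beta\mu(G)$, I deduce $X \geq (2-2\beta)\mu(G) - O(\epsilon)\mu(G)$.

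Next I would lower-bound $|M_2|$. The $X$ edges of $M^*$ crossing between $V(M_1)$ and $\overline{V(M_1)}$ form a matching in $G[V(M_1),\overline{V(M_1)}]$, so $\mu(G[V(M_1),\overline{V(M_1)}]) \geq X$. I would apply \Cref{lm:maximal_matching_size} with capacities $\ell=k$ (on $V(M_1)$) and $r=\lfloor kb\rfloor$ (on $\overline{V(M_1)}$); the hypothesis $k \geq 8/(\epsilon b)$ ensures $\lfloor kb\rfloor \geq kb(1-\epsilon/8)$, so the floor can be absorbed into a $1-O(\epsilon)$ multiplicative factor, giving $|M_2| \geq X \cdot \frac{kb}{1+b}(1-O(\epsilon))$.

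Plugging this into the estimate output by \Cref{alg:bipartite}, and using $\delta = 1/b$ so that $(\delta/k)\cdot \frac{kb}{1+b} = \frac{1}{1+b}$, I obtain
\[
\nu \;\geq\; \left[(1-\delta)\beta \;+\; \frac{2(1-\beta)}{1+b}\right]\mu(G) \;-\; O(\epsilon)\,\mu(G).
\]
The key algebraic point is that the coefficient of $\beta$ in the bracket equals $(1-\delta)-\frac{2}{1+b} = 1 - \frac{1}{b} - \frac{2}{1+b}$, which vanishes precisely at the positive root of $b^2-2b-1=0$, namely $b=1+\sqrt{2}$. At this value the bracket collapses to the $\beta$-independent constant $\frac{2}{1+b} = 2-\sqrt{2} = 1/\alpha$. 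Thus $\nu \geq \mu(G)/\alpha - O(\epsilon)\mu(G)$, giving $\mu(G)/\nu \leq \alpha + O(\epsilon)$; rescaling $\epsilon$ yields the stated ratio.

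I expect the main obstacle to be conceptual rather than computational: bounding $|M_2|$ directly through $X$ (the size of the $M^*$-crossing matching between matched and unmatched vertices) is what lets the analysis decouple from $\beta=|M_1|/\mu(G)$. Going instead through \Cref{poor-approx-maximal-matching=many-aug-paths} and counting $3$-augmenting paths in $M_1$ would produce a weaker, $\beta$-dependent lower bound that degenerates when $|M_1|$ is neither close to $\mu(G)/2$ nor close to $\mu(G)$. The cancellation that eliminates $\beta$ from the final expression is what simultaneously pins the optimal choice to $b=1+\sqrt{2}$ and yields the tight constant $1+1/\sqrt{2}$; it also critically relies on bipartiteness, both through \Cref{lm:maximal_matching_size} and through \Cref{bip-bound} (which certifies the fractional estimate is realizable as an integral matching).
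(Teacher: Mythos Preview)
Your proposal is correct and takes essentially the same approach as the paper: both partition $M^*$ by incidence with $V(M_1)$, bound the zero-incidence part via the $(\eps/8)$-AMM property, and invoke \Cref{lm:maximal_matching_size} to lower-bound $|M_2|$ in terms of the crossing edges $|M^*_1|$ (your $X$). The only difference is bookkeeping: the paper keeps $|M^*_1|$ and $|M^*_2|$ as separate variables and bounds the estimate by $(|M^*_1|+|M^*_2|)\cdot\min\{1-1/b,\;(1-1/b)/2+1/(b+1)\}$, choosing $b=1+\sqrt{2}$ to equalize the two terms, whereas you collapse everything to the single parameter $\beta=|M_1|/\mu(G)$ and choose $b$ so that the coefficient of $\beta$ vanishes---these are two equivalent views of the same optimization.
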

	\begin{proof}
		Fix a maximum matching $M^*$ in $G$. Next, for $i\in \{0,1,2\}$, let $M^*_i$ denote the edges of $M^*$ with $i$ endpoints matched in $M_1$. By definition, and since $|M_1|=\frac{1}{2}\cdot |V(M_1)|$, we have that 
		\begin{align}\label{eqn:M1-vs-M^*_i}
			|M_1| = |M^*_2| + (1/2) \cdot |M^*_1|.
		\end{align} Furthermore, since $M_1$ is an $\epsilon'$-AMM in $G$ for $\eps'=\eps/8$, we have that $|M^*_0| \leq \epsilon' \cdot \mu(G)$, since at most $\eps'\cdot \mu(G)$ nodes of $G$ must be removed from $G$ to make $M_1$ maximal, and at least one endpoint of each $M^*_0$ edge must be removed to achieve the same effect. 
		But then, since $M^*_0,M^*_1,M^*_2$ partition $M^*$, whose cardinality is $|M^*|=\mu(G)$, this implies that 
		\begin{align}\label{eqn:M^*_i-vs-OPT}
		|M^*_1| + |M^*_2| \geq (1-\epsilon') \cdot \mu(G).
		\end{align}
		Now, by \Cref{lm:maximal_matching_size}, 
		since $M^*_1$ is a matching in graph $G':=G[V(M_1),\overline{V(M_1)}]$ and $k\geq \frac{1}{\eps' b}$, we have
		\begin{align}\label{eqn:M2-vs-M^*_i}
		|M_2| \geq  \mu(G') \cdot \frac{k \cdot \lfloor k b \rfloor }{k+\lfloor k b\rfloor } \geq  \frac{\lfloor k b\rfloor }{b+1} \cdot |M^*_1| \geq  \frac{k b(1-\eps')}{b+1} \cdot |M^*_1|.
		\end{align}
		Combining equations \eqref{eqn:M1-vs-M^*_i}, \eqref{eqn:M^*_i-vs-OPT} and \eqref{eqn:M2-vs-M^*_i}, we obtain the following lower bound on our output estimate.
		\begin{align*}
		(1-\delta) \cdot |M_1|  +  (\delta/k) \cdot |M_2| & \stackrel{\eqref{eqn:M1-vs-M^*_i}, \eqref{eqn:M2-vs-M^*_i}}{\geq} (1 - \delta) \cdot \left(|M^*_2|  + (1/2) \cdot |M^*_1| \right) + (\delta/k) \cdot \frac{k b (1-\eps')}{b+1}  \cdot |M^*_1| \\
		& =  (1 - 1/b) \cdot |M^*_2| +  \left(\frac{1-1/b}{2} + \frac{1-\eps'}{b+1} \right) \cdot |M^*_1| \\
		& \geq (|M^*_1|+|M^*_2|) \cdot \min \left\{1 - 1/b,\, \frac{1-1/b}{2} + \frac{1-\eps'}{b+1} \right\}  \\
		& \stackrel{\eqref{eqn:M^*_i-vs-OPT}}{\geq} (1-\epsilon') \cdot \mu(G) \cdot \min \left\{1 - 1/b, \, \frac{1-1/b}{2} + \frac{1-\eps'}{b+1} \right\} \\
		& \geq (1-2\eps') \cdot \mu(G) \cdot \min \left\{1 - 1/b, \, \frac{1-1/b}{2} + \frac{1}{b+1} \right\} \\
		& = (1-2\eps')\cdot (2-\sqrt{2})\cdot \mu(G), 
		\end{align*}
		where the last equality follows by our choice of $b=1+\sqrt{2}$.  
		
		Thus, combining with \Cref{bip-bound}, and using that $\eps'=\eps/8 < 1/8$, we find that the output matching size estimate $\nu:=(1-\delta)\cdot |M_1|+(\delta/k)\cdot |M_2|$ is indeed a $\left(1+\frac{1}{\sqrt{2}}+\eps\right)$-approximation.
		\begin{align*}
		\nu & \leq \mu(G)\leq \nu \cdot \left(\frac{1}{(2-\sqrt{2})\cdot (1-2\eps')}\right)\leq \nu \cdot \left(\left(1+\frac{1}{\sqrt{2}}\right)\cdot (1+4\eps')\right) \leq \nu \cdot \left(1+\frac{1}{\sqrt{2}}+\eps\right). \qedhere
		\end{align*}
	\end{proof}
	
	\begin{remark}
		A direct extension of the tight example of \cite{konrad2021two} proves that this analysis is tight, up to the exact dependence on $\epsilon$.
	\end{remark}	
	
	\Cref{bip-bound} implies a 2-pass streaming algorithm for computing a $(1+\frac{1}{\sqrt{2}}+\eps)$-approximate maximum matching: simply store $G[M_1\cup M_2]$ and output a maximum matching in this subgraph by the stream's end. The space used in the first and second passes are $\tilde{O}(n)$ and $\tilde{O}(nkb)=\tilde{O}(n/\eps)$, respectively. 
	More interestingly for our goals, we show in the next section that \Cref{bip:approx} can be used to obtain a \emph{dynamic} approximation of the same quality, in polylogarithmic update time.

	\subsection{Dynamic Algorithm}
	
In this section, we show how to (approximately) implement \Cref{alg:bipartite} in polylogarithmic update time in a dynamic setting.
	
	\begin{thm}\label{bipartite-query-algorithm}
	    Let $\eps\in (0,1)$. There exists a robust dynamic algorithm $\calA$ with worst-case update time $t_u = \tilde{O}_{\eps}(1)$ w.h.p.~and query time $t_q=\tilde{O}_{\eps}(n)$ that outputs w.h.p.~a value $\nu\in [\mu(G)/ (1+\frac{1}{\sqrt{2}}+\eps), \mu(G)]$. That is, it answers $(1+\frac{1}{\sqrt{2}}+\eps)$ approximate matching size estimate queries.
	\end{thm}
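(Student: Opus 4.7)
The plan is to dynamically simulate \Cref{alg:bipartite} in two stages: maintain $M_1$ explicitly between updates using our robust $\eps$-AMM algorithm (\Cref{eps-AMM-algo}) with parameter $\eps/8$, and defer the computation of (the size of) $M_2$ to query time, where we simulate the second-pass maximal $b$-matching computation using the sublinear-time algorithm of \Cref{sublinear}. Alongside $M_1$, we maintain an $O(1)$-time adjacency oracle for $G$ (e.g.~via hashing) and, for each vertex, a flag indicating membership in $V(M_1)$; both add only $O(1)$ worst-case overhead per update. The resulting update time is $\Otil_\eps(1)$, inheriting adversarial robustness w.h.p.~from \Cref{eps-AMM-algo}.

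At query time, we reduce maximal $b$-matching on $G[V(M_1),\overline{V(M_1)}]$ to ordinary maximal matching on an \emph{implicitly} constructed bipartite auxiliary graph $H$: replace each $v\in V(M_1)$ with $k=\Theta(1/\eps)$ copies and each $u\in \overline{V(M_1)}$ with $\lfloor kb\rfloor = \Theta(1/\eps)$ copies, and connect any copy of $v\in V(M_1)$ to any copy of $u\in\overline{V(M_1)}$ iff $(u,v)\in E(G)$. Then $H$ has $n'=O(n/\eps)$ vertices, each adjacency query in $H$ reduces in $O(1)$ time to our oracle on $G$ plus two membership tests, and any maximal matching in $H$ has the same size as a maximal $b$-matching $M_2$ in $G[V(M_1),\overline{V(M_1)}]$ with the prescribed capacities---since any $G$-edge unsaturated on both sides would yield an unmatched-unmatched pair of copies adjacent in $H$, contradicting maximality. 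Invoking \Cref{sublinear} on $H$ with parameter $\eps''=\Theta(\eps^2)$ yields, in $\Otil_{\eps''}(n')=\Otil_\eps(n)$ time, an estimate $\widetilde{|M_2|}$ satisfying $|M_2|\geq \widetilde{|M_2|}\geq |M_2|-\eps'' n' = |M_2|-O(\eps n)$, and we return $\nu:=(1-\delta)|M_1|+(\delta/k)\widetilde{|M_2|}$.

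The validity $\nu\leq \mu(G)$ is immediate from \Cref{bip-bound} combined with $\widetilde{|M_2|}\leq |M_2|$. For the approximation guarantee, \Cref{bip:approx} gives $(1-\delta)|M_1|+(\delta/k)|M_2|\geq (1-O(\eps))(2-\sqrt{2})\cdot \mu(G)$, while the additive deficit from substituting $\widetilde{|M_2|}$ for $|M_2|$ is $(\delta/k)\cdot O(\eps n) = O(\eps n/(kb)) = O(\eps^2 n)$. Since this algorithm is meant to be plugged into \Cref{reduction}, we may assume the promise $\mu(G)\geq \eps n$; under this promise the deficit is $O(\eps)\cdot \mu(G)$, so after rescaling $\eps$ by a constant we obtain $\mu(G)\leq (1+\tfrac{1}{\sqrt{2}}+\eps)\cdot \nu$.

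The main obstacle I expect is exactly this conversion of \Cref{sublinear}'s \emph{additive}-error estimate on the $\Theta(1/\eps)$-times larger graph $H$ into a multiplicative guarantee on $\nu$ within query time $\Otil_\eps(n)$. The remedy is twofold: choosing $\eps''=\Theta(\eps^2)$ brings the additive error down to $O(\eps n)$ at the price of only a $\poly(1/\eps)$ slowdown (absorbed by $\Otil_\eps(\cdot)$), and the promise $\mu(G)=\Omega(\eps n)$ inherited from \Cref{reduction} converts this additive error into the desired multiplicative one. Adversarial robustness is inherited for free: the only component interacting with the edge-update stream is $M_1$ (maintained robustly by \Cref{eps-AMM-algo}), and since queries reveal only a scalar matching-size estimate, an adaptive adversary learns no $M_1$-specific structure that could be used to degrade \Cref{eps-AMM-algo}'s guarantees.
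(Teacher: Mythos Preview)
Your proposal is correct and follows essentially the same approach as the paper: maintain an $(\Theta(\eps))$-AMM $M_1$ robustly via \Cref{eps-AMM-algo}, and at query time simulate the second-pass maximal $b$-matching by blowing up vertices into $k$ (resp.\ $\lfloor kb\rfloor$) copies and invoking \Cref{sublinear} on the resulting $O(n/\eps)$-vertex auxiliary graph, then combine with \Cref{bip-bound} and \Cref{bip:approx}. The only notable difference is bookkeeping: you use precision $\eps''=\Theta(\eps^2)$ where the paper uses $\Theta(\eps^3)$, and you make the promise $\mu(G)\ge \eps n$ (inherited from \Cref{reduction}) explicit, whereas the paper uses it implicitly in the line $(\eps')^2\cdot 16n \le \eps'\cdot \mu(G)$---your version is arguably cleaner on this point.
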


\begin{proof} 
The dynamic algorithm $\calA$ is based on  \Cref{alg:bipartite}. Let $\eps'=\eps/12$. Throughout the updates, Algorithm $\cal{A}$ simply maintains an $(\eps'/8)$-AMM in $G$, denoted by $M_1$, invoking \Cref{eps-AMM-algo}. This immediately implies the desired update time of $t_u = \tilde{O}_{\epsilon}(1)$.

We now describe how Algorithm $\calA$ responds to a query about the maximum matching size. To answer this query, the algorithm considers a new auxiliary graph $G^* = (V^*, E^*)$, which is defined as follows. Set $b = 1+\sqrt{2}$. For each  $u \in V(M_1)$, create $k:=\lceil \frac{8}{\epsilon' b} \rceil$ copies of the node $u$ in $G^*$. Next, for each  $v \in \overline{V(M_1)}$, create $\lfloor k \cdot b \rfloor$ copies of the node $v$ in $G^*$.  Finally, for every edge $(u, v) \in G\big[ V(M_1), \overline{V(M_1)}\big]$, create an edge in $G^*$ between every pair $(u^*, v^*)$ of copies of the nodes $u, v$. Note that there is a one-to-one mapping between maximal matchings in the new graph $G^*$ and maximal $b$-matchings in $G\big[ V(M_1), \overline{V(M_1)} \big]$.

We emphasize that our dynamic algorithm $\calA$ does not explicitly maintain the  auxiliary graph $G^*$.  When we receive a query about the maximum matching size in $G$, we  explicitly construct only the node-set $V^*$ of $G^*$, based on the matching $M_1$. This takes only $O_{\epsilon}(n)$ time since $|V^*| = O_{\epsilon}(n)$. We can, however,  access the edges of $G^*$ by using adjacency matrix queries: there exists an edge $(u^*, v^*)$ in $G^*$ iff there exists an edge $(u, v)$ between the corresponding nodes in $G$.

At this point, we invoke \Cref{sublinear} with $G = G^*$ and precision parameter $e'' = (\epsilon')^3$. This gives us a value $\psi$, which is an estimate of $|M_2|$.  We now return $\nu := (1-1/b)\cdot |M_1| + (1/bk)\cdot \psi$ as our estimate of $\mu(G)$. Clearly, this entire procedure for answering a query takes $\tilde{O}_{\epsilon}(n)$ time. It now remains to analyze the approximation ratio. 		
Towards this end, we again appeal to \Cref{sublinear}. This proposition asserts that the value $\psi$ satisfies
		\begin{align}\label{eqn:psi}
			|M_2|\geq \psi \geq |M_2|-\epsilon'' \cdot |V^*|  \geq |M_2|-(\epsilon')^2 \cdot 16 n \geq |M_2| - \epsilon'\cdot \mu(G),
		\end{align}

		Therefore, our estimate $\nu$ satisfies that $\nu'\geq \nu\geq \nu'-\epsilon\cdot \mu(G)$, where
		\begin{align}\label{eqn:nu'}
			\nu' := (1-1/b)\cdot |M_1| + (1/bk)\cdot |M_2|.
		\end{align} 
		Now, by \Cref{bip:approx} and our choice of $k=\lceil \frac{8}{\epsilon' b}\rceil$ and $b=1+\sqrt{2}$, we have that 
		$\nu'\leq \mu(G)\leq \nu'\cdot \big(1+\frac{1}{\sqrt{2}}+\epsilon\big)$, from which we obtain that $\nu\leq \nu' \leq \mu(G)$ and moreover
		\begin{align*}
			\mu(G)\leq \nu'\cdot \left(1+\frac{1}{\sqrt{2}} +\epsilon'\right) \leq (\nu + \eps'\cdot \mu(G))\cdot \left(1+\frac{1}{\sqrt{2}} +\epsilon'\right) \leq \nu \cdot \left(1+\frac{1}{\sqrt{2}} +\epsilon'\right) + 3\eps'\cdot \mu(G).
		\end{align*}
	    Rearranging terms, and using that $\eps'=\eps/16\leq 1/12$, we have that \begin{align*}
	        \nu\leq \mu(G)\leq \nu\cdot \left(1+\frac{1}{\sqrt{2}}+\eps'\right)/(1-3\eps)\leq \nu\cdot \left(1+\frac{1}{\sqrt{2}}+\eps\right)\cdot (1+4\eps)\leq \nu\cdot \left(1+\frac{1}{\sqrt{2}}+16\eps'\right).
	    \end{align*}
	    That is, since $\eps'=\eps/16$, the estimate $\nu$ output after a query is  $\big(1+\frac{1}{\sqrt{2}}+\eps\big)$-approximate, w.h.p.
	\end{proof}
	
	Combining \Cref{bipartite-query-algorithm} and \Cref{reduction}, we obtain our result for bipartite graphs.
	
	\begin{thm}\label{bip-theorem}
		For any $\epsilon\in (0,1)$, there exists a $(1+\frac{1}{\sqrt{2}}+\eps)\approx (1.707+\eps)$-approximate randomized dynamic bipartite matching size algorithm with $\tilde{O}_{\eps}(1)$-update time.
		The algorithm's approximation ratio holds w.h.p.~against an adaptive adversary.
	\end{thm}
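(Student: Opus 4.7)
The plan is to directly combine \Cref{bipartite-query-algorithm} with the reduction of \Cref{reduction}. First, I would fix a target approximation $1+\frac{1}{\sqrt{2}}+\eps$ and choose a smaller precision parameter $\eps' = \Theta(\eps)$ small enough that the $O(\eps')$ additive blowup in the approximation ratio guaranteed by \Cref{reduction} is absorbed into $\eps$. In particular I would pick $\eps'$ so that $\left(1+\frac{1}{\sqrt{2}}+\eps'\right) + C\eps' \leq 1+\frac{1}{\sqrt{2}}+\eps$ for the universal constant $C$ from \Cref{reduction}.

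Next, I would instantiate \Cref{bipartite-query-algorithm} with precision $\eps'$ to obtain a robust dynamic algorithm $\calA$ with worst-case update time $t_u = \tilde{O}_{\eps'}(1) = \tilde{O}_{\eps}(1)$ and query time $t_q = \tilde{O}_{\eps'}(n) = \tilde{O}_{\eps}(n)$ that, w.h.p., answers size queries with a $(1+\frac{1}{\sqrt{2}}+\eps')$-approximate estimate. In particular, $\calA$ satisfies the hypothesis of \Cref{reduction} (the hypothesis is only required when $\mu(G) \geq \eps \cdot n$, which is trivially implied by the unconditional guarantee of $\calA$).

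Applying \Cref{reduction} to $\calA$ with parameter $\eps'$ then yields a dynamic algorithm $\calA'$ on $G$ that always maintains a $\left(1+\frac{1}{\sqrt{2}}+\eps'+O(\eps')\right)$-approximate size estimate in worst-case update time
\[
\tilde{O}_{\eps'}\!\left(t_u + t_q/n\right) \;=\; \tilde{O}_{\eps}(1 + n/n) \;=\; \tilde{O}_{\eps}(1).
\]
By our choice of $\eps'$, the approximation ratio is at most $1+\frac{1}{\sqrt{2}}+\eps$. Since \Cref{reduction} preserves both worst-case update time bounds and robustness against adaptive adversaries, and \Cref{bipartite-query-algorithm} supplies both of these properties for $\calA$, the resulting algorithm $\calA'$ inherits them as well, completing the proof.

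There is no real obstacle here: the theorem is essentially a bookkeeping corollary of the two results cited, once one is careful to set $\eps'$ slightly smaller than $\eps$ so that both the approximation loss from \Cref{reduction} and the loss within \Cref{bipartite-query-algorithm} fit under the final target $\eps$.
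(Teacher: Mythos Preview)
Your proposal is correct and follows essentially the same approach as the paper, which simply states that the theorem follows by combining \Cref{bipartite-query-algorithm} and \Cref{reduction}. Your version just makes explicit the $\eps'$-reparameterization needed to absorb the $O(\eps')$ loss from \Cref{reduction}, which the paper leaves implicit.
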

	
	\begin{remark}
	By the same approach as the recent dynamic weighted matching framework of \cite{bernstein2021framework} restricted to bipartite graphs, \Cref{bip-theorem} implies a $\big(1+\frac{1}{\sqrt{2}}+\eps)$ robust approximation for \emph{weighted} bipartite matching with the same update time, up to an exponential blowup in the dependence on $\eps$.\footnote{This extension is not obtained by using the framework of \cite{bernstein2021framework} directly, as the latter requires explicit dynamic matchings. Nonetheless, their arguments can be extended to the value version of the problem.}
 %\david{write appendix with more details, since this is not quite black box. Rough idea: can combine weighted matchings in different ranges with $1-\eps$ factor loss if take $(1/\eps)^{1/\eps}$ range subgraphs...}
	\end{remark}
	\section{Algorithms on General Graphs}
	\label{sec:algo-general}
	
	In this section we present our main result: a robust dynamic algorithm 
	maintaining a $(2-\Omega(1))$-approximation to the size of the maximum matching in a general graph in worst-case $\tilde{O}_{\epsilon}(1)$ update time. As with our algorithm for bipartite graphs, we start with a two-pass semi-streaming algorithm in \Cref{sec:general-streaming}, and then show how to approximately implement it dynamically in \Cref{subsec:dyn:general}. Finally, in  \Cref{sec:general-tradeoffs} we show that our approach allows us to improve the approximation of any algorithm with approximation ratio in the range $(1.5,2]$.
	
	\subsection{Two-Pass Streaming Algorithm}\label{sec:general-streaming}
	\label{sec:algo-general-streaming}
	
    The key challenge in extending \Cref{alg:bipartite} and its analysis to non-bipartite graphs is its reliance on the integrality of the fractional matching polytope in bipartite graphs. 
	This allowed us to focus on proving the existence of a large fractional matching, which guarantees the existence of a large integral matching of (at least) the same size.
	For general graphs this argument fails, and so instead we search for length-three augmenting paths ($3$-augmenting paths, for short) with respect to our first matching, $M_1$, by computing some large $b$-matching $M_2$ in the edge set $V(M_1)\times \overline{V(M_1)}$. 
	The main difficulty with this approach in general graphs is that both the endpoints of an edge $(u, v) \in M_1$  might  get matched (in $M_2$) to the same node $w$, and the resulting triangle $w - u - v - w$ does not help us in any way to create a $3$-augmenting path involving the edge $(u, v) \in M_1$.

	We overcome this difficulty   using \emph{random bipartitions} (see \Cref{alg:general-streaming}). As before, in the first pass we compute an $(\epsilon/4)$-AMM $M_1$ in the input graph $G$.\footnote{As with the bipartite \Cref{alg:bipartite}, we suggest the reader think of $M_1$ as being maximal for now (i.e., $\eps=0$).}
    Next, we define the following random bipartition $(L, R)$ of the node-set $V$. 
    For each matched edge $(u, v) \in M_1$, we arbitrarily include one of its endpoints in $L$ and the other in $R$. Next, for each unmatched node $v \in \overline{V(M_1)}$, we include the node $v$ in into one of $L$ and $R$ chosen uniformly at random. Subsequently, we assign a capacity $b_v := 1$ to all nodes $v \in V(M_1)$ and a capacity $b_v := b$ to all nodes $v \in \overline{V(M_1)}$, for some integer $b$ to be chosen later. Let $B=(V,E_2)$ be the  bipartite subgraph spanned by edges with a single node matched in $M_1$ and endpoints in opposite sides, i.e., 
    $$E_2:=\{(u,v)\in E \mid u\in V(M_1), \, v\in \overline{V(M_1)},\, |\{u,v\}\cap L|=|\{u,v\}\cap R|=1\}).$$
    In the second pass, we compute a maximal $b$-matching $M_2$ in $B$ w.r.t.~the capacities $\{ b_v \}$.  Finally, we return the maximum matching in the subgraph $G[M_1 \cup M_2]$.

	\begin{algorithm}[H]
		\caption{General Two-Pass Streaming Algorithm}
		\label{alg:general-streaming}
		\begin{algorithmic}[1]
			\State $M_1\gets$ $(\eps/4)$-AMM  computed from first pass
			\For{edge $(u,v)\in M_1$}
			\State  $s(u) \gets \ell$ and $s(v) \gets r$
			\EndFor
			\For{vertex $w\in \overline{V(M_1)}$}
			\State $s(w)\sim \Uni \{\ell,r\}$
			\EndFor
			\State let $L\gets \{v \mid s(v)=\ell\}$ and $R\gets \{v \mid s(v)=r\}$
			\State Assign each vertex $v$ capacity $b_v=\begin{cases}
			1 & v\in V(M_1) \\
			b & v\in \overline{V(M_1)}.
			\end{cases}$
			\State let $B\gets (V,E_2)$, for $E_2=\{(u,v)\in E \mid u\in V(M_1), v\in \overline{V(M_1)}, |\{u,v\}\cap L|=|\{u,v\}\cap R|=1\})$.
			\State $M_2\gets $ maximal $b$-matching in $B$ computed from second pass 
			\State Output maximum matching in $G[M_1\cup M_2]$
		\end{algorithmic}
	\end{algorithm}	
    
    \noindent\textbf{Intuition:} The intuition behind \Cref{alg:general-streaming} is as follows: if $M_1$ is only roughly $2$-approximate, then many $3$-augmenting paths exist in $G$ w.r.t.~$M_1$, by \Cref{poor-approx-maximal-matching=many-aug-paths}. Now, a constant fraction of these (specifically, a quarter) ``survive'' the random bipartition and have their extreme edges belong to $B$. Now, for each augmenting path $u'-u-v-v'$ that survives, either an augmenting path containing $u-v$ is found in $M_1\cup M_2$, or at least one of $u'$ or $v'$ is matched $b$ times to nodes in $V(M_1)$ other than $u$ or $v$. Next, since nodes in $V(M_1)$ can only be matched once in $M_2$, this limits the number of paths where $u$ and $v$ do not participate in an augmenting path. This implies a large number of augmenting paths in $M_1\cup M_2$ that are disjoint in their $V(M_1)$ nodes. 
    Finally, since each node in $\overline{V(M_1)}$ belongs to at most $b$ such paths, some large $\Omega(1/b)$ fraction of these augmenting paths are also disjoint in their $\overline{V(M_1)}$ nodes, from which we conclude that $M_1\cup M_2$ contains a large set of node-disjoint augmenting paths w.r.t.~$M_1$, and that $G[M_1\cup M_2]$ contains a large matching. 
    \medskip 
    
    We now substantiate the above intuition. The first lemma in this vein asserts that $M_1\cup M_2$ contains many $3$-augmenting paths w.r.t.~$M_1$ (assuming $M_1$ is not already near maximum in size).
    
    \begin{restatable}{lem}{augtighter}\label{general:many-aug-paths-tighter}
		If $|M_1|=\left(\frac{1}{2}+c\right)\cdot \mu(G)$, then 
		$G[M_1\cup M_2]$ contains a set $\calP$ of $3$-augmenting paths w.r.t.~$M_1$ that are disjoint in their $V(M_1)$ nodes, with expected cardinality at least 
		$$\E[|\calP|]\geq \left(\frac{b}{b+1}\right)\cdot \left( \frac{1}{4}\cdot \left(\frac{1}{2}-3c\right) -  \frac{1}{b} \cdot \left(\frac{1}{2}+c\right) -\frac{7\eps}{8} \right)\cdot \mu(G).$$
	\end{restatable}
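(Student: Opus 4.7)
The plan proceeds in three stages: extract many short augmenting paths in $G$, use the random bipartition to keep a constant fraction, then use maximality of $M_2$ to convert most of these into augmenting paths inside $G[M_1 \cup M_2]$.

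First, I will invoke \Cref{poor-approx-maximal-matching=many-aug-paths} with AMM parameter $\eps/4$ (since $M_1$ is an $(\eps/4)$-AMM) to obtain a collection $\mathcal{P}_0$ of at least $(\tfrac{1}{2}-3c-\tfrac{7\eps}{8})\mu(G)$ node-disjoint $3$-augmenting paths of the form $p = u'-u-v-v'$ in $G$ w.r.t.~$M_1$. I then analyse the random bipartition: since $(u,v)\in M_1$ forces $u,v$ onto opposite sides of $L,R$, while $u',v'$ receive independent uniformly random sides, each $p$ ``survives'' --- meaning both extreme edges $(u',u), (v,v')$ lie in $B$ --- with probability exactly $1/4$. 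Writing $\mathcal{P}_1 \subseteq \mathcal{P}_0$ for the surviving paths yields $\E[|\mathcal{P}_1|] \geq \tfrac{1}{4}|\mathcal{P}_0|$.

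Next, I will classify each $p\in \mathcal{P}_1$ as \emph{good} if both $u$ and $v$ are $M_2$-saturated, and \emph{bad} otherwise. For a good path, since $M_2\subseteq B$ places $u,v$ on opposite sides, the unique $M_2$-partners $u^*,v^* \in \overline{V(M_1)}$ of $u,v$ also lie on opposite sides and are therefore distinct; consequently $u^*-u-v-v^*$ is a $3$-augmenting path in $G[M_1\cup M_2]$ through the $V(M_1)$-nodes $u,v$. By node-disjointness of $\mathcal{P}_0$, distinct good paths yield $V(M_1)$-disjoint augmenting paths, giving $|\mathcal{P}|\geq |\mathcal{P}_1|-|\mathcal{P}_1^{\text{bad}}|$. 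For any bad path, maximality of $M_2$ in $B$ forces the $\overline{V(M_1)}$-endpoint ($u'$ or $v'$) adjacent to the unsaturated side of $(u,v)$ to be $b$-saturated.

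The heart of the argument is bounding $|\mathcal{P}_1^{\text{bad}}|$ via a two-sided accounting that crucially uses node-disjointness of $\mathcal{P}_0$. Each bad path yields (i) a distinct unsaturated $V(M_1)$-endpoint, so the number $|M_2|$ of saturated $V(M_1)$-nodes satisfies $|M_2|\leq 2|M_1|-|\mathcal{P}_1^{\text{bad}}|$; and (ii) a distinct $b$-saturated $\overline{V(M_1)}$-endpoint, each contributing $b$ edges to $M_2$, so $b\cdot|\mathcal{P}_1^{\text{bad}}|\leq |M_2|$. Combining these two inequalities already gives $(b+1)|\mathcal{P}_1^{\text{bad}}|\leq 2|M_1|$, producing the characteristic $(b+1)$ denominator. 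A refinement that splits $\mathcal{P}_1^{\text{bad}}$ into paths with one versus two unsaturated $V(M_1)$-endpoints, and directly bounds $e_2$ via the identity $|M_2|=2e_2+e_1$ where $e_i$ counts $M_1$-edges with exactly $i$ endpoints in $V(M_2)$, yields the sharper factor $\tfrac{b}{b+1}$ in the statement. Substituting $|M_1|=(\tfrac{1}{2}+c)\mu(G)$ and the lower bound on $|\mathcal{P}_0|$, then taking expectation, gives the claimed bound.

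The main obstacle is this last refinement: the straightforward combination of (i) and (ii) is slightly looser than what is stated, and tightening to the exact $\tfrac{b}{b+1}$-scaled form requires combining the two-sided capacity counting with the internal accounting $|M_2|=2e_2+e_1$, rather than merely bounding $|\mathcal{P}|$ by $|\mathcal{P}_1|-|\mathcal{P}_1^{\text{bad}}|$. Handling the two types of bad paths (one versus two unsaturated $V(M_1)$-endpoints) on an equal footing in this accounting is the main technical subtlety.
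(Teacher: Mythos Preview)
Your three-stage plan mirrors the paper's approach, and your first two stages are fine. The gap is in the refinement. Your initial accounting gives $|\calP_1^{\text{bad}}|\le \tfrac{2|M_1|}{b+1}$, hence $\E[|\calP|]\ge \tfrac14|\calP_0|-\tfrac{2}{b+1}|M_1|$, which is strictly weaker than the stated bound (the penalty term is $\tfrac{2}{b+1}|M_1|$ rather than $\tfrac{1}{b+1}|M_1|$, and this loss is not compensated by your larger leading coefficient). You correctly flag that a refinement using $|M_2|=2e_2+e_1$ is needed, but the fix you describe---splitting $\calP_1^{\text{bad}}$ by whether one or two inner endpoints are unsaturated---is not the right lever.

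The missing idea is that you should \emph{enlarge} $\calP$ rather than merely refine the bound on $\calP_1^{\text{bad}}$. Take $\calP$ to be the set of \emph{all} $M_1$-edges with both endpoints matched in $M_2$, so that $|\calP|=e_2$. (Every such edge yields a valid $3$-augmenting path in $G[M_1\cup M_2]$: since $u,v$ lie on opposite sides of the bipartition and $M_2\subseteq B$, their $M_2$-partners are on opposite sides and hence distinct; and these paths are trivially $V(M_1)$-disjoint.) This is exactly what the paper does: it counts not only $S_2$ (good surviving paths) but also $D_2$ (non-surviving paths whose inner edge nonetheless has both endpoints matched in $M_2$). Once $|\calP|=e_2$, the accounting is one line: from the outer side $b(|S|-|S_2|)\le |M_2|$, from the inner side $|M_2|=2e_2+e_1\le e_2+|M_1|$, and since $|S_2|\le e_2$ you get $b|S|-be_2\le e_2+|M_1|$, i.e.\ $e_2\ge \tfrac{b}{b+1}\bigl(|S|-\tfrac{1}{b}|M_1|\bigr)$. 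Taking expectation over $|S|$ gives the claimed bound directly. Without enlarging $\calP$ beyond the surviving paths, no amount of case-splitting on $\calP_1^{\text{bad}}$ will recover the $\tfrac{b}{b+1}$ factor, because you are simply throwing away the $D_2$ (and more generally $e_2\setminus \calP_1^{\text{good}}$) contributions.
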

	
	As the proof of \Cref{general:many-aug-paths-tighter} is a little calculation heavy, we defer its proof to \Cref{appendix:general}, and instead prove the following slightly weaker but simpler bound here.
 
	\begin{lem}\label{general:many-aug-paths-simple}
		If $|M_1|=\left(\frac{1}{2}+c\right)\cdot \mu(G)$, then 
		$G[M_1\cup M_2]$ contains a set $\calP$ of $3$-augmenting paths w.r.t.~$M_1$ that are disjoint in their $V(M_1)$ nodes, with expected cardinality at least 
		$$\E[|\calP|]\geq \left(\frac{1}{4}\cdot\left(\frac{1}{2}-3c-\frac{7\eps}{2}\right) - \frac{2}{b}\cdot\left(\frac{1}{2}+c\right)\right)\cdot \mu(G).$$
	\end{lem}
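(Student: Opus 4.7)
The plan is to start from the abundance of short augmenting paths guaranteed by Proposition~\ref{poor-approx-maximal-matching=many-aug-paths}, track what fraction of them survives the random bipartition into $B$, and then use the maximality of $M_2$ to convert most of the survivors into actual $3$-augmenting paths in $G[M_1\cup M_2]$. The target bound is linear in both $(\tfrac{1}{2}-3c-\tfrac{7\eps}{2})\mu(G)$ (the supply of $3$-augmenting paths) and in $-\tfrac{1}{b}(\tfrac{1}{2}+c)\mu(G)$ (a penalty for paths ``blocked'' by saturated unmatched endpoints), which guides the whole analysis.

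Concretely, I would first apply Proposition~\ref{poor-approx-maximal-matching=many-aug-paths} to the $(\eps/4)$-AMM $M_1$ with parameter $c$ to get a collection $\mathcal{Q}$ of at least $(\frac{1}{2}-3c-\frac{7\eps}{2})\mu(G)$ node-disjoint $3$-augmenting paths of the form $u'\!-\!u\!-\!v\!-\!v'$ with $(u,v)\in M_1$ and $u',v'\in\overline{V(M_1)}$. Call such a path \emph{surviving} if both extreme edges $(u',u)$ and $(v,v')$ lie in $B$. Since the endpoints of each $M_1$-edge are split deterministically between $L$ and $R$, and each of $u',v'\in\overline{V(M_1)}$ is then assigned to $L$ or $R$ independently and uniformly, the probability that a given path in $\mathcal{Q}$ survives is exactly $1/4$. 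Linearity of expectation gives an expected number of surviving paths of at least $|\mathcal{Q}|/4$.

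Next, I would turn surviving paths into paths in $M_1\cup M_2$. By maximality of $M_2$ as a $b$-matching in $B$, together with $b_u=b_v=1$, the edge $(u',u)\in B$ forces either $u$ to be matched in $M_2$ or $u'$ to be saturated (i.e., incident to $b$ edges of $M_2$); the same dichotomy holds for $(v,v')$. Call a surviving path \emph{completed} if both $u$ and $v$ are matched in $M_2$, with $M_2$-partners $u''$ and $v''$. Let $\calP$ consist of the corresponding paths $u''\!-\!u\!-\!v\!-\!v''$; these are pairwise disjoint on $V(M_1)$ because each uses a distinct $M_1$-edge as its middle. I expect the main subtlety to be ruling out the degenerate case $u''=v''$, which would yield a triangle instead of a $3$-augmenting path. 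This is precisely where the random bipartition earns its keep: since $M_2\subseteq B$ is bipartite with $u$ and $v$ on opposite sides, $u$'s partner $u''$ and $v$'s partner $v''$ must lie on opposite sides and hence be distinct, so the triangle case cannot arise.

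Finally, I would bound the deficit, i.e., the number of surviving paths that are not completed. Each such path has $u'$ or $v'$ saturated in $M_2$, and since the paths in $\mathcal{Q}$ are node-disjoint, each saturated $\overline{V(M_1)}$-vertex is an endpoint of at most one path in $\mathcal{Q}$. Hence the number of non-completed surviving paths is at most the number of saturated vertices of $\overline{V(M_1)}$ in $M_2$, which is at most $|M_2|/b$ by a straightforward edge count. Because every $V(M_1)$-vertex has capacity one, $|M_2|\leq |V(M_1)|=2|M_1|=(1+2c)\mu(G)$, giving a deterministic upper bound of $\tfrac{2}{b}(\tfrac{1}{2}+c)\mu(G)$ on the non-completed surviving paths. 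Subtracting this from the expected number of surviving paths established above yields the claimed bound on $\E[|\calP|]$.
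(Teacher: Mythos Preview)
Your proposal is correct and follows essentially the same approach as the paper's proof: both start from the node-disjoint $3$-augmenting paths guaranteed by Proposition~\ref{poor-approx-maximal-matching=many-aug-paths}, observe that each survives the random bipartition with probability $1/4$, and then bound the number of surviving paths that fail to ``complete'' by $2|M_1|/b$ via the capacity-$1$ constraint on $V(M_1)$. Your write-up even makes explicit the argument that $u''\neq v''$ (ruling out triangles using that $u$ and $v$ lie on opposite sides of the bipartition), which the paper's proof asserts but does not spell out.
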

	\begin{proof}
		Fix a maximum set of node-disjoint $3$-augmenting paths in $G$ w.r.t.~$M_1$, denoted by $\calP^*$. 
		By \Cref{poor-approx-maximal-matching=many-aug-paths}, we have $|\calP^*|\geq \left(\frac{1}{2}-3c-\frac{7\eps}{2}\right)\cdot \mu(G)$. 
		Next, let $S\subseteq \calP^*$ be the paths $u'-u-v-v'$ who ``survive'' the bipartition, in that $(u,u'),(v,v')\in E_2$.
		By construction, each path in $\calP^*$ survives with probability exactly $\frac{1}{4}$. Therefore, $\E[|S|]\geq \frac{1}{4}\cdot \left(\frac{1}{2}-3c-\frac{7\eps}{2}\right)\cdot \mu(G)$. 
		
		Now, for each survived path $u'-u-v-v'\in S$, either both $u$ and $v$ are matched (exactly once) in $M_2$, thus contributing an augmenting path, or at least one of $u'$ and $v'$ must be matched in $M_2$ to $b$ distinct nodes in $V(M_1)$. 
		But since each vertex in $V(M_1)$ is matched at most once in $M_2$, there are thus at most $|V(M_1)|/b=2|M_1|/b$ paths in $S$ whose middle edges do not belong to a $3$-augmenting path in $M_1\cup M_2$.
		Therefore, there are at least
		$|S| - 2|M_1|/b$ many edges $(u,v)$ in $M_1$ whose endpoints are both matched in $M_2$ to some (different) nodes $u''$ and $v''$, respectively. Each such edge contributes an augmenting path to a set $\calP$ of the desired size,
		\begin{align*}
		\E[|\calP|] = \E[|S|] - \frac{2|M_1|}{b} & \geq  \left(\frac{1}{4}\cdot\left(\frac{1}{2}-3c-\frac{7\eps}{2}\right) - \frac{2}{b}\cdot\left(\frac{1}{2}+c\right)\right)\cdot \mu(G). \qedhere 
		\end{align*}
	\end{proof}
	
	The preceding two lemmas imply the existence of a multitude of $3$-augmenting paths that are disjoint in their $V(M_1)$ nodes. 
	We now use these augmenting paths to prove the existence of numerous (though possibly fewer) augmenting paths that are disjoint in \emph{all} their nodes.
	Since each of the two $\overline{V(M_1)}$ nodes of a $3$-augmenting path belong to at most $b$ such paths, it is easy to find some $1/(2b-1)$ fraction of these augmenting paths that are disjoint in all their nodes.
	The following lemma, resembling \cite[Lemma 6]{esfandiari2016finding}, increases this fraction to $1/b$.

	\begin{lem}\label{many-disjoint-tighter}
	    Let $\calP$ be a set of $3$-augmenting paths w.r.t.~$M_1$ in $G[M_1 \cup M_2]$ such that each $V(M_1)$ (resp. $\overline{V(M_1)}$) node belongs to at most one (resp., $b$) paths in $\calP$. Then $\calP$ contains a set of node-disjoint $3$-augmenting paths $\calP'\subseteq \calP$ of cardinality at least $|\calP'|\geq \frac{1}{b}\cdot |\calP|$.
	\end{lem}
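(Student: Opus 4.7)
The plan is to reduce the problem to finding a matching in an auxiliary bipartite multigraph of maximum degree $b$, and then invoke König's edge-coloring theorem for bipartite graphs to extract a large matching.

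First, I would construct an auxiliary multigraph $H$ on the vertex set $\overline{V(M_1)}$: for each $3$-augmenting path $P = u' - u - v - v' \in \calP$ (where $(u,v)\in M_1$ and $(u',u),(v,v') \in M_2$, with $u',v' \in \overline{V(M_1)}$), include the edge $(u', v')$ in $H$. Two paths in $\calP$ are node-disjoint if and only if their corresponding edges in $H$ share no endpoint, since the paths are already assumed disjoint in their $V(M_1)$ nodes. Hence a matching in $H$ of size $k$ yields $k$ node-disjoint augmenting paths in $\calP'\subseteq \calP$.

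Next, I would observe two key properties of $H$. The maximum degree of $H$ is at most $b$, since by hypothesis every $v \in \overline{V(M_1)}$ belongs to at most $b$ paths in $\calP$, hence lies on at most $b$ edges of $H$. Crucially, $H$ is bipartite with bipartition $(L \cap \overline{V(M_1)}, R \cap \overline{V(M_1)})$: the matching $M_2$ is supported on $E_2$, which by construction only contains edges crossing $(L,R)$, so for any path $u' - u - v - v' \in \calP$ the endpoints $u,u'$ lie on opposite sides and likewise $v,v'$; combined with the fact that $(u,v)\in M_1$ has one endpoint in each side, this forces $u'$ and $v'$ onto opposite sides of the bipartition.

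Finally, I would apply König's edge coloring theorem: a bipartite multigraph of maximum degree $\Delta$ admits a proper edge coloring using exactly $\Delta$ colors, so its edge set decomposes into $\Delta$ matchings. Applied to $H$ with $\Delta \leq b$, the largest color class has size at least $|E(H)|/b = |\calP|/b$, and this matching pulls back to the desired node-disjoint subfamily $\calP'$.

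The only subtle point (and the reason the bound $1/b$ rather than the naive greedy $1/(2b-1)$ is achievable) is the bipartiteness of $H$; without it, Vizing's theorem would only yield $|E(H)|/(b+1)$ via $(b+1)$-edge-coloring of the general multigraph. Thus the entire argument hinges on the structural guarantee, arranged by our random bipartition in \Cref{alg:general-streaming}, that $M_2$ lies inside the bipartite subgraph $B$.
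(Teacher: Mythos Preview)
Your proposal is correct and follows essentially the same approach as the paper: both build the auxiliary (multi)graph on $\overline{V(M_1)}$ by contracting each path to an edge between its outer endpoints, argue this graph is bipartite via the $(L,R)$ bipartition, and then extract a matching of size at least $|\calP|/b$. The only cosmetic difference is the tool used for the last step: the paper invokes the integrality of the bipartite fractional matching polytope (assigning $1/b$ to every edge), whereas you invoke König's edge-coloring theorem; both are standard and equivalent ways to certify a matching of size $|E|/\Delta$ in a bipartite graph of maximum degree $\Delta$.
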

	\begin{proof}
	    Consider the graph $G'=(\overline{V(M_1)},E')$ obtained by replacing each path $u'-u-v-v'$ in $\calP$ with a single edge $u'-v'$. This graph $G'$ is bipartite, by virtue of our random bipartition of $G$. Now, since this bipartite graph $G'$ has maximum degree $b$, it contains a matching of size at least $|E'|/b=|\calP|/b$: the fractional matching assigning values $1/b$ to each edge has value $|E'|/b$, and so $G'$ contains an \emph{integral} matching of at least the same value. On the other hand, disjoint edges in $G'$ have a one-to-one mapping to node-disjoint paths in $\calP$, since each node in $V(M_1)$ belongs to at most one such path. Thus, the maximum matching in $G'$ corresponds to a collection $\calP'\subseteq\calP$ of node-disjoint augmenting paths in $G[M_{1}\cup M_2]$ w.r.t.~$M_{1}$, of cardinality at least $|\calP'|\geq|\calP|/b$. 
    \end{proof}
	
	The three preceding lemmas imply that $G[M_1\cup M_2]$ contains a large set of vertex-disjoint $3$-augmenting paths w.r.t.~$M_1$, assuming this latter matching is not already large. As we now show, this implies that $G[M_1\cup M_2]$ contains a better-than-2-approximate matching.
	
	\begin{thm}\label{streaming-general-graphs-improed}
		Let $\eps\in (0,1/4)$. Then, \Cref{alg:general-streaming} with $b=9$ satisfies $\mu(G)\geq \E[\mu(G[M_1\cup M_2])]\geq \left(\frac{1}{2}+\frac{1}{144}-\eps \right)\cdot \mu(G)$, and is thus $(\frac{1}{2}+\frac{1}{144} - \eps)^{-1} < 1.973(1+2\eps)$-approximate in expectation.
	\end{thm}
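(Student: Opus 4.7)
The plan is a two-front lower bound on $\E[\mu(G[M_1\cup M_2])]$, parameterized by $c$ defined implicitly via $|M_1|=(\tfrac{1}{2}+c)\mu(G)$. The upper bound $\E[\mu(G[M_1\cup M_2])]\leq \mu(G)$ is immediate since $G[M_1\cup M_2]\subseteq G$, so only the lower bound needs work. \Cref{AMM-size-bound} applied to the $(\eps/4)$-AMM $M_1$ yields $|M_1|\geq (\tfrac{1}{2}-\tfrac{\eps}{8})\mu(G)$, so in the worst case $c\geq -\eps/8$ always.

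For the lower bound, I would combine two complementary observations, conditioned on $M_1$. First, the trivial bound $\mu(G[M_1\cup M_2])\geq |M_1|=(\tfrac{1}{2}+c)\mu(G)$, which is strong when $c$ is large. Second, if $\calP'$ is any collection of \emph{node-disjoint} $3$-augmenting paths w.r.t.~$M_1$ contained in $G[M_1\cup M_2]$, then augmenting $M_1$ simultaneously along all of them yields a matching in $G[M_1\cup M_2]$ of size $|M_1|+|\calP'|$. It thus suffices to lower-bound $\E[|\calP'|\mid M_1]$ when $c$ is small, and then combine the two bounds.

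To construct such a $\calP'$, I would first invoke \Cref{general:many-aug-paths-tighter} with $b=9$ to obtain a set $\calP$ of $3$-augmenting paths disjoint in their $V(M_1)$ endpoints, of conditional expected size at least $\tfrac{9}{10}\bigl(\tfrac{1}{4}(\tfrac{1}{2}-3c)-\tfrac{1}{9}(\tfrac{1}{2}+c)-\tfrac{7\eps}{8}\bigr)\mu(G)$. The hypotheses of \Cref{many-disjoint-tighter} are then satisfied: each $V(M_1)$-node appears in at most one path of $\calP$ by construction, while each $\overline{V(M_1)}$-node appears in at most $b=9$ paths of $\calP$, since it has capacity $b$ in the maximal $b$-matching $M_2$. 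Applying that lemma then extracts a node-disjoint subcollection $\calP'\subseteq\calP$ with $|\calP'|\geq |\calP|/9$.

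Adding $|M_1|/\mu(G)=\tfrac{1}{2}+c$ to $\E[|\calP'|\mid M_1]/\mu(G)$ and simplifying with $b=9$ yields a lower bound of the form $\tfrac{1}{2}+\tfrac{1}{144}+\tfrac{329}{360}c-\tfrac{7\eps}{80}$. Because the coefficient of $c$ is strictly positive, this expression is minimized (essentially) at $c=0$, giving at least $\tfrac{1}{2}+\tfrac{1}{144}-\eps$ after absorbing the small slack from $c\geq -\eps/8$ and the $\tfrac{7\eps}{80}$ term into a single $\eps$; taking outer expectation over $M_1$ preserves the bound since it holds pointwise. The main obstacle is really two-fold: (a) trusting the calculation-heavy \Cref{general:many-aug-paths-tighter} whose proof is deferred to the appendix, and (b) spotting that $b=9$ is precisely the choice that leaves the coefficient of $c$ positive while producing the constant $\tfrac{1}{144}$; substituting the weaker \Cref{general:many-aug-paths-simple} into the identical argument would yield only a $\tfrac{1}{648}$ additive improvement over $\tfrac{1}{2}$, well short of the stated bound.
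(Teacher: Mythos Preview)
Your proposal is correct and follows essentially the same approach as the paper: parameterize $|M_1|=(\tfrac12+c)\mu(G)$ with $c\ge -\eps/8$ from \Cref{AMM-size-bound}, combine \Cref{general:many-aug-paths-tighter} with \Cref{many-disjoint-tighter} to lower-bound $\E[|M_1|+|\calP'|]$, and observe that the resulting linear function of $c$ has positive slope (your $\tfrac{329}{360}$), so the minimum over $c\ge -\eps/8$ gives $\tfrac12+\tfrac{1}{144}-\eps$. Your side remark that the weaker \Cref{general:many-aug-paths-simple} with $b=9$ would only yield $\tfrac12+\tfrac{1}{648}$ is also correct and nicely explains why the tighter lemma is needed.
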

	\begin{proof}
		Let $|M_1|=\left(\frac{1}{2}+c\right)\cdot \mu(G)$, where $c\in[-\eps/8, 1/2]$, with the lower bound on $c$ following from \Cref{AMM-size-bound} and $M$ being an $(\eps/4)$-AMM.
		Let $\calP'$ be a maximum set of vertex-disjoint $3$-augmenting paths w.r.t.~$M_1$ in $G[M_1\cup M_2]$.
		Then, augmenting along these paths, we find that $M_1\bigoplus \calP'$ contains a matching (hence of size at most $\mu(G)$) of the desired expected cardinality.
		\begin{align*}
		\E[|M_1 \bigoplus \calP'|] & =\E[ |M_1|+ |\calP'|] \\
		& \geq \left(\frac{1}{2}+c\right)\cdot \mu(G) + \left(\frac{1}{b+1}\cdot \left(\frac{1}{4}\cdot\left(\frac{1}{2}-3c\right) - \frac{1}{b}\cdot\left(\frac{1}{2}+c\right)-\frac{7\eps}{8}\right)\right)\cdot \mu(G) \\
		& \geq 
		\left(\frac{1}{2}-\frac{\eps}{8}+\frac{1}{b+1}\cdot\left( \frac{1}{4}\cdot\frac{1}{2} - \frac{1}{b} \cdot \frac{1}{2}\right) -\frac{7\eps}{8} \right)\cdot \mu(G) \\
		& = \left(\frac{1}{2}+\frac{1}{144} -\eps \right)\cdot \mu(G).
		\end{align*}
		Above, the first inequality follows from  \Cref{general:many-aug-paths-tighter} and \Cref{many-disjoint-tighter}, the second inequality mainly relies on the parenthetical expression being increasing in $c\geq -\eps/8$ (for our choice of $b=9$). Finally, the equality holds by our choice of $b=9$.
	\end{proof}

\subsection{Dynamic Algorithms}
\label{sec:general-dynamic}
\label{sec:dyn_general}

In this section we provide the dynamic algorithms yielding our main results, theorems \ref{main-theorem} and~\ref{thm:tradeoffs}. As with the bipartite case, our general approach is to approximately implement our two-pass streaming algorithm in a dynamic setting. Unlike the  algorithm for bipartite graphs, here we need to (slightly) unbox the sublinear-time algorithm of \cite{behnezhad2022time}  to find a large set of edges in $M_1$ which belong to $3$-augmenting paths in $G[M_1 \cup M_2]$, as explained below.

\subsubsection{Breaking the barrier of two in polylog time}
\label{subsec:dyn:general}

In this section, we present a robust dynamic $(1.973+\epsilon)$-approximate maximum matching size with worst case update time of $t_u=\tilde{O}_{\epsilon}(1)$, and a query time of $t_q=\tilde{O}_{\epsilon}(n)$, provided $\mu(G) \geq \epsilon n$. This, combined with \Cref{reduction}, implies our main result, \Cref{main-theorem}.

For our dynamic (approximate) implementation of \Cref{alg:bipartite}, which works on bipartite graphs, all we needed was to estimate $|M_2|$.
In contrast, for our dynamic (approximate) implementation of \Cref{alg:general-streaming}, we will need to estimate the size of the set $\calP$ as guaranteed by \Cref{general:many-aug-paths-tighter}. Specifically, we note that the proofs of lemmas \ref{general:many-aug-paths-tighter}, and \ref{many-disjoint-tighter} and \Cref{streaming-general-graphs-improed} imply the following observation.
\begin{obs}
\label{obs:estimate:general}
Let $\widehat{M}_1 \subseteq M_1$ be the set of edges in $M_1$ whose two endpoints are matched in $M_2$ in \Cref{alg:general-streaming} run with $b = 9$. Then, $|M_1| + \frac{1}{b} \cdot \E\left[ |\widehat{M}_1| \right] \geq  \frac{\mu(G)}{1.973 \cdot (1+2\epsilon)}$, and also $\mu(G) \geq |M_1| + \frac{1}{b} \cdot |\widehat{M}_1|$.
\end{obs}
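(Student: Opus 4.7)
My plan is to extract both bounds from the internal structure of the proofs of \Cref{general:many-aug-paths-tighter}, \Cref{many-disjoint-tighter}, and \Cref{streaming-general-graphs-improed}, by making explicit the role of $\widehat{M}_1$. The first step is to identify $\widehat{M}_1$ with the set $\calP$ of $V(M_1)$-disjoint 3-augmenting paths furnished by \Cref{general:many-aug-paths-tighter}: for each edge $(u,v)\in \widehat{M}_1$ with $M_2$-mates $u', v' \in \overline{V(M_1)}$, the trail $u'-u-v-v'$ is a 3-augmenting path in $G[M_1 \cup M_2]$ with respect to $M_1$. Its endpoints $u', v'$ are unmatched in $M_1$ (lying in $\overline{V(M_1)}$), and they must be distinct: since $(u,v) \in M_1$ straddles the random bipartition $(L,R)$ of \Cref{alg:general-streaming}, the mates $u', v'$ are forced onto opposite sides of $(L,R)$ and hence cannot coincide. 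Distinct edges of $\widehat{M}_1$ evidently yield paths disjoint in their $V(M_1)$ nodes, so the collection of all such paths serves as a valid choice of $\calP$ with $|\calP| = |\widehat{M}_1|$.

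Given this identification, the deterministic inequality $\mu(G) \geq |M_1| + \tfrac{1}{b}\cdot|\widehat{M}_1|$ falls out immediately: applying \Cref{many-disjoint-tighter} to $\calP$ yields a sub-collection $\calP' \subseteq \calP$ of fully node-disjoint 3-augmenting paths with $|\calP'| \geq |\calP|/b = |\widehat{M}_1|/b$, and augmenting along $\calP'$ produces a valid matching $M_1 \bigoplus \calP'$ in $G$ of size $|M_1| + |\calP'| \geq |M_1| + |\widehat{M}_1|/b$, which is in turn upper bounded by $\mu(G)$.

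For the expected inequality I plan to reuse essentially verbatim the calculation carried out in the proof of \Cref{streaming-general-graphs-improed}. Writing $|M_1| = \left(\tfrac{1}{2} + c\right)\cdot\mu(G)$ with $c \geq -\eps/8$ (by \Cref{AMM-size-bound} applied to the $(\eps/4)$-AMM $M_1$), \Cref{general:many-aug-paths-tighter} gives
\[
\E\!\left[|\widehat{M}_1|\right] \;=\; \E[|\calP|] \;\geq\; \left(\frac{b}{b+1}\right)\left(\frac{1}{4}\left(\frac{1}{2}-3c\right) - \frac{1}{b}\left(\frac{1}{2}+c\right) - \frac{7\eps}{8}\right)\cdot \mu(G).
\]
Dividing by $b=9$ and adding $|M_1|$ reproduces exactly the chain of inequalities appearing in the proof of \Cref{streaming-general-graphs-improed}, and hence $|M_1| + \tfrac{1}{b}\cdot\E[|\widehat{M}_1|] \geq \left(\tfrac{1}{2} + \tfrac{1}{144} - \eps\right)\cdot \mu(G) \geq \mu(G)/\bigl(1.973\cdot(1+2\eps)\bigr)$. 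The main subtlety throughout is the identification of $\widehat{M}_1$ with $\calP$---in particular the argument that the two $M_2$-mates of the endpoints of any edge in $\widehat{M}_1$ are always distinct---and this is precisely where the bipartite auxiliary structure engineered by the random bipartition in \Cref{alg:general-streaming} pays off; once this is in hand, both bounds follow mechanically from earlier statements.
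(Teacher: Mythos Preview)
Your proposal is correct and follows the same route the paper intends: it unpacks the proofs of \Cref{general:many-aug-paths-tighter}, \Cref{many-disjoint-tighter}, and \Cref{streaming-general-graphs-improed} to expose the role of $\widehat{M}_1$. One small point worth making explicit: when you invoke \Cref{general:many-aug-paths-tighter} to lower-bound $\E[|\widehat{M}_1|]$, note that \emph{every} $3$-augmenting path w.r.t.\ $M_1$ in $G[M_1\cup M_2]$ has its two non-$M_1$ edges in $M_2$ and hence its middle edge in $\widehat{M}_1$, so your $\calP$ is the maximal such $V(M_1)$-disjoint collection and therefore dominates the set furnished by the lemma; with this observation the citation is fully justified.
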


To estimate $|\widehat{M}_1|$ efficiently, we make use of the following extension of the algorithm of \cite{behnezhad2022time}.

\begin{restatable}{lem}{lmsublinearextension}\label{lm:sublinear:extension}
Consider a graph $G' = (V', E')$ with $|V'| = n'$, and a matching $M$ with $V(M) \subseteq V'$ that is not necessarily part of $G'$ (i.e., we might have $M \nsubseteq E'$). 
For any matching $M'$ in $G'$, let $k_{M'}$ denote the number of edges in $M$ both of whose endpoints are matched in $M'$. There is an algorithm which, given adjacency matrix query access to the edges of $G'$,  whp  runs in $\tilde{O}_{\epsilon}(n')$ time and returns an estimate $\kappa \in  [k_{M'} - \epsilon^2 n',  k_{M'}]$ for some maximal matching $M'$ in $G'$.
\end{restatable}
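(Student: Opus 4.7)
The plan is to unbox the sublinear-time algorithm of \Cref{sublinear} to expose a \emph{vertex oracle} $\mathcal{O}$ consistent with some implicit maximal matching $M'$ of $G'$, and then estimate $k_{M'}$ by Monte Carlo sampling over the edges of $M$. The algorithm underlying \Cref{sublinear} internally samples $\tilde{O}_\eps(1)$ vertices of $G'$ uniformly at random, and for each such vertex it invokes a subroutine that uses adjacency queries to decide whether that vertex is matched in a particular maximal matching $M'$ of $G'$ determined by the algorithm's internal randomness. Fixing that randomness yields a deterministic oracle $\mathcal{O}(\cdot)$, consistent with a single matching $M'$, that can be evaluated on any demanded vertex at worst-case cost $\tilde{O}_\eps(n')$ (the total running time of the algorithm of \Cref{sublinear} bounds the cost of any individual invocation of the subroutine).

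Given access to such an $\mathcal{O}$, I would sample a multiset $S$ of $s=\Theta(\log n/\eps^{4})$ edges from $M$ uniformly at random (with replacement), query $\mathcal{O}(u)$ and $\mathcal{O}(v)$ for every $(u,v)\in S$, and let $Y$ count the sampled edges for which both answers are positive. Since $\E[Y]/s = k_{M'}/|M|$, a Hoeffding bound yields $\bigl||M|\cdot Y/s - k_{M'}\bigr| \leq (\eps^{2}/2)\cdot |M| \leq \eps^{2} n'/4$ w.h.p. To secure the required one-sided bound $\kappa\leq k_{M'}$, I would output
\[
\kappa \;:=\; \max\!\bigl\{0,\ |M|\cdot Y/s \,-\, \eps^{2}n'/2\bigr\},
\]
which then lies in $[k_{M'}-\eps^{2}n',\,k_{M'}]$ w.h.p.

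The total running time is $2s=\tilde{O}_\eps(1)$ oracle calls, each costing at most $\tilde{O}_\eps(n')$, hence $\tilde{O}_\eps(n')$ overall, as required. The main obstacle is the white-box step itself: \Cref{sublinear} as stated returns only a numeric estimate, so I must open up Behnezhad's algorithm and verify that (i) its estimate is of $|V(M')|$ for a \emph{fixed} maximal matching $M'$ on which all vertex queries are answered consistently, and (ii) an individual vertex query can be answered within the claimed time budget even when the vertex is chosen post-hoc (e.g., as an endpoint of a sampled $M$-edge) rather than uniformly at random in $V'$. This is precisely the ``unboxing'' work deferred to Appendix~B; once it is done, the Monte Carlo step above is routine.
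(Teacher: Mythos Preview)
Your high-level approach---fix the internal randomness to expose a vertex oracle consistent with a single maximal matching, then estimate $k_{M'}$ by Monte Carlo sampling over the edges of $M$---is precisely the paper's approach. The gap is in your per-query running-time claim. The parenthetical justification ``the total running time of the algorithm of \Cref{sublinear} bounds the cost of any individual invocation of the subroutine'' is incorrect: Behnezhad's guarantee is only that a \emph{uniformly random} vertex of the (internally constructed) supergraph $H$ can be queried in $\tilde{O}_\eps(1)$ \emph{expected} time, equivalently that $\sum_{v\in V_H} T_H(v,\pi)=\tilde{O}_\eps(|E_H|)=\tilde{O}_\eps((n')^2)$ in expectation. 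There is no worst-case per-vertex bound, and a post-hoc vertex (an endpoint of a sampled $M$-edge) may well cost $\tilde{\Theta}_\eps((n')^2)$ on its own. So your obstacle (ii) does not go through as stated; it is not merely a matter of verification but requires a genuinely different argument.

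The paper's fix has three pieces. First, if $|M|\le\eps^2 n'$ one outputs $\kappa=0$, which trivially lies in $[k_{M'}-\eps^2 n',\,k_{M'}]$ since $k_{M'}\le|M|$. Second, when $|M|>\eps^2 n'$, sampling an edge uniformly from $M$ hits each fixed vertex with probability at most $1/|M|$, so the \emph{expected} cost of one oracle call is at most
\[
\frac{1}{|M|}\sum_{v\in V_H} T_H(v,\pi)\;\le\;\frac{\tilde{O}_\eps((n')^2)}{\eps^2 n'}\;=\;\tilde{O}_\eps(n'),
\]
and summing over the $\tilde{O}_\eps(1)$ samples keeps the expected total at $\tilde{O}_\eps(n')$. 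Third, the expected bound is upgraded to w.h.p.\ by running $O(\log n')$ independent copies in parallel and taking the first to terminate. Your obstacle (i) also requires care that you have not accounted for: the oracle is consistent with $\mathrm{GMM}(H,\pi)$, whose restriction to $G'$ is not maximal in $G'$ but only in $G'[V'\setminus V_{\mathrm{slack}}]$ for a slack set of size $O(\eps^2 n')$; this additive slack is what forces the subtraction of $\eps^2 n'/2$ in the final estimate.
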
 

This lemma follows from the work of \cite{behnezhad2022time} rather directly, though it requires some unboxing of the results there, due to the organization of that work. 
We substantiate this lemma in \Cref{appendix:sublinear:extension}.

Given the above, we are now ready to prove the main result of this section, which is summarized in the theorem below.

\begin{thm}
\label{thm:main:dynamic}
    For any $\eps\in (0,1/4)$, there exists a robust dynamic matching size estimator algorithm $\calA$ with worst-case update time $\tilde{O}_{\eps}(1)$ that, provided $\mu(G)\geq \eps\cdot n$, supports $\tilde{O}_{\eps}(n)$-time queries and outputs a $(1.973+\eps)$-approximate estimate w.h.p.
\end{thm}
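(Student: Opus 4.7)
The plan is to dynamically simulate \Cref{alg:general-streaming}, mirroring our bipartite algorithm in \Cref{bipartite-query-algorithm}. Between updates, the algorithm simply maintains an $(\eps/4)$-AMM $M_1$ via \Cref{eps-AMM-algo}, yielding worst-case $\tilde{O}_\eps(1)$ update time robust to adaptive adversaries. By \Cref{obs:estimate:general}, answering a query reduces to estimating $|M_1|+\tfrac{1}{b}|\widehat{M}_1|$ for $b=9$, without ever explicitly constructing the maximal $b$-matching $M_2$.

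To answer a query, first sample a random bipartition $(L,R)$ exactly as in \Cref{alg:general-streaming}. Then, as in \Cref{bipartite-query-algorithm}, construct in $\tilde{O}_\eps(n)$ time only the node set $V^*$ of an auxiliary graph $G^*$ containing one copy of each $v\in V(M_1)$ and $b$ copies of each $v\in\overline{V(M_1)}$; edges of $G^*$ are not stored, but can be resolved via adjacency matrix queries into $G$ restricted to $E_2$. Invoke \Cref{lm:sublinear:extension} on $G^*$ with the matching $M\subseteq V^*\times V^*$ obtained by mapping each $M_1$-edge to the unique copies of its endpoints. In $\tilde{O}_\eps(n)$ time this returns an estimate $\kappa\in[k_{M'}-\eps^2|V^*|,\,k_{M'}]$ for some maximal matching $M'$ in $G^*$, which corresponds to a maximal $b$-matching in $B$, so $k_{M'}=|\widehat{M}_1|$ for this choice. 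Output $\nu:=|M_1|+\kappa/b$.

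The upper bound $\nu\le\mu(G)$ is immediate from $\kappa\le|\widehat{M}_1|$ together with the deterministic inequality in \Cref{obs:estimate:general}, which applies to \emph{any} maximal $b$-matching in $B$. For the lower bound, taking expectation over the random bipartition gives $\E[\nu]\ge|M_1|+\tfrac{1}{b}\E[|\widehat{M}_1|]-\tfrac{\eps^2|V^*|}{b}\ge\mu(G)/(1.973(1+2\eps))-O(\eps)\mu(G)$, using $|V^*|=O(n)$ and the promise $\mu(G)\ge\eps n$. To promote this expectation bound to w.h.p., I would repeat the query procedure $\mathrm{poly}(\log n,1/\eps)$ times with independent bipartitions and average the outputs; since each single-trial estimate lies in an interval of width $O(n)$, Hoeffding concentrates the average within additive $\eps\mu(G)$ of its expectation w.h.p., yielding the stated $(1.973+\eps)$ approximation after rescaling $\eps$ by a constant. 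The total query time remains $\tilde{O}_\eps(n)$.

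The main obstacle is this high-probability amplification: \Cref{streaming-general-graphs-improed} and \Cref{obs:estimate:general} guarantee the approximation only in expectation (the randomness coming from the bipartition), while the theorem demands w.h.p.~guarantees in an adaptive adversary setting. Using fresh randomness per query handles adaptivity, and Hoeffding over independent trials handles concentration. A secondary subtlety is that \Cref{lm:sublinear:extension} only controls $k_{M'}$ for an uncontrolled maximal matching $M'$, but since \Cref{obs:estimate:general} holds uniformly over all maximal $b$-matchings in $B$, this is benign.
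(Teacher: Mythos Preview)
Your proposal is correct and follows essentially the same approach as the paper's own proof: maintain an $(\eps/4)$-AMM $M_1$, build the auxiliary graph $G^*$ with $b$ copies of unmatched vertices, invoke \Cref{lm:sublinear:extension} to estimate $|\widehat{M}_1|$, output $|M_1|+\kappa/b$, and boost the expectation guarantee to high probability by averaging $\poly(\log n,1/\eps)$ independent trials. The only cosmetic differences are that the paper phrases the amplification as ``running $O_\eps(\log n)$ copies of this algorithm'' and appeals to Chernoff rather than Hoeffding, and it explicitly mentions a BST for $O(\log n)$-time adjacency queries; your explicit remark that \Cref{obs:estimate:general} holds uniformly over \emph{all} maximal $b$-matchings (so the uncontrolled $M'$ from \Cref{lm:sublinear:extension} is harmless) is a point the paper leaves implicit.
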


\begin{proof}
The dynamic algorithm $\calA$ is based on  \Cref{alg:general-streaming}. For its updates, it maintains an $(\epsilon/4)$-AMM $M_1$ in the input graph $G$, using \Cref{eps-AMM-algo}, and a balanced binary search tree (BST) of edges in the graph, allowing for logarithmic-time insertion, deletion and edge queries. 
This immediately implies a worst-case update time of $t_u=\tilde{O}_{\epsilon}(1)$.

We now describe how Algorithm $\calA$ responds to a query about the maximum matching size. To answer this query, the algorithm considers a new auxiliary graph $G^* = (V^*, E^*)$, which is defined as follows. For each node $u \in V(M_1)$, create a node $0_u$ in $G^*$. Next, for each node $v \in \overline{V(M_1)}$, create $b$ nodes $1_v, \ldots, b_v$ in $G^*$. Finally, for every edge $(u, v) \in E_2$, with $u \in V(M_1)$ and $v \in \overline{V(M_1)}$, create an edge $(0_u, i_v)$ in $G^*$ for all $i \in \{1, \ldots, b\}$. Note that there is a one-to-one mapping between maximal matchings in the new graph $G^*$ and maximal $b$-matchings in $B$.

We emphasize that our dynamic algorithm $\calA$ does not explicitly maintain the auxiliary graph $G^*$.  When we receive a query about the maximum matching size in $G$, we explicitly construct only the node-set $V^*$ of $G^*$, based on the matching $M_1$. This takes only $O(n)$ time. We can, however, simulate adjacency matrix queries in $G^*$ efficiently: there exists an edge $(0_u, i_v)$ in $G^*$ iff there exists an edge $(u, v)$ between the corresponding nodes in $G$, verifiable in $O(\log n)$ time using our edge-set BST.

At this point, we estimate the size of $\widehat{M}_1$ by invoking \Cref{lm:sublinear:extension} with $G' = G^*$ and $M = M_1$. This gives us, in time $\tilde{O}_{\epsilon}(n)$ a value $\kappa$ satisfying $\kappa\in [|\widehat{M}_1|-\eps^2n,|\widehat{M}_1|]$, w.h.p. We now return $\nu := |M_1|+\frac{1}{b}\cdot \kappa$ as our estimate of $\mu(G)$. All in all, our algorithm has query time $t_q = \tilde{O}_{\eps}(n)$. 

It remains to analyze the approximation ratio. Towards this end, we observe that, by our hypothesis that $\mu(G)\geq \eps\cdot n$,
\begin{align}
  \E[\nu] & \geq \E\left[|M_1|+\frac{1}{b}\cdot (|\widehat{M}_1| - \eps^2 n)\right] \nonumber \\
  & = |M_1| + \frac{1}{b} \cdot \E\left[ \widehat{M}_1 \right] - \frac{\epsilon^2 n}{b} \nonumber \\
  & \geq \frac{\mu(G)}{1.973 \cdot (1+2\epsilon)} - \epsilon^2 n \nonumber  \\
  & \geq \frac{\mu(G)}{1.973 \cdot (1+2\epsilon)} - \epsilon \cdot \mu(G) \nonumber \\
  & \geq \frac{\mu(G)}{1.973 \cdot (1+5 \epsilon)}. \label{eq:approx:general:1}
\end{align}
In the above derivation, the second inequality follows from~\Cref{obs:estimate:general}. Similarly, we  have whp:
\begin{align}
\nu & \leq |M_1|+\frac{1}{b}\cdot |\widehat{M}_1|  \leq \mu(G). \label{eq:approx:general:2}
\end{align}
The second inequality in the above derivation again follows from~\Cref{obs:estimate:general}. From~(\ref{eq:approx:general:1}) and~(\ref{eq:approx:general:2}), we conclude that we return in response to each query a $1.973 (1+5\epsilon)$-approximation to $\mu(G)$ in expectation.
Therefore, by standard Chernoff bounds, running $O_{\eps}(\log n)$ copies of this algorithm (increasing update and query time appropriately) and taking the average of these will then result in a $1.973(1+O(\eps))$ approximation of the desired value, w.h.p.
Reparameterizing $\eps$ appropriately, the theorem follows.
\end{proof}

Combined with \Cref{reduction}, the above theorem implies our main result, \Cref{thm:main:dynamic}.

\subsubsection{New time/approximation tradeoffs}\label{sec:general-tradeoffs}

	In this section we show our secondary
result: a black-box method to improve dynamic matching algorithm's
approximation ratio, at the cost of only outputting a size estimate.
We start with the following observation.
\begin{prop}
\label{maximal-and-approx} Let $G$ be an $n$-node graph, $\eps\in(0,1)$
and $\alpha\geq1$. Then, given an $\eps$-AMM $M'$ and $\alpha$-approximate maximum 
matching $M''$ in $G$, one can compute in $O(n)$ time a matching $M$ in $G$ which
is both $\alpha$-approximate and an $\eps$-AMM. 
\end{prop}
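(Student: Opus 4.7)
My plan is to define $M := M' \bigoplus P$, where $P$ is the union of the edge sets of all $M'$-augmenting paths appearing inside the symmetric difference $M' \bigoplus M''$. I would invoke the classical structure of $M' \bigoplus M''$: since both summands are matchings, every vertex has degree at most two in $M' \cup M''$, so each component of $M' \bigoplus M''$ is either an even cycle or a path whose edges alternate between $M'$ and $M''$. Letting $\calP^*$ denote the family of components that are odd-length paths whose two extremal edges both lie in $M''$ (equivalently, the $M'$-augmenting paths inside $M' \bigoplus M''$), I set $M := M' \bigoplus \bigcup_{P \in \calP^*} E(P)$. This is computable in $O(n)$ time by a single pass that finds and classifies the connected components of the $O(n)$-edge graph $M' \cup M''$ and performs the local symmetric difference inside each $P \in \calP^*$.

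For the approximation, a per-component count shows that $|M''| - |M'|$ equals the number of $M'$-augmenting paths in $M' \bigoplus M''$ minus the number of $M''$-augmenting paths (cycles, even paths, and odd paths ending in $M'$-edges contribute no surplus). Hence $|\calP^*| \geq |M''| - |M'|$, and since each augmentation increases the matching size by one, $|M| = |M'| + |\calP^*| \geq \max(|M'|, |M''|) \geq \mu(G)/\alpha$, so $M$ is $\alpha$-approximate.

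The main step is to verify that $M$ remains an $\eps$-AMM. The crucial observation is $V(M) \supseteq V(M')$: augmenting along a path in $\calP^*$ adds its two formerly unmatched endpoints to the matched set, while components outside $\calP^*$ are left untouched (so their matched vertices are preserved). Let $S$ be the witness with $|S| \leq \eps \cdot \mu(G)$ for $M'$'s $\eps$-AMM property, and take $S^* := S \setminus V(M)$; then $|S^*| \leq \eps \cdot \mu(G)$ and $V(M) \cap S^* = \emptyset$, so $M \subseteq E(G[V \setminus S^*])$. To verify maximality, suppose some edge $e \in E(G[V \setminus S^*])$ has both endpoints outside $V(M)$; since $V(M) \cap S^* = \emptyset$ and $V(M) \supseteq V(M')$, both endpoints of $e$ would lie in $V \setminus S$ and neither would lie in $V(M')$, contradicting maximality of $M'$ in $G[V \setminus S]$.

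I expect the only mildly delicate point to be the choice of $S^*$: the natural witness $S$ for $M'$ may intersect $V(M)$ after augmentation (new vertices are matched precisely at the endpoints of paths in $\calP^*$), but trimming that intersection away preserves the ``vertex-cover-of-unmatched-edges'' property, precisely because augmentation only enlarges the matched set, i.e.\ $V(M) \supseteq V(M')$.
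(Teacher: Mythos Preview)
Your proof is correct and yields the same matching $M$ as the paper's construction. The paper phrases it as: in each connected component of $G[M'\cup M'']$, keep whichever of $M'$ or $M''$ contributes more edges, breaking ties in favor of $M'$; this is exactly ``keep $M'$ everywhere except on the $M'$-augmenting paths of $M'\bigoplus M''$,'' which is your $M' \bigoplus \bigcup_{P\in\calP^*} E(P)$. Both arguments then hinge on the two facts $|M|\ge |M''|$ and $V(M)\supseteq V(M')$. Your treatment of the witness set (passing from $S$ to $S^*=S\setminus V(M)$) is a bit more explicit than the paper's one-line remark that the removed nodes lie in $V\setminus V(M)\subseteq V\setminus V(M')$, but the content is the same.
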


\begin{proof}
The subgraph $G[M'\cup M'']$ has maximum degree two, and is thus
the union of paths and cycles. Let $M$ be the matching obtained by
taking from each connected component $\calC$ in $G[M'\cup M'']$
either the set of edges of $M'$ or $M''$ that are most plentiful
in $\calC$, breaking ties in favor of $M'$. By construction, it
is clear that $M$ is a matching, and that moreover $|M|\geq|M''|$,
and so $M$ is $\alpha$-approximate. On the other hand, $M$ matches
all nodes of $M'$ in each component, and therefore overall. That
is, after removing at most $\eps\mu(G)$ nodes in $V\setminus V(M)\subseteq V\setminus V(M')$,
we obtain a graph in which $M$ is maximal. That is, the matching
$M$ is also an $\eps$-AMM. 
\end{proof}
We are now ready to prove \Cref{thm:tradeoffs}, restated below
for ease of reference.

\tradeoffs* 
\begin{proof}
Let $\eps>0$ be some sufficiently small constant. 
We describe how to obtain a new dynamic matching size estimator $\calA$ for $G$, with update time $\tilde{O}_{\epsilon}(t_u)$ that, provided $\mu(G)\geq \eps\cdot n$, supports $\tilde{O}_{\epsilon}(n)$-time queries and outputs a $\beta$-approximate estimate w.h.p., for some $\beta=\alpha-\Omega((1-6(1/\alpha-1/2))^2))$. 
The theorem then immediately follows from Proposition~\ref{reduction}.

The algorithm $\calA$ works as follows. 
It maintains an $\eps$-AMM $M'_1$, invoking Lemma~\ref{eps-AMM-algo}, taking $\tilde{O}_{\eps}(1)$ update time.
It also maintains $\alpha$-approximate matching $M''_1$, by running the dynamic algorithm guaranteed by the theorem's hypothesis, taking $t_u$ update time.
Therefore, Algorithm $\calA$ has an overall update time of $t_u + \Otil_{\eps}(1) = \Otil_{\epsilon}(t_u)$.

Upon receiving a query, Algorithm $\calA$ first invokes Proposition~\ref{maximal-and-approx} to obtain a matching $M_1$ (based on $M_1'$ and $M_1''$) that is simultaneously an $\epsilon$-AMM and an $\alpha$-approximate maximum matching in $G$. This takes $O(n)$ time. The rest of the query algorithm remains exactly the same as in \Cref{sec:dyn_general}. This implies that Algorithm $\calA$ has an overall query time of $\tilde{O}_{\epsilon}(n)$. 

We now analyze the approximation guarantee of $\calA$.
Towards this end, observe that as $\alpha>1.5$, we can write $\frac{1}{\alpha}=\frac{1}{2}+c$ where
$0 < c<1/6$. 
So, we have that $|M_{1}| = \left(\frac{1}{2}+z\right)\cdot \mu(G)$ for some $z\geq c$.
Therefore, by Lemma~\ref{general:many-aug-paths-simple} and Lemma~\ref{general:many-aug-paths-simple}, there exists some set $\calP'$ of node-disjoint length-three augmenting paths w.r.t.~$M_1$ in $G[M_1\cup M_2]$ whose cardinality satisfies
$$\frac{|\calP'|}{\mu(G)} \geq f_b(z) :=\frac{1}{b}\cdot\left(\frac{1}{4}\cdot\left(\frac{1}{2}-3z-\frac{7\eps}{2}\right) - \frac{2}{b}\cdot\left(\frac{1}{2}+z\right)\right).$$
Augmenting along these paths with respect to $M_1$, we obtain a new matching in $G[M_1\cup M_2]$ of cardinality at least $\left(\frac{1}{2}+z+f_b(z)\right)\cdot \mu(G)$.
Now, for $b\geq 2$ (as we will choose), this matching size is decreasing in $z$, as observed by taking the derivative of $1/2+z+f_b(z)$ w.r.t.~$z$.
Therefore, the matching size is minimized at $z=c$, and we find that $\mu(G[M_1\cup M_2])\geq 1/2+c+f_b(c)$. 
Taking another derivative, this time with respect to $b$, we find that this expression is minimized  (ignoring the $\eps$ dependence) at $b^*=\frac{16 (1 + 2c)}{1 - 6c}$. Note that $b\geq 16$, as $c\in (0,1/6)$. This optimal $b^*$ need not be an integer, however, and so we take $b=\lceil b^*\rceil\leq \frac{17}{16}b^*$ in our algorithm, and find that $M_1\cup M_2$ contains a matching of size at least $\mu(G)$ times $1/2+z+f_{b}(z)\geq 1/2+c+f_{\frac{17b}{16}b^*}(c)\geq 1/2+c+\frac{9 (1 - 6 c)^2}{2312 (1 + 2 c))} - O(\eps/b)$. 
Moreover, some $b\cdot f_b(z)$ many edges $\hat{M}_1\subseteq M_1$ have both of their endpoints matched in the $b$-matching $M_2$.
%, and these can be computed by invoking Lemma~\ref{lm:sublinear:extension}, in $\tilde{O}_{\eps}(n)$ time.

We conclude that $\E[\mu(G[M_1 \cup M_2])]$ gives a strictly-better-than-$\alpha$ approximation to $\mu(G)$ (again using that $c<1/6$). 
Specifically, the gain we get in the approximation ratio is of the order of $\Theta((1-6c)^2)=\Theta((1-6(1/\alpha-1/2))^2)$.
Now, using the fact that $\mu(G)\ge\eps n$ and we are running the same query algorithm
as in \Cref{sec:dyn_general}, our estimation using the sub-linear-time algorithm (Lemma~\ref{lm:sublinear:extension})
gives a strictly-better-than-$\alpha$ approximation to $\mu(G)$ in
expectation. 
As before, taking the average of $O(\log n)$ copies of this algorithm will provide the same bound w.h.p., at an additional logarithmic multiplicative overhead to the update and query times.
\end{proof}

	\begin{remark}
	    We note that the reduction underlying \Cref{thm:tradeoffs} preserves robustness and worst-case update time.
	\end{remark}
	
	\section{AMMs against Adaptive Adversaries}\label{sec:adaptive}
	In this section we prove \Cref{eps-AMM-algo}. That is, we provide a robust dynamic algorithm for maintaining an $\eps$-AMM in worst-case polylogarithmic update time.  But first, we motivate our algorithm, and characterize the kind of matching we wish to compute. 
	
	We first recall a useful tool in the literature, namely \emph{edge} sparsification: maintaining a sparse subgraph of $G$ containing a large matching--a so-called \emph{matching sparsifier}. Such sparsifiers naturally allow to achieve speedups in the algorithms needed for \Cref{reduction}, as a large matching in a sparsifier can be computed quickly. 
	One influential such sparsifier that we will use are \emph{kernels}, introduced by Bhattacharya et al.~\cite{bhattacharya2018deterministic}.
	\begin{Def}\label{def:kernel}
		For $\eps\geq [0,1]$ and $d\in \mathbb{N}$, a subgraph $K=(V,E_K)$ of graph $G=(V,E)$ is an \emph{$(\eps,d)$-kernel} if $K$'s maximum degree is at most $d$ and each edge $e\in E\setminus E_K$ has at least one endpoint of degree at least $d(1-\eps)$ in $K$. 
	\end{Def}
	
	These sparsifiers will play an integral role in robustly and efficiently maintaining an AMM in this section. We start by motivating their use in computing AMMs in a \emph{static} setting.
	
	\subsection{From kernels to AMMs}
	To motivate the interest in bounded-degree graphs, we recall the following observation, which follows from the $2\mu(G)$ endpoints of a maximum matching forming a vertex cover (i.e., being incident on each edge of the graph.
	
	\begin{fact}\label{size-degree}
	Let $G=(V,E)$ be a graph of maximum degree $\Delta$. Then $|E|\leq 2\mu(G)\cdot \Delta.$
	\end{fact}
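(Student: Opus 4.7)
The plan is to follow the hint given in the surrounding text and argue that the endpoints of a maximum matching form a vertex cover. Concretely, I would first fix a maximum matching $M$ in $G$, so $|V(M)| = 2\mu(G)$. The key observation is that $V(M)$ is a vertex cover of $G$: if some edge $e \in E$ had both endpoints outside $V(M)$, then $M \cup \{e\}$ would be a strictly larger matching, contradicting the maximality of $M$.

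Given this, every edge of $G$ is incident to at least one vertex in $V(M)$. Hence $E$ is contained in the set of edges touching $V(M)$, whose size is at most $\sum_{v \in V(M)} \deg_G(v) \leq |V(M)| \cdot \Delta = 2\mu(G) \cdot \Delta$, where the first inequality uses that the degree bound $\Delta$ holds for every vertex and the sum overcounts each $V(M)$-internal edge at most a constant number of times (which is fine, since we only need an upper bound). Combining gives $|E| \leq 2\mu(G) \cdot \Delta$, as desired.

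I do not anticipate any real obstacle here: the entire argument is a one-line consequence of the vertex-cover property of $V(M)$ together with the degree bound, and the excerpt itself essentially spells it out. The only sanity check is that the degree bound is applied to the correct side of the incidence count, which the calculation above makes explicit.
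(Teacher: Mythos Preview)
Your proposal is correct and matches the paper's own reasoning exactly: the paper states the fact ``follows from the $2\mu(G)$ endpoints of a maximum matching forming a vertex cover,'' and you have simply spelled out that one-line argument in full.
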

	Consequently, for small $d$ we have that $(\eps,d)$-kernels of $G$ are sparse subgraphs. In particular, the time needed to compute maximal matchings in such subgraphs is linear in their size, $|E_K|\leq 2\mu(K)\cdot d = O(\mu(G)\cdot d)$.
	The following result of \cite{duan2014linear} implies that essentially the same amount of time is needed to compute a \emph{near-maximum-weight} matching in $K$.
	\begin{prop}\label{linear-mwm}
	    Let $G=(V,E,w)$ be a weighted graph. Then, one can compute deterministically a $(1+\eps)$-approximate maximum weight matching in $G$ in time $O_{\eps}(|E|)$.
	\end{prop}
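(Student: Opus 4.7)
The plan is to adapt the classical scaling-based primal-dual framework for maximum weight matching so as to achieve linear-in-$|E|$ running time (with only $\poly(1/\eps)$ overhead). First I would preprocess the weights: scaling so that $w_{\max} = 1$, then discarding every edge of weight less than $\eps/n$ (their total contribution to any matching is at most $\eps$, which is a lower-order term compared to the optimum $\mu_w(G) \geq w_{\max}/2$), and finally rounding each surviving weight up to the nearest power of $(1+\eps)$. This reduces the distinct weight values to $O(\eps^{-1}\log n)$ and costs only a $(1+O(\eps))$ factor in the approximation.

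Next, I would run a primal-dual scheme in scales, processing weight levels from heaviest to lightest. At each scale I maintain a matching $M$ together with dual potentials $y_v \geq 0$ (and, in general graphs, blossom duals $z_B$) satisfying \emph{relaxed} complementary slackness: every edge in $M$ is approximately tight, and every edge outside $M$ is approximately covered. Within a scale, search for augmenting paths whose net weight gain exceeds an $\eps$ fraction of the current scale; after $O(1/\eps)$ such rounds, no sufficiently-improving short augmentation remains, and a Hopcroft--Karp-style argument shows the current matching is $(1-O(\eps))$-approximate against the dual bound at that scale. Rescaling the duals between scales then propagates the approximation guarantee.

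The main obstacle is implementing each scale in $O(|E|)$ time. In the bipartite case this follows from blocking-flow arguments on a layered auxiliary graph, but in the general case blossoms must be detected, contracted, and expanded during a single search phase. The key ideas to overcome this are: (i) contract blossoms incrementally using a union--find data structure so the total blossom work amortizes to nearly linear, and (ii) cap search depth by $O(1/\eps)$ using the fact that only short augmentations can yield an $\eps$-fraction improvement in a single scale. Summing over all $O(\eps^{-1}\log n)$ scales, together with a charging argument that reuses work across neighboring scales, keeps the total running time at $O(|E|\cdot \poly(1/\eps))$, yielding the stated $O_\eps(|E|)$ bound.

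Finally, to handle the $\log n$ factor appearing in the naive scale count, I would appeal to a standard two-level refinement: run the above algorithm on the $O(\eps^{-1})$ highest-weight classes to obtain a constant-factor approximation, then restrict attention to edges with weight at least $\eps$ times this estimate. This caps the effective weight range by $\poly(1/\eps)$, reducing the number of scales to $\poly(1/\eps)$ and yielding the target $O_\eps(|E|)$ running time deterministically, as claimed.
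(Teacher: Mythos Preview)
The paper does not prove this proposition at all: it is stated as a black-box citation of Duan and Pettie \cite{duan2014linear}, with no argument given. So there is nothing to compare your sketch against in the paper itself; what you are really sketching is (a version of) the Duan--Pettie algorithm.

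Your high-level outline (scaling, primal--dual with relaxed slackness, blossom contraction with bounded-depth searches) is in the right spirit, but the final paragraph contains a genuine gap. You claim that once you have a constant-factor estimate $\tilde{W}$ of the optimum, you may ``restrict attention to edges with weight at least $\eps$ times this estimate,'' thereby capping the weight range at $\poly(1/\eps)$. This is false in general. Take one edge of weight $1$ together with $n/2-1$ vertex-disjoint edges each of weight $1/n$: the optimum is $\approx 3/2$, the heavy edge alone is a constant-factor approximation, yet discarding all edges of weight below $\eps$ leaves only the single heavy edge, yielding a $3/2$-approximation rather than a $(1+\eps)$-approximation. More generally, a matching can draw $\Theta(W^*)$ total weight from up to $n/2$ light edges, so the safe cutoff is $\Theta(\eps W^*/n)$, which still leaves $\Theta(\eps^{-1}\log n)$ scales.

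The actual Duan--Pettie argument removes the $\log n$ factor not by shrinking the weight range but by a more careful per-edge charging: each edge is touched in only $O(\eps^{-1})$ scales (once an edge becomes ``dominated'' by the duals it is never scanned again), so the total work is $O(m\,\eps^{-1}\log\eps^{-1})$ summed directly over edges rather than over scales. Your sketch would need to incorporate this kind of amortization in place of the two-level refinement to close the gap.
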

	
	We now turn to identifying useful matchings in a kernel $K$ that allow us to obtain an AMM of $G$. For this, we will need to upper bound the number of high-degree nodes in $K$.
	Specifically, for an $(\eps,d)$-kernel $K$ of graph $G$, we denote by $H_K:=\{v \mid d_K(v)\geq d(1-\eps)\}$ the set of high-degree nodes in $K$.
    We will wish to argue that a removal of few high-degree nodes in the kernel yields a subgraph in which our matching is maximal. We therefore need to prove that the number of high-degree nodes is itself small in terms of $\mu(G)$.
	
	\begin{lem}\label{few-high}
		Let $K=(V,E_K)$ be an $(\eps,d)$-kernel $K$ of $G$ with $\eps\leq 1/4$. Then $|H_K|\leq 4\mu(G)$.
	\end{lem}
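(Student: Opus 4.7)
The plan is to partition $H_K$ relative to a fixed maximum matching $M^{*}$ of $G$, and bound each part separately. Specifically, I would write $H_K = A \sqcup B$ where $A := H_K \setminus V(M^{*})$ and $B := H_K \cap V(M^{*})$. The trivial bound is $|B| \leq |V(M^{*})| = 2\mu(G)$, so the entire task reduces to proving $|A| \leq \mu(G)/(1-\eps) \leq (4/3)\mu(G)$ when $\eps \leq 1/4$; these combine to give $|H_K| \leq (10/3)\mu(G) < 4\mu(G)$.

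To bound $|A|$, I would exploit the fact that $V(M^{*})$ is a vertex cover of $G$ (as is the case for any maximum matching's endpoints), so $A \subseteq V \setminus V(M^{*})$ is independent in $G$ and hence in $K$. Consequently, every $K$-edge incident to $A$ must have its other endpoint in $V(M^{*})$. Let $B_K$ denote the bipartite subgraph of $K$ with sides $A$ and $V(M^{*})$, which therefore contains \emph{every} $K$-edge out of $A$. I would then exhibit a fractional matching in $B_K$ by assigning each of its edges weight $1/d$. This is feasible since every vertex $v$ has fractional degree $d_{B_K}(v)/d \leq d_K(v)/d \leq 1$, using that $K$ has max degree $d$. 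Moreover, its total value is
\[
\sum_{e \in B_K} x_e \;=\; \frac{|E(B_K)|}{d} \;=\; \frac{1}{d}\sum_{v \in A} d_K(v) \;\geq\; (1-\eps)\,|A|,
\]
using the kernel's lower bound $d_K(v) \geq d(1-\eps)$ on vertices in $H_K \supseteq A$.

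The final step is to invoke the integrality of the bipartite fractional matching polytope (noted in \Cref{sec:prelims}): this fractional matching witnesses an integral matching in $B_K \subseteq G$ of size at least $(1-\eps)|A|$, so $(1-\eps)|A| \leq \mu(G)$. Substituting $\eps \leq 1/4$ yields $|A| \leq (4/3)\mu(G)$, and summing with $|B| \leq 2\mu(G)$ gives $|H_K| \leq (10/3)\mu(G) \leq 4\mu(G)$. The only genuinely non-obvious step is observing that $A$'s independence funnels all its $K$-edges into $V(M^{*})$, enabling the clean bipartite reduction; the rest is bookkeeping between the kernel's max-degree bound (on the $V(M^{*})$ side) and its min-degree lower bound on $H_K$ (on the $A$ side).
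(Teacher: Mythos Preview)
Your proof is correct but takes a genuinely different route from the paper's. The paper assigns weight $1/d$ to \emph{every} edge of $K$, obtaining a fractional matching in $G$ of total value at least $\tfrac{1}{2}(1-\eps)|H_K|$ (the factor $\tfrac{1}{2}$ accounting for possible double-counting of edges with both endpoints in $H_K$), and then invokes the $3/2$ integrality gap of the fractional matching polytope in general graphs to conclude $|H_K|\le \frac{3}{1-\eps}\mu(G)\le 4\mu(G)$.

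You instead split off $B=H_K\cap V(M^*)$ (trivially $\le 2\mu(G)$) and observe that the remainder $A$ is independent, so all its $K$-edges land in $V(M^*)$ and form a \emph{bipartite} graph; the same $1/d$ weighting then incurs neither the $1/2$ double-counting loss nor the $3/2$ integrality gap, yielding $|A|\le \frac{1}{1-\eps}\mu(G)$. This is slightly longer but more elementary---it never needs the general-graph integrality-gap fact---and even gives a better constant, $|H_K|\le \bigl(2+\tfrac{1}{1-\eps}\bigr)\mu(G)=\tfrac{10}{3}\mu(G)$ at $\eps=1/4$, versus the paper's $4\mu(G)$.
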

	\begin{proof}
		We consider the fractional matching $x \in \mathbb{R}^E$ where $x_e = \mathds{1}[e\in E_K]/d$.
		By the degree bound of $K$, this is a feasible fractional matching in $G$.
		Using this fractional matching, we can show that
		$$\frac{1}{2}\cdot (1-\eps)\cdot |H_K| \leq
		\sum_{v\in H_K} \sum_{e\ni v} x_e \leq \sum_e x_e \leq \frac{3}{2}\cdot \mu(G),$$
		where the first inequality follows from the definition of $H_K$ and possible double counting of edges, and the last inequality follows from the integrality gap of $\frac{3}{2}$ of the factional matching polytope.
		Simplifying the above and using $\eps\leq 1/4$, we have that indeed
		$|H_K|\leq (3/(1-\eps))\cdot \mu(G)\leq 4\mu(G)$.
	\end{proof}
	
	We now characterize the matchings that we wish to compute in this section, prove that they exist and that they are AMMs.
	\begin{lem}\label{the-adaptive-AMM}
		Let $K=(V,E_K)$ be an $(\eps,d)$-kernel of $G=(V,E)$, for $\eps\in (0,1/4)$ and $d\geq \frac{1}{\eps}$. Then, a maximal matching $M$ in $K$ that matches at least a $(1-c\cdot \eps)$-fraction of $H_K$ is a $4c\eps$-AMM in $G$. Moreover, such a matching exists for $c=2$.
	\end{lem}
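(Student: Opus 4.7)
The strategy is to split the claim into the sufficient-condition part (``such an $M$ is a $4c\eps$-AMM'') and the existence claim (``a suitable $M$ with $c=2$ exists''), and handle them in turn.

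For the sufficient condition, I would set $U := H_K \setminus V(M)$, the unmatched high-degree vertices. The hypothesis gives $|U| \leq c\eps\cdot|H_K|$, and \Cref{few-high} then yields $|U| \leq 4c\eps \cdot \mu(G)$. So it suffices to verify that $M$ is maximal in the subgraph $G[V \setminus U]$, since this is precisely the definition of a $4c\eps$-AMM. Consider any edge $e = (a,b)$ of $G$ with $a, b \notin U$. If $e \in E_K$, then maximality of $M$ in $K$ forces $e$ to share an endpoint with $V(M)$. Otherwise $e \in E \setminus E_K$, and the kernel property places at least one endpoint of $e$ in $H_K$ (say $a$); since $a \notin U$, it must be that $a \in V(M)$. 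Either way, $e$ cannot be added to $M$ within $G[V \setminus U]$.

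For the existence claim, which is the more interesting half, I would invoke Vizing's theorem to produce a proper edge coloring of $K$ using $d+1$ colors. Each $v \in H_K$ contributes $d_K(v) \geq d(1-\eps)$ edges of pairwise distinct colors, so $v$ is missed by at most $(d+1) - d(1-\eps) = 1 + d\eps$ of the color classes. Summing over $H_K$ and averaging over the $d+1$ color classes, some class $M_0$ misses at most $|H_K|(1+d\eps)/(d+1)$ vertices of $H_K$. A short calculation using the hypothesis $d \geq 1/\eps$ (equivalently $d\eps \geq 1$) gives $(1+d\eps)/(d+1) \leq 2\eps$, so $M_0$ misses at most $2\eps|H_K|$ vertices of $H_K$. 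Extending $M_0$ arbitrarily to a maximal matching $M$ in $K$ only enlarges $V(M)$, so $M$ is maximal in $K$ and still matches at least a $(1-2\eps)$-fraction of $H_K$. By the first part, this $M$ is then an $8\eps$-AMM in $G$.

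The main obstacle in my plan is spotting the Vizing-based averaging argument: it is what converts the near-maximum-degree guarantee enjoyed by $H_K$ into a pigeonhole bound on the unsaturated $H_K$-vertices of the best color class, and this is the step that ultimately fixes the constant $c=2$. The sufficient condition is essentially bookkeeping built on the observation that any edge missing from $E_K$ must have an endpoint pinned in $H_K$, and hence (outside $U$) matched by $M$.
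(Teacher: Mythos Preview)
Your proposal is correct and follows essentially the same route as the paper: the sufficient-condition half is identical (remove the unmatched high-degree vertices and invoke \Cref{few-high}), and for existence both you and the paper apply Vizing's theorem to $K$ and then select a good color class. The only cosmetic difference is that the paper phrases the color-selection step probabilistically (a uniformly random color matches each $v\in H_K$ with probability at least $d(1-\eps)/(d+1)\geq (1-\eps)/(1+\eps)\geq 1-2\eps$), whereas you state the equivalent averaging/pigeonhole version directly.
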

	\begin{proof}
		First, we argue that such a matching $M$, if it exists, is indeed a $4c\eps$-AMM in $G$.
		We recall that every edge in $E\setminus E_K$ has a high-degree endpoint in $K$.
		Therefore, if we remove the $c\eps$ fraction of high-degree nodes $H_K$ unmatched by $M$, each edge in $E\setminus E_K$ in the resulting graph $G'$ has at least one endpoint matched in $M$. On the other hand, every edge in $E_K$ has an endpoint matched in $M$, by maximality of $M$ in $K$.
		We conclude that after removing $c\eps\cdot |H_K|\leq 4c\eps\cdot \mu(G)$ nodes in $G$ (with the inequality relying on \Cref{few-high}), we obtain a graph $G'$ where $M$ is maximal.
		That is, $M$ is a $4c\eps$-AMM.
		
		We now argue the existence of such a matching $M$ for $c=2$.
		Since $H$ has maximum degree $d\geq \frac{1}{\eps}$, by Vizing's theorem \cite{vizing1964estimate} it can be $(d+1)$-edge-colored, i.e., decomposed into $(d+1)$ matchings. A randomly-chosen color in this edge coloring is a matching $M'$ that matches each edge with probability $\frac{1}{d+1}$, and thus it matches each high-degree vertex $v$ with probability at least 
		$$\Pr[v \textrm{ matched}]\geq (1-\eps)/(d+1)\geq (1-\eps)/(1+\eps) \geq 1-2\eps.$$
		Finally, extending this matching $M'$ to also be maximal in $K$ by adding edges of $K$ greedily then proves the existence of the desired $8\eps$-AMM contained in the kernel $K$.
	\end{proof}
	
	The above implies a static algorithm with running time $\tilde{O}_{\eps}(d\cdot \mu(G))$ for computing an $\eps$-AMM in $G$ given an $(\eps,d)$-kernel $K$ of $G$.
	
	\begin{lem}\label{static-AMM}
	    Given an $(\eps,d)$-kernel $K=(V,E_K)$ of $G=(V,E)$, one can compute an $\eps$-AMM in $G$ in deterministic time $O_{\eps}(d\cdot \mu(G))$.
	\end{lem}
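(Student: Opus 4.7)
The plan is to reduce this to computing a near-maximum weight matching in the kernel $K$ under a weighting tailored to prefer matching high-degree vertices, and then invoke \Cref{linear-mwm}.

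First, I would assign to each edge $e\in E_K$ a weight equal to its number of endpoints in $H_K$, i.e., $w_e := |e\cap H_K|\in\{0,1,2\}$. The crucial observation is that, under this weighting, the total weight $w(M')=\sum_{e\in M'} w_e$ of any matching $M'\subseteq E_K$ equals exactly the number of vertices of $H_K$ that are matched by $M'$. Thus the existence part of \Cref{the-adaptive-AMM} (which relies on a random color class of a Vizing edge-coloring of $K$, extended greedily to be maximal) guarantees a matching in $K$ of weight at least $(1-2\epsilon)\cdot |H_K|$; consequently the maximum weight matching $M^*$ in $K$ under $w$ has $w(M^*)\geq (1-2\epsilon)\cdot |H_K|$.

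Next, I would invoke \Cref{linear-mwm} on $(K,w)$ with a precision parameter $\epsilon':=\epsilon/|H_K|$ (or, more simply, any $\epsilon'$ such that $(1-2\epsilon)/(1+\epsilon')\geq 1-3\epsilon$; any $\epsilon'\leq \epsilon$ works). This produces in deterministic $O_{\epsilon}(|E_K|)$ time a matching $M'$ with $w(M')\geq w(M^*)/(1+\epsilon')\geq (1-3\epsilon)\cdot |H_K|$. Hence $M'$ matches at least a $(1-3\epsilon)$-fraction of $H_K$. Then, I would greedily extend $M'$ into a matching $M\supseteq M'$ that is maximal in $K$, by scanning the edges of $K$ once in $O(|E_K|)$ time and adding each edge whose both endpoints are currently unmatched; this does not unmatch any vertex of $H_K$, so $M$ still matches at least $(1-3\epsilon)\cdot |H_K|$ high-degree vertices. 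Applying \Cref{the-adaptive-AMM} with $c=3$ then gives that $M$ is a $12\epsilon$-AMM in $G$; rescaling $\epsilon\gets \epsilon/12$ at the outset yields the desired $\epsilon$-AMM.

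For the runtime, combining \Cref{size-degree} with the fact that $K$ has maximum degree at most $d$ gives $|E_K|\leq 2\mu(K)\cdot d\leq 2\mu(G)\cdot d$, so the entire procedure runs in deterministic $O_{\epsilon}(|E_K|)=O_{\epsilon}(d\cdot \mu(G))$ time, as claimed. The only subtle point I anticipate is the translation from a multiplicative MWM approximation to an additive guarantee on the fraction of matched high-degree vertices, but this is handled cleanly by choosing $\epsilon'$ small enough relative to $\epsilon$; no heavy machinery beyond \Cref{linear-mwm} and \Cref{the-adaptive-AMM} is required.
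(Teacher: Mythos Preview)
Your proposal is correct and follows essentially the same approach as the paper: define edge weights $w_e=|e\cap H_K|$, compute a $(1+\eps')$-approximate maximum weight matching in $K$ via \Cref{linear-mwm}, extend it greedily to be maximal in $K$, and apply \Cref{the-adaptive-AMM} with $c=3$ to conclude a $12\eps'$-AMM (then reparameterize). The only cosmetic difference is that the paper fixes $\eps'=\eps/12$ up front rather than rescaling at the end; your aside about taking $\eps'=\eps/|H_K|$ should be discarded in favor of the simpler $\eps'\le\eps$ you already note, since the former would inflate the hidden $O_{\eps'}$ runtime factor.
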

	\begin{proof}
	By \Cref{size-degree}, the number of edges in $K$ is at most $|E_K|=O(d\cdot \mu(G))$. We then compute a $(1+\eps')$-max weight matching $M'$ in the graph $G$ with edge weights equaling the number of high-degree nodes incident on them, $w_e = \sum_{v\in e} \mathds{1}[v\in H_K] \in \{0,1,2\}$.
	By \Cref{linear-mwm}, this can be done in deterministic time $O_{\eps}(d\cdot \mu(G))$.
	By \Cref{the-adaptive-AMM}, this guarantees that at least a $(1-2\eps')/(1+\eps')\geq (1-3\eps')$ fraction of high-degree nodes in $K$ are unmatched by this dynamic subroutine.
	We then extend $M'$ to also be maximal in $K$, by scanning over the $|E_K|=O(d\cdot \mu(G))$ edges of $K$ (in the same deterministic time) and adding them to $M'$ where possible. 
	By \Cref{the-adaptive-AMM}, this results in a $12\eps'$-AMM, i.e., an $\eps$-AMM, after a total of $O_\eps(d\cdot \mu(G))$ deterministic time.
	\end{proof}
	
	So far, we have provided a \emph{static} AMM algorithm with deterministic time $O_{\eps}(d\cdot \mu(G))$, \emph{provided we have access to a kernel}.
	To dynamize the above, we fisrt show how to (periodically) compute a kernel dynamically.
	
	\subsection{Periodic kernels and AMMs}
	In \cite{wajc2020rounding}, Wajc provided a method for rounding dynamic fractional matchings to matching sparsifiers, and from these (by methods underlying \Cref{alg:reduction}), we can obtain  integral matchings. Crucially for our needs, his framework was robust, and allowed for worst-case update times.
	Unfortunately for us, the lemma statements in his work do not immediately imply a robust dynamic kernel maintenance.
	However, they do allow for \emph{kernel queries}, with running time $\tilde{O}(d\cdot \mu(G))$.%\footnote{\cite[Theorem 2.1]{bernstein2021framework} assert that \cite{wajc2020rounding} shows how to maintain a kernel at all times. However, as his framework relies on periodic computation every $\Theta(\eps\cdot \mu(G))$ updates for amortized (and worst-case) efficiency, one can see that some nodes may lose many edges in the kernel during a period, thus violating the second property of kernels.}

	\begin{restatable}{lem}{periodickernel}\label{periodic-kernel}
	    Let $\eps\in (0,1)$ and $d=\tilde{O}_{\eps}(1)$ be sufficiently large. Then, there exists a robust algorithm with worst-case update time $t_u = \tilde{O}_{\eps}(1)$ allowing for $(\eps,d)$-kernel and $\eps$-AMM queries in worst-case query time $t_q=\tilde{O}_{\eps}(d\cdot \mu(G))$. The query's outputs are a kernel and an $\eps$-AMM w.h.p.
	\end{restatable}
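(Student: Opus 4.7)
The plan is to combine Wajc's robust dynamic rounding framework \cite{wajc2020rounding} with the static kernel-to-AMM procedure of \Cref{static-AMM}. The key observation, already foreshadowed in the paragraph preceding the lemma, is that Wajc's framework maintains data structures (built on top of a robust worst-case polylogarithmic-time dynamic fractional matching algorithm) that do not explicitly store a kernel at every point, but that \emph{do} support efficient sampling of a kernel on demand. Thus, we pay nothing extra for the kernel during updates, and only materialize it when a query is issued.

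First, I would maintain a $(1-\eps)$-approximate fractional matching $x$ of $G$ using a robust dynamic algorithm with worst-case update time $\tilde{O}_\eps(1)$, together with the auxiliary hierarchical/levelled data structures from \cite{wajc2020rounding} that assign each edge a sampling weight. All of these are maintained against an adaptive adversary with no per-update work devoted to a kernel. Second, on a kernel query, I would invoke Wajc's sampling procedure to draw a subgraph $K$ where each edge is retained with probability proportional to $d\cdot x_e$; standard Chernoff bounds over the resulting degrees show that, for $d=\tilde{O}_\eps(1)$ sufficiently large, $K$ is an $(\eps,d)$-kernel of $G$ w.h.p. By \Cref{size-degree}, any bounded-degree subgraph of $G$ has at most $O(d\cdot \mu(G))$ edges, so $K$ can be materialized in time $\tilde{O}_\eps(d\cdot\mu(G))$ by scanning the (structured) support of $x$ and performing the sampling. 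Crucially, fresh randomness is used per query, so the adversary's later updates (which may depend on this query's output) do not compromise subsequent queries.

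Third, for an $\eps$-AMM query, I would materialize the kernel $K$ as above and then apply \Cref{static-AMM} to $K$, which deterministically produces an $\eps$-AMM in time $O_\eps(d\cdot\mu(G))$. The total query time is therefore $\tilde{O}_\eps(d\cdot\mu(G))$ in both cases, as required, and correctness follows by conjoining the high-probability guarantee of the kernel sampling with the deterministic guarantee of \Cref{static-AMM}.

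The main obstacle is a careful re-examination of \cite{wajc2020rounding}: the lemmas there are stated for \emph{maintained} matching sparsifiers, and one needs to trace through the argument to extract an \emph{on-demand} kernel query whose running time scales with $d\cdot\mu(G)$ (and not with $n$ or $|E|$). This requires verifying that (i) the fractional matching data structures already encode everything needed to perform the sampling in one pass over the support, (ii) the support itself has size $O(d\cdot\mu(G))$ up to polylog factors, and (iii) the concentration analysis for the kernel degree bound and for the high-degree property of edges in $E\setminus E_K$ can be executed with randomness drawn fresh at query time, preserving adversarial robustness.
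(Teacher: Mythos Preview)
Your high-level plan is correct and matches the paper's: maintain Wajc's rounding infrastructure during updates, materialize a kernel on query, and then invoke \Cref{static-AMM}. However, two of your technical descriptions are inaccurate in ways that matter for the proof to go through.

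First, the fractional matching you maintain is not merely a $(1-\eps)$-approximate fractional matching; it must be a $(c,d)$-\emph{approximately-maximal} fractional matching in the sense of \Cref{def:approx-max} (every edge has value exceeding $1/d$ or has an endpoint with fractional degree at least $1/c$). This structural property---not the approximation ratio---is what drives the kernel guarantee: it is precisely what lets you argue via \Cref{kernel-whp} that each edge of $G$ missing from the sampled subgraph has a high-degree endpoint in $K$. A generic near-optimal fractional matching need not have this property.

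Second, Wajc's sampling procedure is not per-edge independent sampling with probability proportional to $d\cdot x_e$; it is the edge-color-and-sample routine of \Cref{alg:sparsify}. During updates one maintains, for each weight bucket $G_i$, a $2\lceil(1+\eps)^i\rceil$-edge-coloring (in $O(\log n)$ time per change using \cite{bhattacharya2020deterministic}); on a query one samples $\tilde{O}_\eps(d)$ \emph{colors} per bucket and outputs the union of those matchings. The query time then follows immediately because the output consists of $\tilde{O}_\eps(d)$ matchings, each of size at most $\mu(G)$. This sidesteps your obstacle~(ii) entirely: there is no need to scan the support of $x$, and no need to argue that the support has size $\tilde{O}_\eps(d\cdot\mu(G))$ (which is not obviously true). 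Your per-edge sampling description would require such a support bound and does not directly yield the stated query time.
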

    Given the ability to query a kernel, the ability to query an AMM then follows directly from \Cref{static-AMM}.
    As the proof and presentation of an algorithm allowing for kernel queries essentially requires repeating verbatim numerous lemmas in \cite{wajc2020rounding}, we defer its proof to \Cref{appendix:adaptive}. 
    
    We now turn to designing a robust dynamic algorithm that \emph{always} maintains an AMM.
	
	\subsection{Robust dynamic AMMs}
	
	So far, we have provided a method to answer AMM queries in a dynamic setting. 
	To lift this result to obtain AMM maintenance algorithms, we can rely on 
	
	\begin{lem}\label{AMM-stable}
	Let $\eps\in (0,1/2)$. If $M$ is an $\eps$-AMM in $G$, then the non-deleted edges of $M$ during any sequence of at most $\eps \cdot \mu(G)$ updates constitute a $6\eps$-AMM in $G$ (during the updates).
	\end{lem}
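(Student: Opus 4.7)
The plan is to exhibit, for any graph $G_t$ obtained from $G_0 := G$ after $t \leq \eps \cdot \mu(G)$ updates, a vertex subset $S_t$ of size at most $6\eps \cdot \mu(G_t)$ such that the surviving matching $M_t \subseteq M$ is maximal in $G_t \setminus S_t$. The starting point is that, since $M$ is an $\eps$-AMM in $G$, there is a set $S \subseteq V$ with $|S| \leq \eps\cdot \mu(G)$ such that $M$ is maximal in $G \setminus S$. I would then augment $S$ to cover exactly the two ways the update sequence can break this maximality: by inserting fresh edges between currently unmatched vertices, and by newly unmatching vertices through deletion of their $M$-edges.

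Concretely, let $D \subseteq M$ denote the $M$-edges deleted during the updates and let $I$ denote the inserted edges. Let $V(I)$ be an arbitrary one-endpoint-per-edge subcover of $I$, and put $S_t := S \cup V(D) \cup V(I)$. To verify maximality, take any edge $e \in G_t \setminus S_t$. If $e$ was newly inserted, then one of its endpoints lies in $V(I) \subseteq S_t$, a contradiction. Otherwise $e$ survives from $G_0$ and avoids $S$, so $e \in G_0 \setminus S$; by maximality of $M$ there, some endpoint $u$ of $e$ lies in $V(M)$. If the $M$-edge containing $u$ was deleted then $u \in V(D) \subseteq S_t$, another contradiction; so $u \in V(M_t)$, and $M_t$ is indeed maximal in $G_t \setminus S_t$.

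It remains to bound $|S_t|$, and this is the step where care is needed to land on the constant $6$ rather than something worse. The key accounting observation is that each deletion and each insertion spends one of the $t$ update tokens, so $|D| + |I| \leq t \leq \eps \cdot \mu(G)$. Consequently
\[
|S_t| \;\leq\; |S| + 2|D| + |I| \;\leq\; \eps \cdot \mu(G) + 2\bigl(|D|+|I|\bigr) \;\leq\; 3\eps \cdot \mu(G).
\]
A crude bound that applies $|D|,|I|\leq t$ separately would give $4\eps\cdot\mu(G)$, ultimately only a $8\eps$-AMM after the conversion below.

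Finally, to pass from $\mu(G)$ to $\mu(G_t)$, note that each update changes $\mu$ by at most one, so $\mu(G_t)\geq \mu(G)-t\geq (1-\eps)\mu(G)\geq \mu(G)/2$ since $\eps<1/2$. Hence $|S_t|\leq 3\eps\cdot \mu(G)\leq 6\eps\cdot \mu(G_t)$, certifying $M_t$ as a $6\eps$-AMM in $G_t$ as required. The main obstacle, as flagged above, is the tight charging $2|D|+|I|\leq 2(|D|+|I|)$ via the shared update budget; without this observation the lemma's claimed $6\eps$ factor is not attainable.
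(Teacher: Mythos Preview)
Your proof is correct and follows essentially the same approach as the paper: both start from the witnessing set $S$ for $M$, augment it by (at most) two nodes per $M$-edge deletion and one node per insertion to obtain a set of size at most $3\eps\cdot\mu(G)$, and then use $\mu(G_t)\geq(1-\eps)\mu(G)\geq\mu(G)/2$ to convert this to $6\eps\cdot\mu(G_t)$. Your write-up is in fact more explicit than the paper's sketch (which just says ``it might be that $2\eps\cdot\mu(G)$ edges in $G'$ are now not incident on edges in $M'$'' without spelling out the case analysis), and your observation that $|D|+|I|\leq t$ via the shared update budget is the right way to avoid the looser $8\eps$ bound.
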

	\begin{proof}
	    Let $G$ and $M$ be the graph and matching before the updates, and let $G'$ and $M'$ be their counterparts after these updates. Since each update can decrease the size of the maximum matching size by at most one, we have that 
	    $$\frac{1}{2}\cdot \mu(G) \leq (1-\eps)\cdot \mu(G) \leq \mu(G').$$
	    Now, recall that for some set of vertices $U\subseteq V$ of size at most $|U|\leq \eps\cdot \mu(G)$ nodes from $G$, the matching $M$ is maximal in $G[V\setminus U]$.
	    Now, after these $\eps\cdot \mu(G)$ updates, it might be that $2\eps\cdot \mu(G)$ edges in $G'$ are now not incident on edges in $M'$. (The factor of two arises due to edges of $M$ that are deleted leaving two uncovered edges, addressable by removing two more nodes).
	    That is, after removing a node set $U'\subseteq V$ of size at most $|U'|\leq 3\eps\cdot \mu(G)\leq 6\eps\cdot \mu(G')$ nodes from $G'$, we obtain a graph $G'[V\setminus U']$ where $M'$ is maximal.
	    That is, $M'$ is an $O(\eps)$-AMM in $G'$.
	\end{proof}
	
	The above ``stability'' property of AMMs again lends itself to the periodic re-computation framework of \cite{gupta2013fully}, which, together with our algorithms for querying for AMMs, yields algorithms for maintaining AMMs (always).
	
	\begin{lem}
	    Let $\eps\in (0,1/2)$. Then, there exists a robust dynamic algorithm for maintaining an $\eps$-AMM w.h.p.~(at all times) in w.c.~update time $\tilde{O}_{\eps}(1)$.
	\end{lem}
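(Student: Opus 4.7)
The plan is to apply the periodic recomputation framework of \cite{gupta2013fully} to the $\eps'$-AMM query of \Cref{periodic-kernel} for $\eps' := \eps/6$, with careful worst-case deamortization. I would proceed in \emph{epochs} whose length is tied to the current matching size: at the start of epoch $i$, I would record $\mu_i := \mu(G)$ (estimated up to constants via the maintained kernel, using \Cref{AMM-size-bound}) and set the epoch length to $T_i := \lfloor \eps' \cdot \mu_i / 4 \rfloor$ updates.

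At the start of each epoch I would initiate an $\eps'$-AMM query from \Cref{periodic-kernel}, whose total cost is $\tilde{O}_\eps(d \cdot \mu_i) = \tilde{O}_\eps(\mu_i)$. Rather than paying this cost in a single burst, I would split the work into $T_i$ roughly equal-sized chunks and execute one chunk per update during the epoch, yielding $\tilde{O}_\eps(\mu_i)/T_i = \tilde{O}_\eps(1)$ worst-case overhead per update, on top of the kernel maintenance's own $\tilde{O}_\eps(1)$ worst-case update time.

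The matching output during epoch $i$ would be the AMM $M_{i-1}$ computed one epoch ago, with all edges deleted in the interim removed. Between the time $M_{i-1}$ was first computed and the present, at most $2T_i \leq \eps' \cdot \mu_{i-1}$ updates have occurred (using that $\mu$ changes by at most one per update, so $\mu_{i-1}$ and $\mu_i$ stay within an $O(\eps')$-factor of each other). Hence by \Cref{AMM-stable}, $M_{i-1}$ remains a $6\eps'$-AMM, i.e.~an $\eps$-AMM, in the current graph throughout the epoch. The high-probability correctness guarantee is inherited from \Cref{periodic-kernel} via a union bound over the (polynomially many) epochs.

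The main obstacle I anticipate is the boundary regime of very small $\mu(G)$, where $T_i = 0$ makes the deamortization schedule degenerate. I would sidestep this by observing that when $\mu_i < c/\eps'$ for a suitable constant $c$, the entire query cost is already $\tilde{O}_\eps(\mu_i) = \tilde{O}_\eps(1)$, so the algorithm can afford to invoke the query of \Cref{periodic-kernel} from scratch after every single update. Switching between the two regimes based on the current (constant-factor) estimate of $\mu(G)$, which is itself available from the maintained kernel, should yield worst-case update time $\tilde{O}_\eps(1)$ in all cases, completing the proof.
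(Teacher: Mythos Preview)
Your proposal is correct and follows essentially the same approach as the paper: epochs of length $\Theta(\eps\cdot\mu(G))$, spreading the AMM query of \Cref{periodic-kernel} over each epoch for worst-case deamortization, outputting the previous epoch's matching, invoking \Cref{AMM-stable} for stability, and handling the small-$\mu(G)$ regime by querying directly. One minor discrepancy: you propose estimating $\mu(G)$ ``via the maintained kernel,'' but the kernel in \Cref{periodic-kernel} is only available via a query, not maintained continuously; the paper instead runs the deterministic $(2+\eps)$-approximate fractional matching algorithm of \cite{bhattacharya2017fully} in parallel to obtain this constant-factor estimate in worst-case $\tilde{O}_\eps(1)$ time, which cleanly avoids any circularity.
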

	\begin{proof}
	    We will run the dynamic AMM query algorithm $\calA$ of \Cref{periodic-kernel}, whose update fits within our update time budget.
	    We will periodically query $\calA$, and spread this computation over these periods to guarantee low worst-case update time.
	    Specifically, we will divide the update sequence into \emph{epochs}, where if the graph $G$ at the start of epoch $i$ is $G_i$, then the epoch has length $\ell_i \in [\eps\cdot \mu(G)/3, \eps\cdot \mu(G_i)]$. 
	    In order to determine the length of the epochs, we run the deterministic dynamic $(2+\eps)$-approximate fractional matching algorithm of \cite{bhattacharya2017fully}, which in particular gives us a $2+\eps \leq 3$-approximation of $\mu(G_i)$ in worst-case update time $\tilde{O}_{\eps}(1)$, again fitting within our time budgets.
	    Now, during phase $i$, we spend the time $t_q$ for the $\eps$-AMM query subroutine of $\calA$, so as to finish computing $M_i$.
	    The amount of time spent per update to achieve this goal is 
	    at most $\frac{\tilde{O}_{\eps}(\mu(G_i))}{\lfloor\eps\cdot \mu(G_i)/10\rfloor} = \tilde{O}_{\eps}(1),$
	    again fitting within our time updates.
	    We now describe and analyze the matchings maintained by this algorithm (these are not always $M_i$).
	    
	    By \Cref{AMM-stable}, we need to provide a matching $M'_{i+1}$ at the start of each phase $i+1$ which is an $O(\eps)$-AMM in $G_{i+1}$, thus guaranteeing that the non-deleted edges of $M'_{i+1}$ remain an $O(\eps)$-AMM. Reparameterizing appropriately will then yield the desired result. It remains to define our matchings $M'_{i}$.
	    Using our estimate of $\mu(G)$ obtained by the dynamic fractional matching, we test whether $\mu(G_i) \in [1/\eps, 10/\eps]$. If this is the case, then $M'_{i+1}$ is obtained by querying the AMM algorithm $\calA$ at the beginning of phase $i+1$, in time $O_{\eps}(1)$. (This relied on $\mu(G_{i+1})\leq \mu(G_i) + \ell_i \leq \mu(G_i)\cdot (1+\eps) = \tilde{O}_{\eps}(1)$.) By the properties of $\calA$, the matching $M'_{i+1}$ is an $\eps$-AMM in $G_{i+1}$ w.h.p.
	    Now, if conversely $\mu(G_i) \geq 10/\eps$, then we have that 
	    $$\frac{1}{2}\cdot \mu(G_i) \leq (1-\eps)\cdot \mu(G_i) \leq \mu(G_i)-\ell_i \leq \mu(G_{i+1}).$$
	    Now, $M_i$, is an $\eps$-AMM in $G_i$, which is obtained from $G_{i+1}$ by at most $\eps\cdot \mu(G_i)\leq 2\eps\cdot \mu(G_{i+1})$ updates. Therefore, by \Cref{AMM-stable} $M_i$ is a $12\eps$-AMM in $G_{i+1}$. We therefore take $M'_{i+1}$ to be $M_i$. Reparameterizing $\eps$ appropriately, the lemma follows.
	\end{proof}

	\section{Conclusion and Future Directions}\label{sec:conclusion}
	
	We presented the first dynamic matching (size estimation) algorithm breaking the approximation barrier of $2$ in polylogarithmic update time.
	While this presents a major advance in our understanding of the dynamic matching problem, many questions remain. We mention a few such questions which we find particularly intriguing.
	
	\paragraph{Explicit Fast Matching.} In our work we show how to maintain a better-than-two approximate estimate of the maximum matching size. Can one also maintain an explicit matching of similar approximation ratio within the same time bounds?
	
	\paragraph{Better approximation in $o(n)$ update time?} Known conditional impossibility results rule out an exact algorithm with $n^{1-\Omega(1)}$ update time \cite{abboud2014popular,henzinger2015unifying,dahlgaard2016hardness}, but the best approximation ratios currently known are $\frac{3}{2}+\eps$ \cite{bernstein2015fully,bernstein2016faster,kiss2022improving,grandoni2022maintaining}. Can one do better in $o(n)$ time?
	On the flip side, can we show any (conditional) hardness of \emph{approximate} dynamic matching, for any approximation ratio?
	
	\paragraph{Unconditional impossibility results.} 
	With this work we bring dynamic matching with better-than-two approximation into the polylogarithmic update time regime---the range where \emph{unconditional} impossibility are known for numerous data structures and dynamic algorithms. 
	Can such unconditional impossibility results be proven for (approximate) dynamic matching?

 \paragraph{Acknowledgements.} We thank the anonymous reviewers for helpful comments.

\appendix
\section*{APPENDIX}
	
	\section{Proofs of basic building blocks}\label{appendix:previous-blocks}
	
	Here we substantiate some key propositions implied by prior work.
	We stress that we provide proofs mostly for completeness, due to our propositions being slight variants or being differently organized than their previous counterparts. That is, we do not claim novelty of the underlying ideas of this section.

	\subsection{Proof of Proposition~\ref{reduction}}

	A key component of \Cref{reduction} is the following vertex sparsification technique for dynamic settings by Kiss \cite{kiss2022improving}, adapted from such a vertex sparsification of Assadi et al.~\cite{assadi2016maximum} in the context of stochastic matching.
	
	\begin{restatable}{prop}{vertexsparsification}\label{vertex-sparsification}
		There exists a randomized algorithm which for each update to $G$ makes an update to $O\left(\frac{\log^2n}{\eps^3}\right)$ contracted subgraphs, such that w.h.p.~throughout any (possibly adaptively generated) update sequence, one subgraph $G'$ has a matching of cardinality $\mu(G)\cdot (1-O(\eps))$ and nodeset of size $n'\leq \mu(G)/\eps$.
		Moreover, any matching $M'$ in $G'$ can be transformed into a matching in $G$ of cardinality $|M'|$ in time $O(|M'|)$. For any matching $M'$ in any $G'$ undergoing edge updates we can maintain a matching of cardinality $|M'|$ in $G$ with $O(1)$ worst-case update time.
	\end{restatable}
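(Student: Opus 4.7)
The plan is to adapt the vertex-sparsification of Assadi et al.~\cite{assadi2016maximum} to the dynamic, adaptive-adversary setting, following Kiss~\cite{kiss2022improving}. Maintain, for each ``guess'' $k$ in a geometric grid $I = \{\lceil (1+\eps)^i \rceil : 0 \leq i \leq O(\log_{1+\eps} n)\}$ of size $|I| = O(\log n/\eps)$, and for each of $T = \Theta(\log n/\eps^2)$ independent random hash functions $h_{k,j}\colon V \to [k/\eps]$, a contracted multigraph $G'_{k,j}$ whose nodes are the $k/\eps$ buckets and whose edges are the images of edges of $G$ under $h_{k,j}$. The total number of sparsifiers is $|I|\cdot T = O(\log^2 n/\eps^3)$, matching the claimed bound, and each update $(u,v)$ in $G$ triggers an $O(1)$-time insertion or deletion of $(h_{k,j}(u), h_{k,j}(v))$ in every $G'_{k,j}$, along with a constant-time update to a dictionary that stores one preimage $\mathrm{repr}(e') \in E(G)$ per edge $e' \in E(G'_{k,j})$.

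For the quality guarantee, fix a time $t$ and let $k^\star \in I$ be the largest guess with $k^\star \leq \mu(G_t)$; then $G'_{k^\star, j}$ has node-set size $k^\star/\eps \leq \mu(G_t)/\eps$. For any fixed maximum matching $M^\star$ of $G_t$, each edge of $M^\star$ has both endpoints hashed to the same bucket under $h_{k^\star, j}$ with probability $\eps/k^\star$, so the expected number of collapsed $M^\star$-edges is at most $\mu(G_t)\cdot \eps/k^\star \leq (1+\eps)\eps$. A Chernoff/Markov argument on these collision indicators shows each single copy preserves a $(1-O(\eps))$-fraction of $M^\star$ with probability bounded away from $0$; amplifying across the $T$ copies drives the probability that \emph{no} copy works to $\exp(-\Omega(T)) = 1/\poly(n)$. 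The surviving edges of $M^\star$ are vertex-disjoint in $G'_{k^\star, j}$ and therefore form an integral matching of size $(1-O(\eps))\mu(G_t)$ in a sparsifier of the claimed size.

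The main obstacle is to promote this single-step guarantee to one that holds for every time $t$ against an adaptive adversary. The fix is a union bound over (i) the $O(\log n/\eps)$ scales, (ii) the $\poly(n)$-length update sequence, and (iii) a \emph{canonical} representative $M^\star(G_t)$ (say, the lexicographically first maximum matching under a fixed edge ordering) that depends only on the current graph. The extra $1/\eps^2$ factor in $T$ is chosen precisely so that the single-step failure probability is below $1/n^{\Omega(1/\eps)}$ and the union bound above is absorbed. Fixing the canonical matching is what renders each individual concentration bound applicable despite the adversary's ability to correlate future updates with the hashes through past outputs.

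Finally, the transformation from a matching $M'$ in $G'_{k,j}$ to one in $G$ follows immediately from the stored dictionary: set $M := \{\mathrm{repr}(e') : e' \in M'\}$, which is a valid matching of cardinality $|M'|$ in $G$ because distinct nodes of $G'_{k,j}$ have disjoint preimages in $V$ (two matched endpoints in the sparsifier hash to distinct buckets, hence correspond to disjoint vertex sets in $G$). Reading off $M$ takes $O(|M'|)$ time, and if $M'$ is maintained dynamically, each insertion or deletion in $M'$ triggers the corresponding $O(1)$-time update to $M$ via the stored representative, yielding the claimed worst-case $O(1)$ update time for the $G$-side matching.
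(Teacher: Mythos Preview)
Your high-level construction (geometric grid of scales, $\Theta(\log n/\eps^2)$ independent contractions per scale, dictionary of representatives) matches the paper's. However, two of your analytic steps have genuine gaps.

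First, your single-copy analysis only counts self-loop collisions. The claim that ``the surviving edges of $M^\star$ are vertex-disjoint in $G'_{k^\star,j}$'' is false: if $(u,v),(u',v')\in M^\star$ with $h(u)=h(u')$, then the images $(h(u),h(v))$ and $(h(u'),h(v'))$ share a vertex even though neither edge collapsed. The expected number of such inter-edge collisions is $\Theta(\eps\,|M^\star|)$, not $O(\eps)$. The correct quantity to control is the number of vertices of $V(M^\star)$ that land \emph{alone} in their bucket; the paper does this via Chernoff on the (negatively associated, balls-and-bins) ``alone'' indicators, obtaining single-copy failure probability $\exp(-\Theta(\eps^2 k))$ rather than merely constant success probability.

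Second, and more importantly, the adaptive-adversary step does not go through. The canonical matching $M^\star(G_t)$ is a deterministic function of $G_t$, but $G_t$ is chosen by an adversary who may depend on the hashes through past outputs, so $M^\star(G_t)$ is \emph{not} independent of the randomness. A fixed-matching concentration bound followed by a union bound over $\poly(n)$ time steps is therefore invalid: the adversary can steer $G_t$ precisely toward a canonical matching on which the hashes perform badly. The paper instead union-bounds over \emph{all} matchings of size in $[k,k(1+\eps)]$ (at most $n^{O(k)}$ many), showing that w.h.p.\ some copy at scale $k$ works for every such matching simultaneously, and hence for whatever matching the adaptive adversary induces. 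This union bound is exactly why the sharper $\exp(-\Theta(\eps^2 k))$ per-copy bound is needed: combined with $T=\Theta(\log n/\eps^2)$ repetitions it gives all-copies-fail probability $n^{-\Theta(k)}$, which absorbs the $n^{O(k)}$ matchings. Your Markov-based constant-per-copy bound yields only $n^{-\Theta(1/\eps^2)}$ after amplification, which cannot absorb $n^{O(k)}$ once $k\gg 1/\eps^2$.
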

	\begin{proof}
	Consider a random graph $G'$ obtained by hashing each node into one of $k/\eps$ buckets, for some integer $k$, and contracting all nodes that are hashed into the same bin.
	That is, two contracted nodes neighbor in $G'$ if their corresponding bins contain neighboring nodes in $G$.
	By storing for each edge $e$ in $G'$ a list of edges inducing $e$, we can easily transform a matching $M'$ in $G'$ to a matching in $G$ of the same cardinality in time $O(|M'|)$. The majority of this proof is thus dedicated to showing that $O\left(\frac{\log n}{\eps^2}\right)$ such contractions for each value $k=\lceil (1+\eps)^i\rceil$ with $i\in [\log_{1+\eps}(n)]\subseteq \left[O\left(\frac{\log n}{\eps}\right)\right]$ suffice to guarantee that one of these $G'$ contains a matching of cardinality at least $\mu(G)\cdot (1-3\eps)$.
		
	Fix an integer $i$ and $k=\lceil (1+\eps)^i\rceil \leq n$. 
	Fix a matching $M$ in $G$ of cardinality $|M|\leq k$. The probability that a vertex $v$ incident on some edge of $M$ is contracted into a separate bin than the other $2|M| - 1$ endpoints can be expressed as follows:
	\begin{equation}
	(1 - 1/(k/\epsilon-1))^{2|M|-1} \geq \left(1 - \frac{2 \cdot \epsilon}{k} \right)^{2 \cdot k} \geq \nonumber (1 - 5 \cdot \epsilon).
	\end{equation}
	Thus, by linearity, the number $X$ of such endpoints of edges of $M$ satisfy that $\E[X] \geq (1-5\epsilon)\cdot2|M|$. Observe that $X$ is the sum of negatively associated random variables, by \cite{DubhashiR98}, since the hashing of vertices is equivalent to the folklore balls and bins experiment, so by standard Chernoff Bounds, 
	$$\Pr[X \leq 2 \cdot |M|\cdot (1-6\eps)]\leq \exp\left(-\Theta(\eps^2 |M|)\right).$$
	
    If at least $2|M|\cdot(1 - 6 \cdot \epsilon)$ endpoints of $M$ are hashed to unique vertices then at least $|M| -  2|M| \cdot 6 \epsilon \geq |M| \cdot (1- 12\eps)$ edges of $M$ had both of their endpoints assigned to unique vertices in $G'$ hence are present in $G'$.
	
	We say that the contraction is \emph{bad} if for \emph{some} matching $M$ of cardinality in the range $[k,k(1+\eps)+1]$ if the number of edges of $M$ that are not present in $G'$ is lesser than $|M|\cdot (1-12\eps)$. Otherwise, it is \emph{good}.
	Now, there are $\sum_{i=k}^{k(1+\eps)+1}{n \choose i}\leq k\eps\cdot n^{k(1+\eps)+1}\leq n^{k(1+\eps)+2}$ possible matchings of size $|M|\in [k,k(1+\eps)]$. 
	Therefore, by randomly contracting the graph for range $[k,k(1+\eps)+1]$ some $\frac{C\log n}{\eps^2}$ many times, for a sufficiently large $C$, we have that the probability that all contractions for range $[k,k(1+\eps)+1]$ are bad is
	\begin{align*}
	\Pr[\textrm{all contractions are bad}] & \leq n^{k(1+\eps)+2} \cdot \exp\left(-\Theta(\eps^2 k) \cdot \frac{C \log n}{\eps^2}\right) \leq n^{-3}.
	\end{align*}
	Therefore, taking union bound over the $\log_{1+\eps}(n)$ possible value of $k$, we find that with high probability, each range $[k,k(1+\eps)+1]$ has some good contraction.
	
	We conclude that, w.h.p., among the $O\left(\frac{\log^2n}{\eps^3}\right)$ contracted graphs, there exists a good contraction for every $k=\lceil (1+\eps)^i\rceil$, and in particular for $k\leq \mu(G) \leq k(1+\eps)+1$. That is, one of the contracted graphs contains a large matching, $\mu(G')\geq \mu(G)\cdot (1-12\eps)$, and has few nodes, $n'\leq k/\eps \leq \mu(G)/\eps$, as desired.
	\end{proof}

	We now proceed towards proving \Cref{reduction}, restated below for ease of reference. 	
	\reduction*
	
	\begin{proof}
	Let $\epsilon' = \alpha' \cdot \epsilon \cdot 2$ (here $\alpha'$ is some $O(1)$ factor). Using the algorithm described by \Cref{vertex-sparsification} we can generate $T = \tilde{O}_\epsilon(1)$ graphs $G_i : i \in [T]$ with the following properties: A) $\mu(G_i) \leq \mu(G)$ for all $i \in [T]$, B) There is an $i \in [T]$ satisfying that $\mu(G_i) \geq (1 - \epsilon' ) \cdot \mu(G)$ and $\mu(G_i) \geq n \cdot \epsilon'$, C) All sub-graphs $G_i$ undergo a single update when $G$ undergoes an update. 
	
	Our algorithm proceeds as follows: on all $T$ generated sub-graphs we run algorithm $\calA$ at all times. Furthermore, on each sub-graph we maintain an $O(1) = \alpha'$-approximate estimate on the maximum matching size $\tilde{\mu_i}$ using algorithms from literature (randomized against an adaptive adversary) in $\tilde{O}_\epsilon(1)$ worst-case time. For all sub-graphs we monitor the relationship of $\tilde{\mu_i}$ and $|V_i|$. If $\tilde{\mu_i}$ increases above the threshold of $|V_i| \cdot \epsilon$ we start a run of the query algorithm on $G_i$ returning us an $\alpha$-approximate estimate of $\mu(G_i)$ which will define $\nu_i'$. We distribute the work of this query over $|V_i| \cdot (\epsilon)^2$ updates and re-initiate the query every $|V_i| \cdot (\epsilon)^2$ updates. The matching size queries of $G_i$ always run on the state of $G_i$ at the start of the query (even though $G_i$ undergoes updates during it's run). If $\tilde{\mu}_i$ decreases bellow the threshold of $|V_i| \cdot \epsilon$ we stop the querying process and set $\nu_i' = 0$. Note that at initialization we just set $\nu_i' = \mu(G_i)$ for all $i \in [T]$ statically.
	
	At all times we maintain the output $\max_{i \in R} \nu_i'$, the maximum of our matching size estimates.
	    
	  \begin{algorithm}[H]
		\caption{Vertex set sparsification}
		\label{alg:reduction}
		\begin{algorithmic}[1]
		    \State Initialize $\nu_i' = \mu(G_i)$
            \State Maintain contracted sub-graphs $G_i$ and $\alpha'$-approximate matching size estimates $\tilde{\mu}_i$
            \State Run algorithm $\calA$ on every $G_i$
            \For{$i \in [T]$}
                \If {$\tilde{\mu}_i$ becomes at least $|V_i|\cdot \epsilon$}
                    \State Initiate a matching size query of $G_i$ in $O(t_q)$ time on the current state of $G_i$
                    \State Distribute the work over the next $|V_i| \cdot \epsilon^2$ updates 
                    \State Repeatedly recompute distributed over every $|V_i| \cdot \epsilon^2$ updates
                    \State Let $\nu_i'$ be the latest finished estimate
                \EndIf
                \If {$\tilde{\mu}_i$ reduces below $|V_i|\cdot\epsilon$}
                    \State Terminate the querying process of $\mu(G_i)$
                    \State Set $\nu_i' \leftarrow 0$
                \EndIf
            \EndFor
            \State At all times return $\max_{i \in [T]}  \nu_i'$
		\end{algorithmic}
	\end{algorithm}	
	
	We first discuss the update time of Algorithm~\ref{alg:reduction}. The maintenance of the $T$ contracted sub-graphs and matching size estimates $\tilde{\mu}_i$ takes $\tilde{O}_\epsilon(1)$ w.c.~time. Running algorithm $\calA$ on each of the contracted sub-graphs takes update time $\tilde{O}_\epsilon(t_u)$ (and is worst case if $\calA$ has worst-case update time). A matching size query will only be initiated and run on contracted sub-graph $G_i$ if $\tilde{\mu}_i\geq \mu(G_i)/\alpha' \geq |V_i| \cdot \epsilon$, that is if $\mu(G_i) \geq |V_i| \cdot \epsilon/2$. Each re-computation of the estimate $\nu_i'$ will be distributed over some $|V_i| \cdot \epsilon^2$ updates, that is, it will take $O(t_q/(|V_i|\cdot \epsilon^2) = O_\epsilon(t_q / |V_i|)$ worst-case time. Finding and returning the maximum matching size estimate $\nu_i'$ takes $\tilde{O}_\epsilon(1) = O(T)$ time. Therefore, the total update time of the algorithm is $\tilde{O}_\epsilon(t_u + t_q \cdot \beta/n)$ and is worst-case if $\calA$ has worst-case update time. Furthermore, all components of the algorithm but $\calA$ are randomized against an adaptive adversary..
	
	It remains to argue that the algorithm maintains $\nu'$ such that $\nu' \leq \mu(G) \leq \nu' \cdot (\alpha + O(\epsilon))$ at all times. 
	Say that $G_i$ is a 'successful' contraction if $G_i$ satisfies property B). By \Cref{vertex-sparsification}, w.h.p., there is a successful contraction at all times, at time point $\tau_1$ let that contraction be $G_i$. We will separate two instances: 
	
	i) Throughout the run of the algorithm at all times it held that $\tilde{\mu}_i \geq |V_i| \cdot \epsilon$: The algorithm has ran the matching size query sub-routine on $G_i$ after every $|V_i| \cdot \epsilon^2$ edge updates. Let $G^{\tau_0}_i$ be the past state of the graph $G_i$ when the algorithm started calculating the current estimate $(\nu^{\tau_1}_i)'$. By the scheduling of this calculation we know that $\tau_0 \geq \tau_1 - \epsilon^2 \cdot |V_i|$. Hence, $\mu(G^{\tau_0}_i) \geq \mu(G^{\tau_1}_i) - \epsilon^2 \cdot |V_i|$, where $\mu(G^{\tau_1}_i) \geq \mu(G) \cdot (1 - \epsilon')$ and $\mu(G^{\tau_1}_i) \geq |V|_i \cdot \epsilon'$. Hence, $(\nu^{\tau_1}_i)' \cdot \alpha \cdot (1 + O(\epsilon)) \geq \mu(G)$.
	
	ii) At time $\tau_1$ $G_i$ is a successful contraction but at some prior point during the run of the algorithm $\tilde{\mu}_i$ became less than $|V_i| \cdot \epsilon$: we know that at some point $\tau_0$ prior to $\tau_1$ $\tilde{\mu}_i$ must have increased above $|V_i| \cdot \epsilon$. Define the state of $G_i$ at the two time points as $G^{\tau_0}_i$ and $G^{\tau_1}_i$ respectively. As at $\tau_1$ $G^{\tau_1}_i$ is a successful contraction we know that $\mu(G^{\tau_1}_i) \geq |V_i| \cdot \epsilon'$. When $\tilde{\mu}_i$ crossed the threshold at $\tau_0$ it held that $\tilde{\mu}_i = |V_i| \cdot \epsilon$ that is $\mu(G_i^{\tau_0})_i \leq |V_i| \cdot \epsilon \cdot \alpha$. As per each update the maximum matching size may only change by $1$ we have that $\tau_1 - \tau_0 \geq |V_i| \cdot \epsilon \cdot \alpha$. Hence, by time $\tau_1$ the algorithm already had an updated estimate of $\nu_i'$ (that is one calculated in the previous $\epsilon^2 \cdot |V_i|$ updates such that $\mu(G_i) \geq |V_i| \cdot \epsilon$ during these updates). Here we can refer back to the previous case (pretending the algorithm initialized at $\tau_0$).
\end{proof}

\subsection{Proof of Proposition~\ref{poor-approx-maximal-matching=many-aug-paths}}
	
	We now give a proof extending standard arguments that small maximal matchings contain many length-three augmenting paths to showing that small $\eps$-AMM likewise contain many such paths.
	
	\shortaug*
	\begin{proof}
		The above bound for $\epsilon=0$ is well-known (see, e.g., \cite{konrad2012maximum}). We reduce to this case by removing the at most $\eps \cdot \mu(G)$ nodes in $V\setminus V(M)$ needed to make $M$ maximal.
		This yields a graph $G'$ with $\mu(G')\geq \mu(G)\cdot (1-\eps)$, and therefore
		$$|M|\leq \left(\frac{1}{2}+c\right)\cdot \mu(G)\leq \frac{\frac{1}{2}+c}{1-\eps}\cdot \mu(G') \leq \left(\frac{1}{2}+c+\eps\right)\cdot \mu(G'),$$ 
		Consequently, by the special case of this proposition with $\eps=0$, we have that the maximum number of disjoint 3-augmenting paths that $M$ admits in $G'$ (and hence also in $G$) is at least 
		\begin{align*}
		\left(\frac{1}{2}-3(c+\eps)\right)\cdot \mu(G') & \geq \left(\frac{1}{2}-3(c+\eps)\right)(1-\eps)\cdot \mu(G) \geq \left(\frac{1}{2}-3(c+\eps)-\frac{\eps}{2}\right) \cdot \mu(G). \qedhere
		\end{align*}
	\end{proof}

	\section{Proof of Lemma~\ref{lm:sublinear:extension}}
	\label{appendix:sublinear:extension}

	Our proof of Lemma~\ref{lm:sublinear:extension} is a minor modification of the argument from Section 5 of \cite{behnezhad2022time}. We claim no novelty for this proof. To make our notations consistent with the ones used by  \cite{behnezhad2022time}, we will focus on an $n$-node graph $G=(V,E)$ (different from our dynamic input graph).
	Let $\pi$ be a permutation of the edges of graph $G = (V,E)$. Let $GMM(G,\pi)$ stand for the output of the greedy maximum matching algorithm when run on graph $G$ with edge ordering $\pi$.

	\subsection{Building blocks}
	
 \Cref{lm:sublinear:extension:support:1} is explicitly concluded by \cite{behnezhad2022time}, whereas \Cref{lm:sublinear:extension:support:2} is a slight modification of a construction appearing in Section 5 of \cite{behnezhad2022time} we need to fit our arguments.

	\begin{lem}
	\label{lm:sublinear:extension:support:1}
	There is a randomized algorithm that in $\tilde{O}(|E|/|V|)$ expected time returns the matched status of a random $v$ under $GMM(G, \pi)$, for random $\pi$. This algorithm relies on list access to the edges of $G$.
	\end{lem}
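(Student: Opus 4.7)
The plan is to implement the standard recursive query oracle for the greedy maximum matching, as introduced by Yoshida, Yamamoto and Ito and refined by subsequent works culminating in \cite{behnezhad2022time}. To decide whether a vertex $v$ is matched under $GMM(G,\pi)$, I would process the edges incident to $v$ in increasing $\pi$-rank. For each edge $(v,u)$ examined, I would recursively ask whether $u$ was matched by some edge of $\pi$-rank strictly smaller than that of $(v,u)$. If the recursive call answers no, then the greedy algorithm must have added $(v,u)$ to the matching when it reached that edge, so $v$ is matched and we stop; if it answers yes, we move on to $v$'s next edge in $\pi$-order. If every incident edge is exhausted, $v$ is unmatched. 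Correctness follows by straightforward induction on $\pi$-rank, since the recursive queries at $u$ exactly reproduce the behavior of the greedy algorithm up to the moment it would consider $(v,u)$.

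For the running time, the key observation is that the recursion tree rooted at $v$ only explores, at each expanded edge $e$, neighbors through edges of strictly smaller $\pi$-rank than $e$. Implementation-wise, sampling $v$'s incident edges in $\pi$-order on the fly is supported in the list-access model by the standard trick of assigning each edge an independent uniform priority and extracting the minimum-priority unseen incident edge as needed. For a uniformly random permutation and a uniformly random query vertex, the expected number of edges the recursion touches is controlled by an amortized argument: summing the expected work over all possible query roots yields $\tilde{O}(|E|)$ total work, and so by symmetry each random-vertex query costs $\tilde{O}(|E|/|V|)$ in expectation.

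The main obstacle is the amortized analysis of the recursion. A naive bound on its depth or branching yields only much weaker guarantees such as $O(\bar{d}^{O(1)})$ or $O(|E|)$; obtaining the sharp $\tilde{O}(|E|/|V|)$ bound relies on the refined charging scheme of \cite{behnezhad2022time}, which essentially shows that each directed edge contributes only polylogarithmic amortized work to the total across all roots and random $\pi$. Since the lemma statement matches exactly the guarantee extracted there, I would invoke that analysis as a black box rather than rederive it, restricting novel work to verifying that the list-access primitive suffices to implement the edge-order scan efficiently.
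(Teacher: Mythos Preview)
Your proposal is correct and matches the paper's treatment: the paper does not prove this lemma at all but simply states that it ``is explicitly concluded by \cite{behnezhad2022time}'', so your plan to describe the recursive oracle and then invoke Behnezhad's analysis as a black box is exactly in line with (indeed, slightly more detailed than) what the paper does.
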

	
	In order to prove \Cref{lm:sublinear:extension} we have to work with adjacency matrix queries. Based on a slight modification of Section 5 of \cite{behnezhad2022time} we can derive the following tool for this purpose. 
	
	\begin{lem}
	
	\label{lm:sublinear:extension:support:2}
	Let $\delta\in (0,1/2)$. For a given $n$-node graph $G = (V,E)$ there exists a supergraph $H = (V_H, E_H)$ of $G$ (i.e., $V_H\supseteq V$ and $E_H\supseteq E$)  satisfying the following:
	\begin{itemize}
	    \item $|E_H| = \Theta_\delta(n^2)$.
	    \item $|V_H| = \Theta_\delta(n^2)$.
	    \item At most $\delta \cdot n$ nodes of $V$ are matched to nodes in $V_H \setminus V$ by $GMM(H,\pi)$, w.h.p.~over $\pi$.
	    \item $GMM(H,\pi)\cap E$ is a maximal matching in $G[V\setminus V_{slack}]$, where $V_{slack}\subseteq V$ are nodes in $V$ that are matched to nodes in $V\setminus V_H$.
	    \item Any adjacency list query to $E_H$ (querying the $i$-th neighbour of a vertex according to some ordering of neighbours) can be implemented using one adjacency matrix query to $E$ (querying the existence of any edge $(u,v)$).
	\end{itemize}
	
	\end{lem}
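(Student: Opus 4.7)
The plan is to construct $H$ by padding $G$ with carefully chosen dummy vertices, verify the structural bounds directly, and then reduce the key probabilistic claim---few $V$-vertices matched to dummies---to a coupling analysis of the greedy matching process. Concretely, for every unordered pair $\{v_i, v_j\}$ of distinct vertices in $V$ with $(v_i, v_j) \notin E$, I introduce a shared dummy vertex $d_{ij}$ together with the two edges $(v_i, d_{ij})$ and $(v_j, d_{ij})$; I set $V_H := V \cup \{d_{ij}\}$, padding with isolated extra dummies if necessary to force $|V_H| = \Theta_\delta(n^2)$, and $E_H := E$ together with all dummy edges. The adjacency-list order at $v_i$ is declared to be: the $j$-th neighbor is $v_j$ if $(v_i, v_j) \in E$, and $d_{ij}$ otherwise.

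Four of the five bullets are then relatively direct. The size bounds $|V_H|, |E_H| = \Theta_\delta(n^2)$ hold by construction, with the $\delta$-dependent constant absorbed into the padding. Every adjacency-list query is answered by exactly one matrix query to $G$---namely, whether $(v_i, v_j) \in E$---which determines whether to return $v_j$ itself or the canonical label of $d_{ij}$. The maximality bullet follows from a direct greedy argument: for any $(u,v) \in E$ with $u, v \notin V_{slack}$, if neither endpoint were matched in $GMM(H,\pi)\cap E$ then, since neither is dummy-matched, both would be unmatched in $GMM(H,\pi)$ entirely; but then the edge $(u,v) \in E_H$ would have been added when scanned by $GMM$, a contradiction.

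The main obstacle will be the third bullet: showing $|V_{slack}| \le \delta n$ w.h.p.\ over $\pi$. My plan is to carry out a coupling argument in the spirit of the proof underlying \Cref{lm:sublinear:extension:support:1}, running $GMM(H,\pi)$ alongside an imaginary $GMM(G,\pi|_E)$ and charging every dummy-match to a specific interaction between the two processes. The two key structural observations I would exploit are that each shared dummy $d_{uv}$ has degree exactly two---so it can steal at most one of $\{u,v\}$ into $V_{slack}$---and that a dummy-match of $v$ via $d_{uv}$ requires $(v, d_{uv})$ to precede in $\pi$ every real edge of $v$ incident on an otherwise-unmatched neighbor, while $u$ is already matched by the time $(u, d_{uv})$ is scanned. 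A first-moment bound of this form should give $\E[|V_{slack}|] \le \delta n / 2$ in the regime where $\mu(G)$ is non-trivially large, which is the only regime in which this extension lemma is actually invoked by \Cref{thm:main:dynamic}; in the complementary sparse regime the outer algorithm can safely return $0$ within the allowed $\eps^2 n$ additive error, so the analysis need not cover it. High-probability concentration around the expectation will then follow from a standard bounded-differences argument on the random permutation $\pi$, completing the proof.
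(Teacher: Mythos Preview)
Your construction handles the size, maximality, and query-simulation bullets correctly, but the third bullet---bounding $|V_{slack}|$---fails, and not just in the sparse regime you carve out. Take $G$ to be a perfect matching, so $\mu(G)=n/2$ is as large as possible. Each $v_i\in V$ then has one real edge and $n-2$ dummy edges $(v_i,d_{ij})$. For fixed $j$, let $A_j$ be the event that $(v_1,d_{1j})$ is the earliest in $\pi$ among all $n-1$ edges of $v_1$ together with $(v_j,d_{1j})$; then $\Pr[A_j]=1/n$, the events $\{A_j\}_{j\ge 3}$ are pairwise disjoint, and each $A_j$ forces $v_1$ to be dummy-matched. Hence $\Pr[v_1\in V_{slack}]\ge (n-2)/n$, and by symmetry $\E[|V_{slack}|]\ge n-2$. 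So your proposed first-moment bound $\E[|V_{slack}|]\le \delta n/2$ is off by a factor of roughly $1/\delta$, and no concentration argument can rescue it. Your structural observation is also not quite right: a dummy-match of $v$ via $d_{uv}$ does \emph{not} require $u$ to be already matched when $(u,d_{uv})$ is scanned---it suffices that $(v,d_{uv})$ precedes $(u,d_{uv})$ in $\pi$, which happens with probability $1/2$ regardless of $u$.

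The missing idea is that the dummies must themselves be made unattractive. The paper's construction replaces your degree-$2$ dummies by a single layer $V^*=\{v_j^*\}$ (so $v_i$'s $j$-th neighbor is $v_j$ if $(v_i,v_j)\in E$ and $v_j^*$ otherwise), and then attaches to each $v_j^*$ a private set $U_j$ of $s=10n/\delta$ degree-one distractor vertices. Now each $v_j^*$ has at most $n$ edges into $V$ but at least $10n/\delta$ distractor edges, so its earliest edge in $\pi$ is a distractor with probability $\ge 1-\delta/10$; call such a $v_j^*$ \emph{occupied}. These occupation events are independent across $j$, so a Chernoff bound gives that at most $\delta n$ of the $v_j^*$ are unoccupied w.h.p., and only unoccupied $v_j^*$ can ever be matched to a $V$-vertex. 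This is what yields $|V_{slack}|\le \delta n$ w.h.p., and your construction has no analogue of this distractor mechanism.
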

	
	Informally, the main change in our construction compared to that of \cite{behnezhad2022time} is that our construction will allow us to argue that the random matching in the constructed graph $H$ is, w.h.p., a maximal matching after ignoring a small set of nodes. In contrast, the construction in \cite{behnezhad2022time} resulted in an ``expected'' version of this guarantee. As the high-probability bounds will simplify our discussion later, we modify this construction below. 
	The second change we make is in externalizing the fact that the matching computed this way is maximal, rather than 2-approximate, as stated in \cite{behnezhad2022time}. We now turn to proving the above lemma.
	
	\begin{proof}[Proof of \Cref{lm:sublinear:extension:support:2}]
	    
	    The node-set of $H$ is $V_H := V \cup V^* \cup (W_1 \ldots, W_n) \cup (U_1, \ldots, U_n)$, where $V = \{v_1, v_2, \ldots, v_n\}$ (note that $G = (V, E_H[V])$), $V^* = \{v^*_2, v^*_2, \ldots, v^*_2\}$, $W_i = \{w^1_i, w^2_i, \ldots w^n_i \}$, and the set $U_i = \{u^1_i, u^2_i, \ldots, u^s_i\}$ is of size $s := 10 n /\delta$ for all $i \in [n]$. To specify the edge-set $E_H$, we now define the {\em ordered} adjacency list for every node $v \in V_H$.
	    \begin{itemize}
	        \item Every node $v_i \in V$ has degree exactly $n$: For any $j \in [n]$, if $(v_i, v_j) \in E$ then the $j^{th}$ neighbor of $v_i$ is the node $v_j \in V$, otherwise it is the node $v^*_j \in V^*$. 
	        \item Every node $v^*_i \in V^*$ has degree exactly $n+s$: For any $j \in [n]$, if $(v_i, v_j) \in E$ then the $j^{th}$ neighbor of $v^*_i$ is the node $w^j_i \in W_i$, otherwise it is the node $v_j \in V$. Furthermore, for all $j \in [s]$, the $(n+j)^{th}$ neighbor of $v^*_i$ is the node $u^j_i \in U_i$.
	        \item Each node in $U_j$, for any $j \in [n]$, has only one neighbor (which is $v^*_j$).
	        \item Node $w^j_i$ may have degree at most one: if $(v_i,v_j) \in E$ then $w^j_i$ is a neighbour of $v^*_i \in V^*$, otherwise it is an isolated vertex of $H$.
	    \end{itemize}
    
    \noindent Note that $|V_H| = 2n + n^2 + n s = \Theta(n^2/\delta)$ and that similarly $|E_H|=n^2 + |E| + ns = \Theta(n^2/\delta)$. Furthermore, from the above discussion it is immediate that an adjacency list query to $E_H$ (i.e., querying for the $j$-th neighbor of a vertex) can be implemented using at most one adjacency matrix query to $E$. It remains to prove the remaining two properties of $H$.
    
    To this end, recall that $V_{slack}$ denotes the set of vertices in $V$ matched to $V$* nodes. Then, by maximality of $GMM(H, \pi)$, we have that $GMM(H,\pi)\cap E$ is indeed maximal matching of $G[V \setminus V_{slack}]$. 
    We now turn to bound $|V_{slack}|$.
    To this end, we say a node $v^*\in V^*$ is \textit{occupied} if its earliest edge in $\pi$ has its other endpoint in $W_i$ or $U_i$. Trivially, such an occupied vertex $v^* \in V^*$ is matched to a vertex of $U_{v^*} \cup W_{v^*}$ under $GMM(H, \pi)$. 
    The following simple claim, which follows by a Chernoff bound together with the simple observation that it is unlikely for a node in $V*$ to be matched in $V$ (and thus contribute to $|V_{slack}|$). 
    
    \begin{claim}
    
    Let $\pi$ be a uniformly random permutation of $E_H$. Let $X_{v^*} : v^* \in V^*$ represent the indicator variable of $v^*$ being occupied and $X_O = \sum X_{v^*}$. Then $X \geq n \cdot (1 - \delta)$ w.h.p.
    \end{claim}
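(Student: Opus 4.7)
\medskip

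\noindent\textbf{Proof plan for the claim.} The plan is to reduce the claim to a Chernoff bound on a sum of independent indicator variables. First I would observe that by the construction of $H$, each $v^*_i \in V^*$ has degree exactly $n + s$, where $s = 10n/\delta$. Among these $n+s$ edges incident to $v^*_i$, exactly $s$ of them go to $U_i$, at most $n$ go to $V$, and the remaining ones go to $W_i$. Thus the number of edges from $v^*_i$ to $U_i \cup W_i$ is at least $s$, while the number of edges from $v^*_i$ to $V$ is at most $n$.

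The next step is to compute $\Pr[X_{v^*_i} = 1]$. Since $\pi$ is a uniformly random permutation of $E_H$, the earliest edge incident to $v^*_i$ (according to $\pi$) is uniformly distributed among the $n+s$ edges incident to $v^*_i$. Hence
\[
\Pr[X_{v^*_i} = 0] \;\leq\; \frac{n}{n+s} \;=\; \frac{n}{n + 10n/\delta} \;=\; \frac{\delta}{\delta + 10} \;\leq\; \frac{\delta}{10}.
\]
Consequently, $\E[X_O] \geq n(1 - \delta/10)$.

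The key observation that unlocks concentration is that the indicators $\{X_{v^*_i}\}_{i\in[n]}$ are mutually independent. Indeed, from the construction no two nodes in $V^*$ share a common neighbor used in defining $X_{v^*_i}$: the edges incident to $v^*_i$ are $n+s$ pairwise distinct edges (incident to $V$, $W_i$, $U_i$), and for $i\neq j$ the edge sets incident to $v^*_i$ and $v^*_j$ are disjoint since $V^*$ is an independent set in $H$ and the sets $W_i \cup U_i$ and $W_j \cup U_j$ are disjoint. Therefore $X_{v^*_i}$ is a function of $\pi$ restricted to edges incident to $v^*_i$ only, and these restrictions are mutually independent (equivalently, assign each edge an i.i.d.~uniform priority and note $X_{v^*_i}$ depends only on a disjoint block of priorities).

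Given independence and the mean bound, I would finish with a multiplicative Chernoff bound: letting $Y := n - X_O = \sum_i (1 - X_{v^*_i})$, we have $\E[Y] \leq n\delta/10$, and
\[
\Pr[X_O < n(1-\delta)] \;=\; \Pr[Y > n\delta] \;\leq\; \exp(-\Omega(n\delta)),
\]
which is $1 - n^{-\omega(1)}$ for any constant $\delta > 0$ (and $n$ sufficiently large), yielding the claim. The only subtle point is justifying independence of the indicators, but this follows cleanly from the fact that $V^*$ is an independent set and the ``private'' gadgets $W_i, U_i$ are disjoint across $i$; no further combinatorial work is needed.
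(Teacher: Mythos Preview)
Your proof is correct and follows essentially the same approach as the paper: bound $\Pr[X_{v^*_i}=1]\geq 1-\delta/10$ using that the earliest incident edge under $\pi$ is uniform over the $n+s$ edges of $v^*_i$, argue independence of the indicators, and finish with a Chernoff bound. Your justification of independence (disjoint incident edge sets since $V^*$ is independent and the gadgets $W_i,U_i$ are private) is in fact more explicit than the paper's, and your complementary Chernoff on $Y=n-X_O$ yields a slightly sharper $\exp(-\Omega(n\delta))$ tail versus the paper's $\exp(-\Omega(n\delta^2))$, but this is a cosmetic difference.
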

    
    \begin{proof}
    Note that each $v^*\in V^*$ has at most $n$ edges with vertices of $V'$ and has at least $10n/\delta$ edges with vertices in $U_{v^*}$ and $W_{v^*}$. Therefore, 
    $$\E[X_{v^*}]=\Pr(X_{v^*} = 1) \geq \frac{n \cdot 10/\delta}{n \cdot (10/\delta + 1)} = 1 - \frac{1}{10/\delta + 1} \geq  1 - \delta/10.$$
    On the other hand, the variables $\{X_{v^*} \mid v\in V\}$ are independent binary variables. Therefore, by Chernoff's bound, we have that
    \begin{eqnarray}
    \Pr(X_O \leq n \cdot (1 - \delta)) & \leq & \Pr\left(X_O \leq n \cdot (1 - \delta/10) - \frac{n \cdot \delta}{2}\right)\nonumber \\
    & \leq & \Pr\left(X_O \leq \E[X_O] - \E[X_O]  \cdot \frac{\delta}{2}\right) \label{eq:lm:sublinear:extension:support:2:1}  \\
    & \leq & 2 \cdot \exp \left(- \frac{(\delta/2)^2 \cdot \E[X_O]}{3}\right) \label{eq:lm:sublinear:extension:support:2:2} \\
    & \leq & n^{- \Theta(1)} \label{eq:lm:sublinear:extension:support:2:3}
    \end{eqnarray}

    Inequality~\ref{eq:lm:sublinear:extension:support:2:1} follows from the fact that $n \geq E[X_O] \geq n \cdot (1 - \delta)$. Inequality~\ref{eq:lm:sublinear:extension:support:2:2} is an application of Chernoff's bound. Inequality~\ref{eq:lm:sublinear:extension:support:2:3} follows as long as $n \cdot \delta^2 \in \Omega(\log(n))$.
	\end{proof}
	The above claim completes the proof of the last requirement of Lemma~\ref{lm:sublinear:extension:support:2}.
	\end{proof}

	\subsection{The algorithm}
	We now introduce the algorithm that will build on the previous two lemmas and inform the proof of \Cref{lm:sublinear:extension}, given in \Cref{alg:lm:sublinear:extension}.
	Recall that we wish to estimate the number of edges in some input matching, which here, to avoid confusion, we denote by $M^*$, that are both matched in some maximal matching in $G$.
	
	 Let $H = (V_H, E_H)$ be the supergraph of $G$ defined by \Cref{lm:sublinear:extension:support:2} of $G$ with $\delta = \epsilon^2/8$. For a permutation $\pi$ of $E_H$, define $M'(\pi) := \text{GMM}(H, \pi) \cap (V \times V)$ to be the set of edges in $\text{GMM}(H, \pi)$ both of whose endpoints are in $V$. Let $M(\pi)$ be a maximal matching in $G$ that is obtained by augmenting $M'(\pi)$, i.e., we start with $M = M'(\pi)$, visit the edges $e \in E$ in an arbitrarily fixed order, and obtain the matching $M(\pi)$ by greedily  adding as many edges to $M$ as possible. Note that $M'(\pi) \subseteq M(\pi) \subseteq E'$. Note also that $M(\pi)$ will be the maximal matching that \Cref{lm:sublinear:extension} refers to as $M'$. We now slightly overload our notations and let $k_{M'(\pi)}$ denote the number of edges in $M^*$ both of whose endpoints are matched in $M'(\pi)$.\footnote{Recall that in the statement of~\Cref{lm:sublinear:extension} we defined the notation $k_{M'}$ only if $M'$ is a matching in $G'$, which is not the case with $M'(\pi)$. Nevertheless, for ease of exposition, we use the notation $k_{M'(\pi)}$.} 
	
	\begin{algorithm}[H]
		\caption{Extended Sub-Linear Algorithm}
		\label{alg:lm:sublinear:extension}
		\begin{algorithmic}[1]
		    \If {$|M^*| \leq \epsilon^2 \cdot n$}
		    \State \textbf{Return} $\kappa = 0$
		    \EndIf
		    \State (Implicitly) construct $H=(V_H,E_H)$ as in \Cref{lm:sublinear:extension:support:2} with $\delta = \epsilon^2/8$
		    \State Sample a permutation $\pi$ of $E_H$ uniformly at random
			\State $L \leftarrow \frac{10^5 \cdot \log(n)}{\epsilon^5} $
			\State Sample $L$ edges $e_1,\dots, e_L\in M^*$ uniformly at random with replacement 
			\State Let $X_i$ be one if both endpoints of edge $e_i$ are matched by $GMM(H,\pi)$ and $X = \sum_i X_i$
			\State \textbf{Return} $\kappa = \frac{X  \cdot |M|}{L} - \frac{n \cdot \epsilon^2}{2}$
		\end{algorithmic}
	\end{algorithm}

	\begin{claim}\label{sublinear-expectation}
	
	Algorithm~\ref{alg:lm:sublinear:extension} can be implemented in time $\tilde{O}_\epsilon(n)$ in expectation.
	
	\end{claim}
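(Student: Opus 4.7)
The plan is to bound the work contributed by each line of \Cref{alg:lm:sublinear:extension}. The trivial case $|M^*|\le\epsilon^2 n$ costs $O(1)$ time. In what follows assume $|M^*|>\epsilon^2 n$, so that $|V(M^*)|\ge 2\epsilon^2 n$. We never materialize $H$ or $\pi$: by the last bullet of \Cref{lm:sublinear:extension:support:2}, any adjacency-list query to $E_H$ is simulated by a single adjacency-matrix query to $G$, and the random permutation $\pi$ can be revealed lazily on demand (each time the underlying sublinear-time procedure inspects a new edge), exactly as in \cite{behnezhad2022time}. Sampling $L=\tilde O_\epsilon(1)$ edges from $M^*$ uniformly with replacement costs $\tilde O_\epsilon(1)$ as well.

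The dominant cost is the $2L$ matched-status queries on the endpoints of the sampled edges $e_1,\dots,e_L$. For each such endpoint $v$, we invoke the procedure of \Cref{lm:sublinear:extension:support:1} on $H$ to decide whether $v\in V(GMM(H,\pi))$. By that lemma, for a uniformly random vertex $v'\in V_H$ and a uniformly random $\pi$,
\[
\E_{v',\pi}[\,\text{time}(v')\,] \;=\; \tilde O(|E_H|/|V_H|).
\]
Multiplying through by $|V_H|$ and summing gives $\sum_{v'\in V_H}\E_\pi[\text{time}(v')]=\tilde O(|E_H|)=\tilde O_\epsilon(n^2)$ (using $|E_H|=\Theta(n^2/\delta)$ and $\delta=\epsilon^2/8$).

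Since each sampled $e_i$ is uniform in $M^*$, each of its endpoints is uniform in $V(M^*)\subseteq V_H$. Therefore, restricting the preceding sum to $V(M^*)$ and dividing by $|V(M^*)|\ge 2\epsilon^2 n$ bounds the expected cost of one matched-status query by
\[
\frac{1}{|V(M^*)|}\sum_{v\in V(M^*)} \E_\pi[\text{time}(v)] \;\le\; \frac{\tilde O_\epsilon(n^2)}{2\epsilon^2 n} \;=\; \tilde O_\epsilon(n).
\]
Summing over the $2L=\tilde O_\epsilon(1)$ endpoints and adding the negligible bookkeeping gives total expected time $\tilde O_\epsilon(n)$, as claimed.

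The only genuine subtlety is that \Cref{lm:sublinear:extension:support:1} bounds the expected cost over a \emph{uniformly random} target vertex in $V_H$, whereas we actually query specific vertices (endpoints of sampled $M^*$-edges). This is resolved by the averaging step above: uniformly sampling edges of $M^*$ induces a uniform distribution on $V(M^*)$, and the lower bound $|V(M^*)|\ge 2\epsilon^2 n$ (from the early-return check) is exactly what keeps the per-query expectation at $\tilde O_\epsilon(n)$ rather than $\tilde O_\epsilon(n^2)$.
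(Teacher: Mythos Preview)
Your proof is correct and follows essentially the same approach as the paper: bound the per-vertex expected query time via \Cref{lm:sublinear:extension:support:1}, then use that the queried vertices are drawn from a subset of $V_H$ of size at least $\Omega(\epsilon^2 n)$ to pass from the global average $\tilde O(|E_H|/|V_H|)=\tilde O_\epsilon(1)$ to a per-query bound of $\tilde O_\epsilon(n)$, and finally multiply by the $2L=\tilde O_\epsilon(1)$ queries. Your version is in fact a bit more precise than the paper's, as you correctly identify that the relevant large set is $V(M^*)$ (from which the sampled endpoints are uniformly drawn), rather than the multiset of sampled endpoints itself.
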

	
	\begin{proof}
	The construction of $H$ is implicit, and as such takes no time.
	Let $T_H(v,\pi)$ stand for the time it takes to calculate the matched status of vertex $v \in V_H$ in $GMM(H,\pi)$ using the algorithm of \cite{behnezhad2022new}. By Lemma~\ref{lm:sublinear:extension:support:1} we have that $\underset{v \sim V_H}{\E}[T_H(v,\pi)] = \tilde{O}_\epsilon(|E_H|/|V_H|) = \tilde{O}_\epsilon(1)$. Therefore, since the endpoints of the sampled edges $S=\bigcup_{i=1}^L {e_i} \subseteq V_H$ are a subset of of vertices of cardinality $|S| \geq \epsilon^2 \cdot n$, and since $|V_H|=\Theta_{\eps}(n^2)$ we have the expected time to calculate their matched status (using adjacency matrix queries, using the construction of $H$) is
	\begin{align*}
	\underset{v \sim S}{\E_H}[T_H(v, \pi)] & \leq  \underset{v \sim V_H}{\E}[T_H(v, \pi)] \cdot \frac{|V_H|}{|S|} \leq \tilde{O}_\epsilon(n)\qedhere
	\end{align*}
	\end{proof}
	
	We now argue that \Cref{alg:lm:sublinear:extension} provides a good approximation of the number of nodes in $M^*$ both of whose endpoints are matched by $GMM(H,\pi)$. But first, we recall the basic Chernoff bounds that we will rely on here.
	
	\begin{lem}
	\label{lm:chernoff}
	Chernoff bound: Let $X$ be the sum of independently distributed (or negatively associated) random variables $X_1, \dots , X_m$ with $X_i \in [0,1]$ for each $i \in [m]$. Then for all $\delta \in (0,1)$:
	
	\begin{equation}
	    \Pr(|X - E[X]| \geq \delta \cdot E[X]) \leq 2 \cdot \exp\left(-\frac{\delta^2 \cdot E[X]}{3}\right). \nonumber
	\end{equation}
	
	\end{lem}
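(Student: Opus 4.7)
The plan is to give the standard exponential-Markov (Chernoff) argument, which works verbatim for independent $X_i \in [0,1]$ and extends to negatively associated variables with minimal modification. I would prove the one-sided upper tail first, then derive the lower tail analogously, and combine by a union bound to obtain the two-sided statement with the factor of $2$ in front.

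For the upper tail, set $\mu := \E[X]$. For any $t > 0$, Markov's inequality applied to $e^{tX}$ yields
\[
\Pr\!\left(X \geq (1+\delta)\mu\right) \;\leq\; \frac{\E[e^{tX}]}{e^{t(1+\delta)\mu}}.
\]
To control the moment generating function, I would use that in the independent case $\E[e^{tX}] = \prod_i \E[e^{tX_i}]$, and for each $X_i \in [0,1]$, convexity of $x \mapsto e^{tx}$ on $[0,1]$ gives $e^{tX_i} \leq 1 + (e^t - 1) X_i$, so that
\[
\E[e^{tX_i}] \;\leq\; 1 + (e^t-1)\E[X_i] \;\leq\; \exp\!\bigl((e^t-1)\E[X_i]\bigr),
\]
using $1+y \le e^y$. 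Multiplying over $i$, $\E[e^{tX}] \le \exp((e^t-1)\mu)$. Choosing $t = \ln(1+\delta)$ gives the standard bound $\left(e^\delta/(1+\delta)^{1+\delta}\right)^\mu$, and a routine calculus estimate (e.g., $\ln(1+\delta) \ge \delta - \delta^2/2$ for $\delta \in (0,1)$) simplifies this to $\exp(-\delta^2\mu/3)$. The lower-tail bound $\Pr(X \le (1-\delta)\mu) \le \exp(-\delta^2 \mu/3)$ follows by the symmetric argument applied to $-X$ (equivalently, $t<0$). A union bound then gives the two-sided statement with the factor of $2$.

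For negatively associated variables, the only change is in the MGF step: instead of factoring $\E[e^{tX}]$ via independence, one invokes the standard fact that if $X_1, \ldots, X_m$ are negatively associated and each $f_i$ is a nondecreasing function, then $\E\!\left[\prod_i f_i(X_i)\right] \le \prod_i \E[f_i(X_i)]$ (a consequence of the definition of negative association; see Dubhashi--Ranjan). Applied with $f_i(x) = e^{tx}$ (nondecreasing for $t > 0$), this yields $\E[e^{tX}] \le \prod_i \E[e^{tX_i}]$, after which the rest of the argument is identical. The lower tail for $t < 0$ uses $f_i(x) = e^{tx}$, which is nonincreasing, and the symmetric NA inequality.

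The main ``obstacle'' here is essentially cosmetic: this is a textbook result, and the only nontrivial ingredient beyond Markov's inequality is the MGF comparison under negative association, which we simply cite. The proof contains no new ideas and exists in the appendix only for notational completeness, to make explicit the exact constants used elsewhere in the paper (in particular, the $1/3$ in the exponent invoked at equations~(\ref{eq:lm:sublinear:extension:support:2:2})--(\ref{eq:lm:sublinear:extension:support:2:3})).
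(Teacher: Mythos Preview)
Your proof is correct and is the standard exponential-moment argument for Chernoff bounds. However, the paper does not actually prove this lemma: it is stated as a known result (a textbook Chernoff bound) and used without proof in the subsequent estimates, so there is no ``paper's own proof'' to compare against. Your write-up would serve as a fine self-contained justification, but it goes beyond what the paper itself supplies.
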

	
	\begin{lem}\label{correct-whp}
		W.h.p., The output $\kappa$ of \Cref{alg:lm:sublinear:extension} satisfies $$\kappa_{M(\pi)} \geq \kappa \geq \kappa_{M(\pi)} - n \cdot \epsilon^2.$$
	\end{lem}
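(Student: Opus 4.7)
}

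The plan is to decompose the error of $\kappa$ into two sources and bound each separately: the \emph{sampling} error of the Bernoulli estimator, and the \emph{structural} gap between the quantity that the sampling estimates and the quantity $k_{M(\pi)}$ that the lemma wants to estimate. First, if $|M^*|\leq\epsilon^2 n$, the algorithm returns $\kappa=0$, and the bounds $0\leq k_{M(\pi)}\leq |M^*|\leq\epsilon^2 n$ together with $n\epsilon^2$ slack make the conclusion trivial. So I may assume $|M^*|>\epsilon^2 n$.

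Define $Y:=|\{e\in M^*:\text{both endpoints of }e\text{ are matched by }GMM(H,\pi)\}|$ and let $\hat Y := X|M^*|/L$, so that $\kappa=\hat Y - n\epsilon^2/2$. The $X_i$ are i.i.d.\ Bernoulli with mean $Y/|M^*|$, so $\E[\hat Y]=Y$. By Hoeffding's inequality applied to $X/L$ with deviation $t=\epsilon^2 n/(8|M^*|)\geq \epsilon^2/8$, and using $L=10^5\log(n)/\epsilon^5$, one gets
\[
\Pr\!\left[\,|\hat Y-Y|>\epsilon^2 n/8\,\right]\;\leq\;2\exp\!\bigl(-2Lt^2\bigr)\;\leq\;n^{-\Omega(1)}.
\]

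Next I would bound $|Y-k_{M(\pi)}|$ using the structural properties of $H$ from \Cref{lm:sublinear:extension:support:2} with $\delta=\epsilon^2/8$. Let $V_{\mathrm{slack}}$ be the set of $V$-vertices matched in $GMM(H,\pi)$ to a vertex outside $V$; by that lemma, $|V_{\mathrm{slack}}|\leq\delta n=\epsilon^2 n/8$ w.h.p., and $M'(\pi):=GMM(H,\pi)\cap (V\times V)$ is a maximal matching in $G[V\setminus V_{\mathrm{slack}}]$. Since $M(\pi)$ is obtained by greedily augmenting $M'(\pi)$ inside $G$, any edge added during this extension must touch $V_{\mathrm{slack}}$ (otherwise it would contradict maximality in $G[V\setminus V_{\mathrm{slack}}]$). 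Consequently $|V(M(\pi))\setminus V(M'(\pi))|\leq 2|V_{\mathrm{slack}}|$. Combining with $V(GMM(H,\pi))\cap V = V(M'(\pi))\cup V_{\mathrm{slack}}$ yields
\[
\bigl|\,V(M(\pi))\;\triangle\;\bigl(V(GMM(H,\pi))\cap V\bigr)\,\bigr|\;\leq\;3|V_{\mathrm{slack}}|\;\leq\;3\epsilon^2 n/8.
\]
Every edge of $M^*$ counted by exactly one of $Y$ or $k_{M(\pi)}$ must have an endpoint in this symmetric difference, so $|Y-k_{M(\pi)}|\leq 3\epsilon^2 n/8$.

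Finally, combining the two bounds,
\[
\kappa \;=\; \hat Y - n\epsilon^2/2 \;\leq\; Y + \epsilon^2 n/8 - n\epsilon^2/2 \;\leq\; k_{M(\pi)} + 3\epsilon^2 n/8 + \epsilon^2 n/8 - \epsilon^2 n/2 \;=\; k_{M(\pi)},
\]
and symmetrically $\kappa\geq k_{M(\pi)} - n\epsilon^2$, which is the desired two-sided bound. A union bound over the (polynomially many) high-probability events from \Cref{lm:sublinear:extension:support:2} and from Hoeffding completes the argument.

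I expect the main obstacle to be the structural step: carefully leveraging the maximality guarantee of \Cref{lm:sublinear:extension:support:2}---namely that $GMM(H,\pi)\cap E$ is maximal in $G[V\setminus V_{\mathrm{slack}}]$---to argue that $M(\pi)$ and $GMM(H,\pi)$ match essentially the same $V$-vertices up to the small slack $V_{\mathrm{slack}}$. The sampling step is routine once the Bernoulli estimator is set up with the right scale of additive (not multiplicative) error, tailored to the assumption $|M^*|\geq\epsilon^2 n$.
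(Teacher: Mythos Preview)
Your proposal is correct and follows the same two-part decomposition as the paper: a structural bound relating $Y$ (the paper's $|M^*_H|=\kappa_{M^*_H}$) to $k_{M(\pi)}$ via the $V_{\mathrm{slack}}$ guarantee of \Cref{lm:sublinear:extension:support:2}, followed by a concentration bound for the Bernoulli estimator $\hat Y=X|M^*|/L$, then combining with the $-n\epsilon^2/2$ shift. The structural step is argued the same way in both (comparing the matched-vertex sets of $GMM(H,\pi)$, $M'(\pi)$, and $M(\pi)$), just packaged by you as a symmetric-difference bound.

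The one genuine difference is in the concentration step. The paper uses a multiplicative Chernoff bound and therefore splits into two cases, $\kappa_{M^*_H}\le n\epsilon^3/8$ versus $\kappa_{M^*_H}\ge n\epsilon^3/8$, handling the small-mean case separately. You instead apply Hoeffding's inequality directly with additive deviation $t=\epsilon^2 n/(8|M^*|)\ge \epsilon^2/8$, which with $L=10^5\log n/\epsilon^5$ already gives $2\exp(-2Lt^2)=n^{-\Omega(1/\epsilon)}$ without any case split. This is a clean simplification: Hoeffding is the natural tool here because the target error is additive in $n$, not multiplicative in the mean, and it sidesteps the nuisance of a possibly tiny $\E[X]$. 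The paper's route works too but is more roundabout for this purpose.
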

	
	\begin{proof}
	First, by \Cref{lm:sublinear:extension:support:2}, we have that w.h.p., the set of nodes $V_{slack} \subseteq V$ that are matched to nodes in $V_H \setminus V$ have cardinality at most $|V_{slack}|\leq n\epsilon^2/8$.
	Moreover, $GMM(H, \pi) \cap E$ is a maximal matching in $G[V \setminus V_{slack}]$.
	
    Observe that whenever $\kappa = 0$ is returned by the algorithm due to $|M^*|$ being small, the algorithm returns a trivially correct solution. As $M'(\pi)$ is an $\epsilon^2/8$-AMM, we  conclude that:
    
    \begin{equation}
    \label{eq:1:lm:sublinear:extension}
    |M'(\pi)| \leq |M(\pi)| \leq |M'(\pi)| + \frac{n \cdot \epsilon^2}{8}.
    \end{equation}
    
    Define $M^*_H$ to be the set of edges of $M^*$ such that both of their endpoints are matched by $GMM(H, \pi)$. By the guarantees of the construction of $H$ we know that there can be at most $n \cdot \epsilon^2/8$ vertices of $V$ matched by an edge not in $M'(\pi)$. Therefore, 
    
    \begin{equation}
    \label{eq:2:lm:sublinear:extension}
    |M^*_H| \geq \kappa_{M'(\pi)} \geq |M^*_H| - \frac{n \cdot \epsilon^2}{8}.
    \end{equation}
    
    Note that using the Algorithm~\ref{alg:lm:sublinear:extension} is sampling from edges of $M^*$ and determining if they are in $M^*_H$ (hence approximating $\kappa_{M^*_H}$). Specifically, by inequalities (\ref{eq:1:lm:sublinear:extension}) and (\ref{eq:2:lm:sublinear:extension}), we get the following.
    \begin{eqnarray}
    \kappa_{M^*_H} & \in \left[\kappa_{M'(\pi)} \pm  \frac{n \cdot \epsilon^2}{8} \right] \subseteq  \left[\kappa_{M(\pi)} \pm  \frac{n \cdot \epsilon^2}{8} \pm |M(\pi)| - |M'(\pi)| \right] \subseteq  \left[\kappa_{M(\pi)} \pm \frac{n \cdot \epsilon^2}{4} \right]. \nonumber
    \end{eqnarray}

    \medskip
    We will argue that with high probability $\frac{X \cdot |M|}{L} \in \left[\kappa_{M^*_H} \pm \frac{n \cdot \epsilon^2}{8} \right]$, dependent on the randomization of $M^*_L$. Observe that $X_i$ are independently distributed random variables taking values in $[0,1]$ and $X$ is a binomial variable with parameters $k, |\kappa_{M^*_H}|/|M|$. We will consider two cases:
    
    \medskip
    \noindent 
    {\bf Case (A):} $\kappa_{M^*_H} \leq \frac{n \cdot \epsilon^3}{8}$. In this case, we derive that
    \begin{eqnarray}
    \Pr\left(\frac{X \cdot |M|}{L} \notin  \left[\kappa_{M^*_H} \pm \frac{n \cdot \epsilon^2}{8} \right]\right) & = & \Pr\left(\frac{X \cdot |M|}{L} \geq \kappa_{M^*_H} + \frac{n \cdot \epsilon^2}{8}\right) \nonumber \\
    & \leq & \Pr\left(\frac{X \cdot |M|}{L} \geq \frac{n \cdot \epsilon^2}{8}\right) \nonumber \\
    & = & \Pr\left(B(L,\kappa_{M^*_H}/|M|)\geq \frac{n \cdot \epsilon^2}{8}\right) \label{eq:3:lm:sublinear:extension}\\
    & \leq & \Pr\left(B(L,\epsilon) \geq \frac{n \cdot \epsilon^2}{8}\right)  \label{eq:4:lm:sublinear:extension}\\
    & \leq & \Pr\left(B(L,\epsilon) \geq 2 \cdot \E[B(L,\epsilon)]\right) \nonumber  \\
    & \leq & 2 \cdot \exp \left(-\frac{\E[B(L,\epsilon)]}{3}\right) \label{eq:5:lm:sublinear:extension} \\
    & \leq & n^{-\Theta(1)}. \nonumber
    \end{eqnarray}
    
    In the above derivation,~(\ref{eq:3:lm:sublinear:extension}) holds as $\kappa_{M^*_H}/|M| \leq \epsilon$ (as otherwise $\kappa = 0$ would have been returned by the algorithm), and~(\ref{eq:4:lm:sublinear:extension}) is true assuming $n \cdot \epsilon^2/8 \geq 2 \cdot L \cdot \epsilon = \tilde{O}(1)$. Finally,~(\ref{eq:5:lm:sublinear:extension}) follows from Chernoff bound (Lemma~\ref{lm:chernoff}) on a binomial random variable. 
    
    \medskip 
    \noindent
    {\bf Case (B):} $\kappa_{M^*_H} \geq n \cdot \frac{\epsilon^3}{8}$. In this case, we derive that
	
	\begin{eqnarray}
	\Pr\left(\frac{X \cdot |M|}{L} \notin  \left[\kappa_{M^*_H} \pm \frac{n \cdot \epsilon^2}{8} \right]\right) & = & \Pr\left(|X - E[X]|\geq \frac{n \cdot \epsilon^2}{8} \cdot \frac{L}{|M|}\right) \nonumber \\
	& = & \Pr\left(|X -  \E[X]| \geq \E[X] \cdot \frac{n \cdot \epsilon^2}{8 \cdot \kappa_{M^*_H}}\right) \nonumber \\
	& \leq & \Pr\left(|X -  \E[X]| \geq \E[X] \cdot \frac{n \cdot \epsilon^2}{8 \cdot n}\right) \nonumber \\
    & \leq & 2 \cdot \exp\left(-\frac{\left( \epsilon^2/8\right)^2 \cdot \E[X]}{3}\right)	\label{eq:6:lm:sublinear:extension} \\
    & = & \exp \left(- \frac{L \cdot \kappa_{M^*_H} \cdot \epsilon^4}{|M| \cdot 194}\right) \label{eq:7:lm:sublinear:extension} \\
    & \leq & n^{-\Theta(1)}. \nonumber
	\end{eqnarray}
	In the above derivation,~(\ref{eq:6:lm:sublinear:extension}) follows from Chernoff bound (Lemma~\ref{lm:chernoff}), and~(\ref{eq:7:lm:sublinear:extension}) holds due to on our assumptions on $|M|$ and $\kappa_{M^*_H}$. Therefore, with high probability $X \cdot |M|/k \in [\kappa_{M^*_H} \pm \epsilon^2 \cdot n/8] \in [\kappa_{M(\pi)} \pm \epsilon^2 \cdot n \cdot (1/8 + 1/4)]$ (recall that $\kappa_{M(\pi)} = \kappa_{M}$). This implies that $\kappa \leq \kappa_{M} + \epsilon^2 \cdot n \cdot (1/8 + 1/4 - 1/2) \leq  \kappa_{M} $ and $ \kappa  \geq  \kappa_{M}  - \epsilon^2 \cdot n \cdot (1/8 + 1/4 - 1/2) \geq  \kappa_{M}  - \epsilon^2 \cdot n$. \end{proof}

	Having concluded that \Cref{alg:lm:sublinear:extension} can be implement in low expected time, and is correct w.h.p., we are now ready to prove \Cref{lm:sublinear:extension}, restate below for ease of reference. (Note that here $G=(V,E)$ are renamed $G'=(V',E')$, and $n'$ and $k_{M'}$ correspond respectively to $n$ and $\kappa_{M(\pi)}$, whereas $M^*$ in \Cref{alg:lm:sublinear:extension} is renamed $M$.)
	
	\lmsublinearextension*
	
	\begin{proof}
	    By \Cref{sublinear-expectation}, \Cref{alg:lm:sublinear:extension} runs in expected $\tilde{O}_\epsilon(n)$ time and returns a correct solution with high probability. To improve its running time guarantee to a high probability bound we only need to incur a blowup of $O(\log(n))$ in running time: run the algorithm $O(\log(n))$ time in parallel and output the solution given by the first terminating copy. One of these algorithms will terminate within at most twice the expected time, by Markov's inequality, and so w.h.p., one of these completes after $\tilde{O}_{\eps}(n)$ time. Finally, by union bound and \Cref{correct-whp}, all of the $\log n$ algorithms' output satisfies the desired bounds with probability $1 - 1/poly(n)$, and so w.h.p., we obtain a solution satisfying the desired bounds after $\tilde{O}_{\eps}(n)$ time. 
    \end{proof}

	\section{Omitted Proofs from \Cref{sec:algo-general-streaming}}\label{appendix:general}
	
	Here we prove the tighter bound on the number of $V(M_1)$-disjoint $3$-augmenting paths in the subgraph $M_1\cup M_2$ as output by \Cref{alg:general-streaming}, restated below.
	\augtighter*
	\begin{proof}
		Fix a maximum set of disjoint length-three augmenting paths w.r.t.~$M_1$ in $G$, denoted by $\calP^*$. 
		By \Cref{poor-approx-maximal-matching=many-aug-paths}, we have $|\calP^*|\geq \left(\frac{1}{2}-3c-\frac{7\eps}{2}\right)\cdot \mu(G)$. 
		Next, let $S\subseteq \calP^*$ be the paths $u'-u-v-v'$ that ``survive'' the bipartition, in the sense that $(u,u'),(v,v')\in E_2$.
		By construction, each path in $\calP^*$ survives with probability exactly $\frac{1}{4}$. Therefore, $\E[|S|]\geq \frac{1}{4}\cdot \left(\frac{1}{2}-3c-\frac{7\eps}{2}\right)\cdot \mu(G)$. 
		Let $D:=\calP^*\setminus S$ be the set of paths that did not survive this bipartition.
		
		For $i \in \{0, 1, 2\}$, let $S_i\subseteq S$ and $D_i\subseteq D$ be the sets of paths $u'-u-v-v'$ in $S$ and $D$ (respectively) with $i$ of their $V(M_1)$ nodes $u$ and $v$ matched in $M_2$.
		Now, by our bipartition, if $u'-u-v-v'\in S_2\cup D_2$, i.e., if $u$ and $v$ are both matched in $M_2$, then they are matched to distinct nodes. 
		Therefore, $G[M_1\cup M_2]$ contains a set of augmenting paths $\calP$ w.r.t.~$M_1$ that are disjoint in their $V(M_1)$ nodes, of cardinality $|\calP|=|S_2|+|D_2|$. We now turn to lower bounding $|S_2| + |D_2|$.
		
		To bound $|S_2|+|D_2|$, we will double count the edges of $M_2$, once from their $V(M_1)$ endpoints, and once from their $\overline{V(M_1)}$ endpoints.
		First, by definition, since each edge in $M_2$ has exactly one endpoint in $V(M_1)$ and each node in $V(M_1)$ is matched at most once in $M_2$, we have that $|M_2|= 2|S_2|+|S_1|+2|D_2|+|D_1|\leq |S_2|+|D_2| + |M_1|,$ where the inequality follows from $|M_1| \geq \sum_{i=0}^2 (|S_i|+|D_i|)$, by definition.
		On the other hand, for each of the $|S|-|S_2|=|S_0|+|S_1|$ survived paths $u'-u-v-v'\in S_0\cup S_1$ that does not have both its internal nodes matched in $M_2$, we have by maximality of $M_2$ that $u'$ and/or $v'$ must contribute $b$ distinct edges to $M_2$. Therefore, $|M_2|\geq b\cdot (|S_0|+|S_1|)$. Combining the above, we obtain 
		\begin{align*}
		    b\cdot (|S|-|S_2|) \leq |M_2| \leq |S_2|+|D_2| + |M_1|,
		\end{align*}
		which after rearranging, yields
		\begin{align*}
		    b\cdot |S| - |M_1| \leq (b+1)\cdot |S_2|+|D_2|\leq (b+1)\cdot (|S_2|+|D_2|).
		\end{align*}
		Simplifying and combining with the lower bound on $\E[|S|]$, we obtain the claimed bound, as follows.
		\begin{align*}
		    \E[|\calP|]=\E[|S_2|+|D_2|] & \geq \frac{b}{b+1}\cdot \left(\E[|S|] - \frac{1}{b}\cdot |M_1|\right) \\
		    & \geq \frac{b}{b+1}\cdot \left(\frac{1}{4}\cdot \left(\frac{1}{2}-3c-\frac{7\eps}{2}\right) - \frac{1}{b}\cdot \left(\frac{1}{2}+c\right)\right)\cdot \mu(G). \\
		    & = \frac{b}{b+1}\cdot \left(\frac{1}{4}\cdot \left(\frac{1}{2}-3c\right) - \frac{1}{b}\cdot \left(\frac{1}{2}+c\right) -\frac{7\eps}{8} \right)\cdot \mu(G).
		    \qedhere 
		\end{align*}
    \end{proof}

\section{Omitted Proofs of \Cref{sec:adaptive}}\label{appendix:adaptive}

	We stress that the following is essentially implied by the work of \cite{wajc2020rounding}, from which we now repeat significant amount of text essentially verbatim.
	The only difference here will be our final proof of \Cref{periodic-kernel}, allowing us to efficiently periodically compute an $\eps$-AMM, and the use of this lemma in the subsequent section.
	Readers familiar with \cite{wajc2020rounding} are encourage to read ahead to that lemma.

	\paragraph{Overview.} Briefly, \cite{wajc2020rounding} identified an edge-coloring-based approach to compute, based on the efficient maintenance of edge colorings
	and a particular fractional matching of 
	\cite{bhattacharya2018deterministic},
	a kernel. (See \Cref{alg:sparsify}.)
	We start by recalling the type of fractional matching needed here, due to  \cite{arar2018dynamic}.
	
	\begin{Def}\label{def:approx-max}
	For $c\geq 1$ and $d\geq 1$, a fractional matching $\vec{x}$ is \emph{$(c,d)$-approximately-maximal $(c,d)$-AMfM} if  every edge $e\in E$ either has fractional value $x_e > 1/d$ or it has one endpoint $v$ with $\sum_{e\ni v} x_e \geq 1/c$ with all edges $e'$ incident on this $v$ having value $x_{e'} \leq  1/d$.
    \end{Def}
	
	%The resemblance between the definition of the above fractional matchings and that of kernels was first identified by \cite{arar2018dynamic}, who showed how to obtain the latter from the former against an oblivious adversary.
	As proven in \cite[Appendix A]{arar2018dynamic}, the dynamic fractional matching of \cite{bhattacharya2017fully} is precisely such an approximately-maximal matching.
	
	\begin{restatable}{lem}{fractwopolylog}\label{lem:two-polylog}
	    For all $\epsilon\leq \frac{1}{2}$, there is a deterministic dynamic $(1+2\epsilon, \max\{54\log n/\epsilon^3,(3/\epsilon)^{21}\})$-AMfM algorithm with  $t_u=O(\log^3n/\epsilon^7)$ worst-case update time, changing at most $O(\log n/\epsilon^2)$ edges' fractions per update in the worst case.
    \end{restatable}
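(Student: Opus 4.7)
The plan is to appeal to the dynamic fractional matching algorithm of \cite{bhattacharya2017fully}, whose underlying hierarchical structure precisely yields the AMfM property claimed. This connection was essentially observed in \cite[Appendix A]{arar2018dynamic} for the relevant parameter regime, and my aim is to verify that it delivers all three quantitative bounds (AMfM parameters, update time, and edges changed per update) stated in the lemma.

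At a high level, the algorithm of \cite{bhattacharya2017fully} assigns each vertex $v$ an integer \emph{level} $\ell_v\in\{0,1,\ldots,L\}$ with $L=O(\log n/\epsilon)$, and defines the fractional value of each edge $e=(u,v)$ as a decreasing function of $\max(\ell_u,\ell_v)$---roughly $x_e\approx(1+\epsilon)^{-\max(\ell_u,\ell_v)}$, with a thresholding convention at the extreme levels. Two invariants are enforced across updates by appropriate level adjustments: (I1) every vertex $v$ with $\ell_v>0$ has fractional degree at least $1/(1+\epsilon)$, and (I2) every vertex $v$ has fractional degree at most $1$. My first step is to verify that (I1) and (I2) imply the AMfM property with the claimed parameters. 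Indeed, for any edge $e=(u,v)$ with $x_e\leq 1/d$, the choice $d=\max\{54\log n/\epsilon^3,(3/\epsilon)^{21}\}$ is precisely large enough to force $\max(\ell_u,\ell_v)\geq 1$. Letting $w\in\{u,v\}$ be the endpoint of larger level, invariant (I1) immediately gives fractional degree at $w$ at least $1/(1+\epsilon)\geq 1/(1+2\epsilon)=1/c$. Moreover, for every edge $e'=(w,z)$ incident on $w$, we have $\max(\ell_w,\ell_z)\geq \ell_w$, so $x_{e'}\leq x_e\leq 1/d$, which is exactly the condition of \Cref{def:approx-max}.

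For the update time and the number of fractional edge-value changes per update, my plan is to quote the worst-case (after deamortization) bounds proved in \cite{bhattacharya2017fully}: each edge update triggers $O(\log n/\epsilon^2)$ level changes across all vertices, each of which is processed using standard balanced BSTs over neighbors at each level in $O(\log^2 n/\epsilon^5)$ time. Multiplying yields the claimed $O(\log^3 n/\epsilon^7)$ worst-case update time, while charging each fractional change to a level change gives the $O(\log n/\epsilon^2)$ bound on fractional edge-value updates.

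The main obstacle I anticipate is bookkeeping rather than ideas: the $(3/\epsilon)^{21}$ factor in $d$ arises from the polynomial slacks needed both to accommodate the top-level thresholding and to make the deamortized worst-case analysis of \cite{bhattacharya2017fully} go through, while the $54\log n/\epsilon^3$ factor arises from the other regime where $L$ itself governs the smallest attainable value of $x_e$. Tracking these constants precisely through the original argument is tedious but routine. The conceptual core---that the two level-based invariants above imply the AMfM property---is clean and essentially immediate from the definitions.
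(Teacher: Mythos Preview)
Your proposal is correct and aligned with the paper's treatment: the paper does not prove this lemma at all but simply cites it as established in \cite[Appendix~A]{arar2018dynamic} for the fractional matching of \cite{bhattacharya2017fully}. Your sketch of the level-based invariants and how they yield the AMfM property is precisely the content of that cited argument, so you are expanding what the paper leaves as a black-box citation.
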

	
	Now, we turn to the sparsification procedure of \cite{wajc2020rounding}, given in \Cref{alg:sparsify}.
	Briefly, this algorithm decomposes the graph into a logarithmic number of subgraphs, based on grouped $x$-values, edge colors these subgraphs using at most $\gamma=2$ times their maximum degree, and then outputs the union of these subgraphs.

	\begin{algorithm}[h] 
	\caption{Edge-Color and Sparsify \cite{wajc2020rounding}}
	\label{alg:sparsify}
	\begin{algorithmic}[1]
		\smallskip
		\For{$i\in \{1,2,\dots, \lceil 2\log_{1+\eps} (n/\epsilon)\rceil\}$}
		\State let $E_i \triangleq \{e \mid x_e\in ((1+\eps)^{-i},(1+\eps)^{-i+1}]\}$.
		\State compute a $2 \lceil (1+\epsilon)^i\rceil$-edge-coloring $\chi_i$ of $G_i\triangleq G[E_i]$. \Comment{Note: $\Delta(G_i) < (1+\epsilon)^i$}
		\State Let $S_i$ be a sample of $\min\{2 \lceil d(1+\epsilon)\rceil ,2 \lceil(1+\epsilon)^i\rceil\}$ colors without replacement in $\chi_i$. \label{line:sample-colors} 
		\EndFor
		\State \textbf{Return} $K\triangleq (V,\bigcup_i \bigcup_{M \in S_i} M)$.
	\end{algorithmic}
    \end{algorithm}

	The following lemma of \cite{wajc2020rounding} allows us to compute kernels from AMfMs using \Cref{alg:sparsify}. 
	\begin{restatable}{lem}{integralsparsifierwhp}\label{kernel-whp}
	Let $c\geq 1$, $\epsilon>0$ and $d\geq \frac{9c(1+\epsilon)^2\cdot \log n}{\epsilon^2}$.
	If $\vec{x}$ is a $(c,d)$-AMfM, then the subgraph $K$ output by \Cref{alg:sparsify} when run on $\vec{x}$ with $\epsilon$ and $d$ is a $(c(1+O(\epsilon),d(1+O(\epsilon),0)$-kernel, w.h.p.
\end{restatable}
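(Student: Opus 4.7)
I would verify the two defining properties of a kernel: a bounded maximum degree, and an approximate-maximality condition on edges missing from $K$. Both parts hinge on the same concentration argument applied to a sum of independent (across groups $i$) hypergeometric random variables, whose independence across $i$ comes from the fact that the color samples $S_i$ in \Cref{alg:sparsify} are drawn independently in each group.

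\emph{Step 1 (maximum degree).} Fix a vertex $v$ and a group $i$, and let $d_i(v)$ denote the degree of $v$ in $G_i$. The contribution of group $i$ to $\deg_K(v)$ is the number $Y_i(v)$ of colors of $\chi_i$ sampled into $S_i$ that are incident to $v$; since any matching covers $v$ at most once, $Y_i(v)$ is hypergeometric with mean $d_i(v)\cdot |S_i|/(2\lceil(1+\eps)^i\rceil)\le d_i(v)\cdot \min\{1,\,d(1+\eps)/(1+\eps)^i\}$. Since $\vec{x}$ is a fractional matching and group $i$ consists of edges of value in $((1+\eps)^{-i},(1+\eps)^{-i+1}]$, we have $\sum_i d_i(v)/(1+\eps)^i\le (1+\eps)\sum_{e\ni v}x_e\le 1+\eps$. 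Summing yields $\E[\deg_K(v)]\le d(1+\eps)^2$. Applying a Chernoff bound for sums of independent hypergeometrics, with the hypothesized $d\ge 9c(1+\eps)^2\log n/\eps^2$, and taking a union bound over $v\in V$, gives $\deg_K(v)\le d(1+O(\eps))$ for all $v$ simultaneously w.h.p.

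\emph{Step 2 (approximate maximality).} Any edge $e$ with $x_e>1/d$ lies in a group $E_i$ with $(1+\eps)^i\le d(1+\eps)$, and for such $i$ the sampling step of line~\ref{line:sample-colors} includes all $2\lceil(1+\eps)^i\rceil$ colors, so $e\in K$. It suffices to handle edges $e\in E\setminus E_K$ with $x_e\le 1/d$. By the $(c,d)$-AMfM property, such an $e$ has an endpoint $v$ with $\sum_{e'\ni v}x_{e'}\ge 1/c$ and all incident edges of value $\le 1/d$; the latter places every edge at $v$ in a group with $(1+\eps)^i\ge d$, which is exactly the range in which the sparsifier actively samples at rate $d(1+\eps)/(1+\eps)^i$. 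Bounding the group values from above we get $\sum_i d_i(v)/(1+\eps)^i\ge (1/(1+\eps))\sum_{e'\ni v}x_{e'}\ge 1/(c(1+\eps))$, hence $\E[\deg_K(v)]\ge d(1+\eps)\cdot 1/(c(1+\eps))=d/c$. A multiplicative Chernoff bound of the same flavor as above, combined with a union bound, then gives $\deg_K(v)\ge (d/c)(1-O(\eps))$ for every such saturated $v$ w.h.p., which, after reparameterizing the kernel parameters, is exactly the guarantee required by the definition.

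\emph{Main obstacle.} The delicate point is coordinating the approximation slacks so that a clean $(c(1+O(\eps)),d(1+O(\eps)))$-kernel guarantee drops out: bucketing loses a $(1+\eps)$ factor per group, sampling loses another, and we need the expected degree at a saturated vertex to be large enough for a multiplicative Chernoff bound to deliver $(1-O(\eps))$-concentration while also giving per-vertex failure probability $n^{-\omega(1)}$. The hypothesis $d\ge 9c(1+\eps)^2\log n/\eps^2$ is tuned precisely so that $\E[\deg_K(v)]\ge d/c=\Omega(\log n/\eps^2)$ at saturated vertices, which is the threshold at which Chernoff combined with a union bound over all $n$ vertices simultaneously controls both the max-degree upper bound in Step~1 and the saturation lower bound in Step~2.
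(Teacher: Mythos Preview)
The paper does not prove this lemma; it is imported verbatim from \cite{wajc2020rounding}, as stated explicitly at the start of \Cref{appendix:adaptive}. Your proposal is correct and is exactly the argument of that reference: bound $\deg_K(v)$ above and (at saturated vertices) below by analyzing the sum over levels $i$ of independent hypergeometric contributions, use the fractional-matching constraint (resp.\ the $1/c$ lower bound from the AMfM definition) to control the expectation, and finish with Chernoff for hypergeometrics plus a union bound, with the hypothesis $d\ge 9c(1+\eps)^2\log n/\eps^2$ ensuring $\E[\deg_K(v)]\ge d/c=\Omega(\eps^{-2}\log n)$ so the multiplicative tail gives $n^{-\omega(1)}$ failure.
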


    We are now ready to prove our (periodic) algorithmic kernel and AMM algorithm's guarantees, restated below for ease of reference.
	
    \periodickernel*
	\begin{proof}
	    We maintain the dynamic $(1+2\eps,\tilde{O}_{\eps}(1))$-AMfM of \Cref{lem:two-polylog}, using $\tilde{O}_{\eps}(1)$ deterministic w.c.~update time and number of changes to edges per update.
	    In addition, we maintain the subgraphs $G_i$ in \Cref{alg:sparsify}. In each such subgraph we maintain $2\lceil(1+\eps)^i\rceil$-color edge colorings in each $G_i$ in $O(\log n)$ deterministic w.c.~time per change to $\vec{x}$, using the logarithmic-time $(2\Delta-1)$-edge coloring algorithm of \cite{bhattacharya2020deterministic}.
	    This concludes the description of the updates, which by the above take deterministic w.c.~update time $t_u=\tilde{O}_{\eps}(1)$.
	    
	    Next, to compute a kernel, we run the sampling step of \Cref{alg:sparsify}. As this is bottlenecked by the time to write down the $O(\log^2 n)$ colors (matchings), each of size no greater than $\mu(G)$ (by definition), this query takes deterministic $\tilde{O}(\mu(G))$. 
	    Finally, this output graph $K$ is an $(O(\eps),d(1+O(\eps))$-kernel w.h.p., by \Cref{kernel-whp}.
	    Finally, to output an $\eps$-AMM, we appeal to the static algorithm \Cref{static-AMM}, which runs in deterministic time $\tilde{O}_\eps(d\cdot \mu(G))=\tilde{O}(\mu(G))$ and outputs an $\eps$-AMM, provided $K$ is a kernel, i.e., it also succeeds w.h.p.
	\end{proof}
\bibliographystyle{alpha}
\bibliography{abb,ultimate}
	
\end{document}